\tikzstyle{red}=[fill={rgb,255: red,255; green,0; blue,4}, draw=black, shape=circle]
\tikzstyle{blue edge}=[-, fill=none, draw={rgb,255: red,111; green,118; blue,255}]
\definecolor{TSUYUKUSA}{RGB}{46, 169, 255}
\definecolor{KURENAI}{RGB}{203, 27, 69}
  \DeclareFontShape{T1}{ptm}{m}{scit}{<->ssub*ptm/m/sc}{}%
\def\dist{\textnormal{dist}}
\def\tilde{\widetilde}
\newtheorem{theorem}{Theorem}
\newtheorem{lemma}[theorem]{Lemma}
\newtheorem{proposition}[theorem]{Proposition}
\newtheorem{corollary}[theorem]{Corollary}
\newtheorem{definition}[theorem]{Definition}
\newtheorem{fact}[theorem]{Fact}
\def\Sgm{\Sigma}
\newcommand{\vect}[1]{\text{vec}\p{#1}}
\newcommand{\dotnorm}[1]{\norm{#1}_{\diamond,1\rarr 2}}
\newcommand{\otnorm}[1]{\norm{#1}_{1\rarr 2}}
\def\Pz{\Psi_{0}}
\def\ip{\lra}
\def\wPhi{\ensuremath{\widehat{\Phi}}}
\def\wPsi{\ensuremath{\widehat{\Psi}}}
\def\inf{\mathrm{Inf}}
\def\Zfn{\ensuremath{\ZZ_4^n}}
\def\D{\mathrm{D}}
\def\0{\mathbf{0}}
\DeclareMathOperator*{\argmax}{arg\,max}
\DeclareMathOperator*{\argmin}{arg\,min}
\title{On Testing and Learning Quantum Junta Channels}
\author{
    Zongbo Bao\thanks{State Key Laboratory for Novel Software Technology, Nanjing University. Email: baozb0407@gmail.com.}
    \and Penghui Yao\thanks{State Key Laboratory for Novel Software Technology, Nanjing University. Email: phyao1985@gmail.com.}~\thanks{Hefei National Laboratory, Hefei 230088, China.}
}
\date{\today}
\begin{document}

\maketitle

\begin{abstract}
We consider the problems of testing and learning quantum  $k$-junta channels, 
    which are $n$-qubit to $n$-qubit quantum channels acting non-trivially on 
    at most $k$ out of $n$ qubits and leaving the rest of qubits unchanged. 
We show the following.
\begin{enumerate}
    \item An $O(k)$-query algorithm to distinguish whether the given channel is $k$-junta channel or is \emph{far} from any $k$-junta channels, and a lower bound $\Omega(\sqrt{k})$ on the number of queries;
    \item An $\tilde{O}\p{4^k}$-query algorithm to learn a $k$-junta channel, and a lower bound $\Omega\p{4^k/k}$ on the number of queries.
\end{enumerate}
This gives the first junta channel testing and learning results, 
    and partially answers an open problem raised by~\cite{CNY22_juntatesting}.
In order to settle these problems, we develop a Fourier analysis framework over 
    the space of superoperators and prove several fundamental properties, 
    which extends the Fourier analysis over the space of operators introduced 
    in~\cite{MO10_QBF}.

Besides, we introduce \algname{Influence-Sample} to replace \algname{Fourier-Sample}
    proposed in \cite{AS07_qtesting}. 
Our \algname{Influence-Sample} includes only single-qubit operations and 
    results in only constant-factor decrease in efficiency.
\end{abstract}

\section{Introduction}

It is crucial in quantum computing to understand the behavior of a quantum process, which is also modeled as a quantum channel, in a black-box manner. The most general method for doing this is quantum process tomography (QPT). But it requires a large amount of  computational resources, which is exponential in the number of qubits it acts on, as noted by~\cite{CN97_QPT} and~\cite{GJ14_PT}.

A quantum channel is referred to as a $k$-junta channel if it acts non-trivially on up to $k$ out of $n$ qubits, leaving the rest qubits unchanged. Characterizing a $k$-junta channel is easier if $k$ is small, hence it is interesting to find efficient algorithms to test whether a quantum channel is a $k$-junta channel and learn $k$-junta channels. The problems of testing and learning $k$-junta boolean functions is also an important problem in theoretical computer science, having a rich history of research, see~\cite{Gol17_introtopropertytesting} and~\cite{BY22_propertytesting}. More recently, testing and learning $k$-junta unitaries has been explored by~\cite{MO10_QBF,Wan11_qtestingq,CNY22_juntatesting}

In this paper, we are concerned about the testing and learning $k$-junta channels. The setting is as follows. Given oracle access to a quantum channel $\Phi$, the algorithm is supposed to output an answer about the channel $\Phi$, where \emph{access} means the algorithm queries the oracle with any $n$-qubit quantum state $\rho$ and obtains $\Phi(\rho)$ as an output. For both problems, it requires a distance function $\dist(\cdot,\cdot)$ to formulate \emph{far} and \emph{close} rigorously. With the assistance of the oracle, we are supposed to determine whether $\Phi$ is a $k$-junta channel or \emph{far} from any $k$-junta channels in the testing problem or output a description of $\tilde{\Phi}$, which is close to $\Phi$ with respect to the distance. In this work we choose the distance function induced by the inner product over superoperators, which will be formally defined in Section~\ref{sec:fourier}.

The first main result is an algorithm testing whether a given black-box channel is a $k$-junta channel or far from any $k$-junta channel with $O(k)$ queries.

\begin{theorem}[Testing Quantum $k$-Junta Channels]\label{thm:main1}
	There exists an algorithm such that, 
        given oracle access to an $n$-qubit to $n$-qubit quantum channel $\Phi$, 
        it makes $O(k/\vep^2)$ queries and determines whether $\Phi$ is a 
        $k$-junta channel or $\dist(\Phi,\Psi)\ge \vep$ for any $k$-junta 
        channel $\Psi$ with probability at least $9/10$. 
    Furthermore, Any quantum algorithm achieving this task requires 
        $\Omega(\sqrt{k})$ queries.
\end{theorem}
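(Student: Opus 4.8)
The plan is to prove the two halves of the statement with essentially separate toolkits, both resting on the superoperator Fourier analysis of Section~\ref{sec:fourier}: an algorithm built from ``Fourier sampling'' of the Choi state together with an amplitude-amplification/counting step for the $\tilde O(\sqrt k)$ upper bound, and a reduction from unstructured search for the $\Omega(\sqrt k)$ lower bound.

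For the upper bound I would first isolate the combinatorial heart of the problem. Expanding $\Phi$ as a superoperator over pairs of Pauli strings, define the influence $\Inf_i(\Phi)$ of qubit $i$ as the total Fourier weight carried by strings acting non-trivially on $i$, and $\Inf_S(\Phi)$ as the weight on strings whose support meets $S$. Then $\Phi$ is a $k$-junta channel iff there is a set $J$ with $\abs{J}\le k$ and $\Inf_{\bar J}(\Phi)=0$, and I would prove that $\dist(\Phi,\Psi)\ge\vep$ for every $k$-junta $\Psi$ is, up to constants, equivalent to $\Inf_{\bar S}(\Phi)\ge\vep^2$ for every $\abs S=k$ --- the closest $k$-junta being obtained by projecting out all Fourier strings that touch coordinates outside the $k$ most influential ones. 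Checking that this projection is again a legitimate channel (completely positive, trace preserving) and that it is genuinely optimal is the first nontrivial step, and it reduces the testing task to distinguishing ``the influence is carried by at most $k$ coordinates'' from ``every $k$-subset leaks at least $\vep^2$ of the influence.''

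The algorithm then uses a single-query primitive: prepare the normalized Choi state of $\Phi$ (apply $\Phi$ to half of a maximally entangled pair) and measure in a Pauli/Bell basis to obtain a sample $s$ with probability proportional to $\abs{\wPhi(s)}^2$, whose support is a random set of active coordinates, necessarily inside $J$ whenever $\Phi$ is a $k$-junta. Iteratively growing a candidate set from such samples and rejecting once it exceeds $k$ already gives an $O(k/\vep)$-query classical tester; the speedup to $\tilde O(\sqrt k)$ comes from replacing the adaptive ``collect up to $k+1$ new coordinates / measure how much influence escapes the current set'' loop by a single coherent amplitude-estimation (approximate-counting) instance run on the Choi-state-preparation unitary, so that the $k$-dependent counting step costs $\tilde O(\sqrt k)$ rather than $O(k)$. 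I expect this to be the main obstacle of the whole theorem: turning the adaptive classical procedure into one global amplitude-estimation call so that the quadratic speedup genuinely materializes and the errors across the ``phases'' do not accumulate, which will likely need the product/locality properties of the superoperator Fourier transform to argue that the relevant acceptance amplitude is exactly the influence-gap quantity.

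For the $\Omega(\sqrt k)$ lower bound I would reduce from computing $\mathrm{OR}$ of $k$ bits. Fix $n=2k$ qubits; let qubits $1,\dots,k$ always undergo a fixed non-identity single-qubit channel of influence $\vep$, and for $x\in\set{0,1}^k$ let qubit $k+j$ undergo the same fixed channel if $x_j=1$ and the identity otherwise. When $x=\0$ the channel has exactly $k$ active coordinates, hence is a $k$-junta; when $x\neq\0$ it has at least $k+1$ active coordinates and, by the combinatorial characterization above, escapes influence at least $\vep$ from every $k$-subset, hence is $\Omega(\sqrt\vep)$-far from all $k$-junta channels. One use of this channel on an arbitrary input state is implementable with $O(1)$ queries to the bit-oracle for $x$ (query $x$ coherently into an ancilla, apply the controlled single-qubit operations, uncompute), so any $t$-query tester yields a $t\cdot O(1)$-query quantum algorithm deciding whether $x=\0$; the $\Omega(\sqrt k)$ query lower bound for $\mathrm{OR}_k$ then gives $t=\Omega(\sqrt k)$. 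The details to get right are choosing the fixed single-qubit channel so that it is CPTP with a clean per-coordinate influence, and verifying that the two induced distributions over channels form a legitimate yes/no instance of the testing problem after rescaling $\vep$.
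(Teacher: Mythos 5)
Your lower bound rests on a claimed two\hyp{}sided equivalence, ``$\Phi$ is $\vep$-far from every $k$-junta channel iff $\inf_{T^c}[\Phi]\gtrsim\vep^2$ for every $\abs{T}=k$,'' but only one direction of this is true for channels, and it is the direction you do \emph{not} use in the reduction. Low influence outside $T$ does imply closeness to a $T$-junta (the paper's Theorem~\ref{thm:keythm}/Corollary~\ref{corr:keythm}; note also that the Fourier truncation onto strings supported in $T$ is only trace non-increasing, so it is not itself ``the closest $k$-junta'' --- it must be completed with a $\sqrt{I-A}\rho\sqrt{I-A}$ term, which is a real piece of the paper's proof). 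The converse fails badly: for the fully depolarizing channel $\Phi(\rho)=I/2^n$ one has $J(\Phi)=I_{N^2}/N$, hence $\wPhi(x,x)=4^{-n}$ for every $x$ and $\inf_{T^c}[\Phi]=1-4^{-(n-k)}\approx 1$ for every $\abs{T}=k$; yet for the $k$-junta channel $\Psi$ that depolarizes $T$ and is the identity off $T$, $\D(\Phi,\Psi)\le\frac{1}{\sqrt{2}}\p{2^{-k}+2^{-n}}$, exponentially small. So your key step --- ``at least $k+1$ active coordinates, each with influence $\vep$, implies $\Omega(\sqrt{\vep})$-far from all $k$-junta channels'' --- is unjustified and is simply false for natural choices of the fixed single-qubit channel (e.g.\ depolarizing). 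To repair the OR reduction you would have to take the per-coordinate operation to be unitary (say $Z$) and then prove directly that the resulting channel is far from \emph{all} $k$-junta channels, not just from $k$-junta channels of the same product or unitary form; that farness statement is exactly the structural content the paper establishes (Lemma~\ref{lemma:lokey} for Boolean-function channels, Lemma~\ref{lemma:redkey} for unitaries) before reducing from the Boolean junta-testing lower bound of~\cite{BKT20_lowerbound} via Lemma~\ref{lemma:lored}.

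For the upper bound, the direction of the influence--distance connection you need (far $\Rightarrow$ large influence outside every $k$-set) is sound, but the algorithmic heart --- how $\tilde{O}(\sqrt{k})$ is actually achieved --- is missing. Pauli-basis measurement of the Choi state samples the diagonal coefficients $\wPhi(x,x)$ (not squared magnitudes of single-index coefficients) and yields the $O(k/\vep^2)$-type iterative tester you describe, but ``replace the adaptive loop by a single coherent amplitude-estimation call on the Choi-state-preparation unitary'' is not an algorithm: the difficulty is the \emph{search} over which coordinates carry the influence, and estimating one acceptance amplitude for an adaptively built candidate set does not address it --- you yourself flag this as the main obstacle. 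The paper's mechanism is concrete and different: a \algname{Raw-Influence-Estimator} using only single-qubit operations (a computational-basis test plus a Hadamard-conjugated test) whose acceptance probability is $0$ when $\inf_S[\Phi]=0$ and at least $\inf_S[\Phi]/2$ otherwise; amplitude amplification turning it into an $O(1/\sqrt{\delta})$-query \algname{Influence-Estimator}; and the Gapped Group Testing framework of~\cite{ABRdW16_qtesting} (\algname{Tester-I} at $O(\log k)$ geometric scales plus the random-subset \algname{Tester-II}), giving $O\p{\sqrt{k}\log^{3/2}k\log\log k/\vep}$ queries. Without an analogue of that group-testing layer, or some other worked-out mechanism, the $\sqrt{k}$ claim in your proposal is unsupported.
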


Our second main result is a learning algorithm, which is given a black-box $k$-junta channel and outputs a description of a channel close to the channel with $O(4^k/\vep^2)$ queries. We also show that the algorithm is almost optimal. Hence we can learn a $k$-junta channel efficiently, especially without dependence on the total number of qubits.

\begin{theorem}[Learning Quantum $k$-Junta Channels]\label{thm:main2}
There exists an algorithm, given oracle access to an $n$-qubit to $n$-qubit $k$-junta channel $\Phi$, it makes $O(4^k/\vep^2)$ queries and outputs a description of channel $\Psi$  satisfying $\dist(\Phi,\Psi)\le \vep$ with probability at least $9/10$.  Furthermore, any quantum algorithm achieving this task requires $\Omega\p{4^k/k}$ queries.
\end{theorem}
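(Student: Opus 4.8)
The plan is to split the statement into an upper bound (the algorithm) and a lower bound. For the upper bound, I would first use the testing machinery / influence estimation to identify the $k$ relevant qubits: running the Fourier-weight-based influence estimator on each of the $n$ qubits (or on blocks of qubits, in a binary-search fashion as in classical junta learning) isolates, with high probability, the set $S$ of at most $k$ coordinates on which $\Phi$ acts non-trivially. This costs only a $\poly(k,\log n,1/\vep)$ overhead in queries, so it does not affect the leading $O(4^k/\vep^2)$ term. Once $S$ is known, the problem reduces to learning an arbitrary $k$-qubit channel $\Phi_S$, and here I would do process tomography directly in the Fourier/Pauli basis: the Choi or superoperator representation of a $k$-qubit channel is a vector of $4^k \times 4^k$ Fourier coefficients $\wh\Phi(P,Q)$ indexed by pairs of $k$-qubit Paulis, and each such coefficient can be estimated to additive error by preparing a suitable Pauli eigenstate input and measuring a Pauli observable on the output, i.e. by a Hoeffding-type sampling argument. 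The key quantitative point is that the total squared Fourier mass is bounded (by the channel-Fourier Parseval identity developed in the Fourier-analytic framework of the paper), so estimating all coefficients to total squared error $\vep^2$ requires $O(4^k/\vep^2)$ queries; reconstructing $\Psi$ from these estimates and projecting onto the set of valid channels (CPTP superoperators) gives the description with $\dist(\Phi,\Psi) \le \vep$.

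For the lower bound $\Omega(4^k/k)$, the approach is an information-theoretic / counting argument. I would construct a large "hard family" of $k$-junta channels that are pairwise $\Omega(\vep)$-far in the distance $\dist$ — for instance, a packing of the set of $k$-qubit unitaries (or of CPTP maps) of size $\exp(\Omega(4^k))$ — and argue that any learner that outputs a description within $\vep$ must distinguish all of them, hence must extract $\Omega(4^k)$ bits of information about the channel. Since a single query to an $n$-qubit channel, followed by an arbitrary measurement, yields at most $O(n)$... more precisely at most $O(\log(\dim))$ bits, but the relevant bound here is that on the $k$ active qubits a query reveals only $O(k)$ bits of information (this is where Holevo's bound, or a chain-rule argument over the queries, enters), one needs $\Omega(4^k/k)$ queries. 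Equivalently, one can phrase it as: a quantum algorithm making $T$ queries to $\Phi$ can be simulated by a POVM on $T$ copies of the Choi state of $\Phi$ (or on an entangled state of comparable dimension), and distinguishing $\exp(\Omega(4^k))$ nearly-orthogonal Choi states requires $T \cdot O(k) \ge \Omega(4^k)$.

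The main obstacle I anticipate is twofold. On the upper-bound side, the subtlety is that the channel-Fourier coefficients $\wh\Phi(P,Q)$ for a general (non-unitary) channel are not directly the expectation of a single observable on a single pure-state input — unlike the unitary case in \cite{MO10_QBF} — so one must design the right input states (likely maximally mixed on some qubits, Pauli eigenstates on others) and argue the estimator is unbiased and bounded; controlling the accumulation of estimation error across all $16^k$ coefficient pairs while keeping the query count at $O(4^k/\vep^2)$ (rather than $O(16^k/\vep^2)$) requires exploiting that only $O(4^k)$ of the mass is "large", i.e. an averaging/importance argument. On the lower-bound side, the main difficulty is getting the $k$ in the denominator exactly right: a naive Holevo argument gives $\Omega(4^k/n)$ or loses logarithmic factors, and pinning it to $\Omega(4^k/k)$ requires noting that an adaptive algorithm's queries to a $k$-junta channel can be assumed WLOG to be supported on the $k$ active qubits (the identity action elsewhere contributes nothing), so the per-query information is $O(k)$ and not $O(n)$; making the reduction from "learning" to "distinguishing a packing" fully rigorous, including the choice of metric entropy of the CPTP set, is where the careful work lies.
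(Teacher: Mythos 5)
There is a genuine gap on the upper-bound side, in the step that must deliver the $O(4^k/\vep^2)$ count. Your plan is to estimate the $16^k$ Fourier coefficients $\wPhi(x,y)$ of the restricted channel one at a time by Hoeffding sampling, and you yourself flag that this naively costs far more than $O(4^k/\vep^2)$ (to drive the total squared error below $\vep^2$ over $16^k$ coefficients each must be known to accuracy $\vep/4^k$); the proposed rescue --- ``only $O(4^k)$ of the mass is large, use an importance argument'' --- is not substantiated, and Parseval alone does not give it (the diagonal mass sums to $1$, but the off-diagonal coefficients are only controlled via positivity, $\abs{\wPhi(x,y)}^2\le\wPhi(x,x)\wPhi(y,y)$). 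The paper avoids coefficient-by-coefficient estimation entirely: it prepares the Choi state $v(\Phi)$, postselects the $S^c$ registers onto the trivial Pauli outcome to obtain (many copies of) a $2k$-qubit state close to the Choi state of the restricted channel, and then invokes the collective-measurement tomography of~\cite{OW17_tomo}, which learns a $2k$-qubit state to Frobenius error $\vep$ with $O(4^k/\vep^2)$ copies; since $D(\cdot,\cdot)$ is exactly a normalized Frobenius distance between Choi matrices, this is what yields the stated bound. Relatedly, your identification step (influence estimation qubit-by-qubit, or group testing) introduces an $n$- or $\log n$-dependence, whereas the theorem (and the paper's \algname{Pauli-Sample}, which measures the Choi state in the Pauli basis and reads off the support of the outcome) is independent of $n$ with only $O(k\log k/\vep^2)$ identification queries.

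On the lower bound you take a different and more laborious route: a packing-plus-Holevo argument, which is essentially reproving the $\Omega(4^k/k)$ bound of~\cite{CNY22_juntatesting} for unitaries from scratch. The paper instead reduces learning $k$-junta unitaries to learning $k$-junta channels, using Lemma~\ref{lemma:reldis} (equivalence of $D(\Phi_U,\Phi_V)$ and the unitary distance up to a constant) so that the known unitary lower bound transfers directly; note also that one query to $\Phi_U$ is simulable by one use of $U$, so the reduction is query-preserving. Your sketch, as stated, also contains a step that would not survive scrutiny: an adaptive algorithm querying a channel cannot in general be simulated by a POVM on $T$ copies of its Choi state, so the information-theoretic accounting would have to be done on the adaptive query model itself (as in the cited work), not on Choi-state copies.
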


Both algorithms are proved via Fourier-analytic techniques over superoperators defined in Section~\ref{sec:fourier}. In particular, we turn both problems to estimating the influence of a superoperator, which is a generalization of the influence of boolean functions~\cite{O14_ABF} and the influence of operators~\cite{MO10_QBF}. We prove a series of fundamental properties of the Fourier analysis and the influence of superoperators extending similar results on operators. The lower bound on testing $k$-junta channels combines the result of testing boolean $k$-juntas obtained by~\cite{BKT20_lowerbound} and a structural result for $k$-junta channels. The lower bound on learning $k$-junta channels is obtained by a reduction from learning $k$-junta unitaries.

Besides, we exhibit a simple Influence-Estimator to estimate the influence of channels in Appendix~\ref{app:ie}. Compared with the estimator in~\cite{CNY22_juntatesting}, where it requires entanglement and 2-qubit operations, our Influence-Estimator requires only single-qubit operations and is as efficient as theirs. Therefore, it might be easily implemented in the lab.

\paragraph{Contributions}

\begin{enumerate}
	\item We develop Fourier analysis over superoperators and prove several basic properties around influence, which are an extension of Fourier analysis over operators~\cite{MO10_QBF} and may be of independent interest;
	\item We present the first $k$-junta channel testing algorithm and a lower bound for this problem, partially answering an open problem raised by~\cite{CNY22_juntatesting}. In addition, we show an almost optimal algorithm for $k$-junta channel learning problems;
	\item We construct a new and simple Influence-Estimator, which may be easy to implement in the lab since it includes only single-qubit operations.
\end{enumerate}

\paragraph{Organization} In Section~\ref{sec:rw}, we present a brief overview of related works. Section~\ref{sec:tech} provides a high-level overview of the proof techniques. After establishing some preliminaries regarding quantum channels in Section~\ref{sec:pre}, we demonstrate our Fourier-analytic techniques in Section~\ref{sec:fourier}, including properties of our distance function. In Section~\ref{sec:testing} and Section~\ref{sec:learning}, we prove our $k$-junta channel testing and learning results respectively. Finally, we conclude in Section~\ref{sec:summary}.

\subsection{Related Work}\label{sec:rw}

A boolean function $f:\{0,1\}^n\rightarrow \{0,1\}$ is a $k$-junta if its value only depends on at most $k$ coordinates of the inputs. Testing and learning boolean juntas has been extensively studied for decades. The first result explicitly related to testing juntas is obtained by~\cite{PRS02_testing}, where an $O(1)$-queries algorithm is given to test $1$-juntas. Then~\cite{FKR+04_testing} turned their eyes onto $k$-junta testing problem and gave an $\tilde{O}(k^2)$-queries algorithm. This upper bound was improved by~\cite{Bla09_testing} to a nearly optimal algorithm which requires only $O(k\log k)$ queries, provided an $\Omega(k)$ lower bound proved by~\cite{CG04_lowerbound}. More recently,~\cite{Sag18_lowerbound} gave an $\Omega(k\log k)$ lower bound, which closed the gap. Junta testing has also been investigated in the setting where only non-adaptive queries are allowed~\cite{STW15_adaptivity},~\cite{CST+18_nonadaptive} and~\cite{LCS+19_distributionfree}. Learning $k$-junta boolean function has spawned a large body of work. The learning algorithm obtained by \cite{MOS03_learningjuntas} was a breakthrough and followed by a series of work \cite{LMM+05_learningjuntas}, \cite{AR07_learningjuntawithnoise}, \cite{AM08_learningjuntarandomwalk} and \cite{AKL09_learningjuntaparameterized} discussing $k$-junta learning problem under different circumstances, such as learning symmetric juntas, learning with noise, agnostically learning and considering parameterized learnability. Meanwhile, \cite{BC18_learningjuntas_membershipquery} tried to understand this problem in the membership query model, and \cite{LW18_tolerant_lowerbound}, \cite{BCE+19_tolerant} and \cite{DMN19_tolerant} turned their eyes to tolerant learning $k$-juntas. More recently, people paid more and more attention to learning $k$-junta distribution, see \cite{ABR16_learningjuntadistributions} and \cite{CJL+21_learningjuntadistributions} for more details.

It is expected that a speedup can be obtained when we use a quantum computer to test or learn boolean juntas. In~\cite{AS07_qtesting}, Atici and Servedio gave the first quantum algorithm, which tests $k$-junta boolean functions with $O(k)$ queries. More recently,~\cite{ABRdW16_qtesting} constructed a quantum algorithm which needs only $\tilde{O}(\sqrt{k})$ queries. This was shown to be optimal up to a polylogarithmic factor~\cite{BKT20_lowerbound} In addition,~\cite{AS07_qtesting} also gave an $O\p{2^k}$-sample quantum algorithm for learning $k$-junta boolean function in the PAC model.


In quantum computing, it is natural to consider the situation where behind the oracle is a quantum operation instead of a boolean function.
The quantum junta unitary testing problem is to decide if a unitary $U$ with oracle access is a $k$-junta or $\vep$-far from any $k$-junta unitary.

Wang gave an algorithm testing $k$-junta unitaries with $O(k)$ queries in~\cite{Wan11_qtestingq}. Montanaro and Osborne gave a different tester for dictatorship, i.e., $1$-junta in~\cite{MO10_QBF}. Recently, Chen, Nadimpalli and Yuen have settled both the quantum testing and learning of quantum juntas problem providing nearly tight upper and lower bounds in ~\cite{CNY22_juntatesting}. See Table~\ref{tab:rw} for more details.


The algorithms of testing and learning boolean juntas heavily rely on the Fourier analysis of boolean functions, which is nowadays a rich theory and has wide applications in many branches of theoretical computer science. Readers may refer to~\cite{O14_ABF} or~\cite{De08_intro_to_BF} for more details.
Fourier analysis on quantum operations has received increasing attention in the past couple of years. \cite{MO10_QBF} initiated the study of Fourier analysis on the space of operators and established several interesting properties.
Influence is a key notion in Fourier analysis, which describes how much the function value is affected by some subset of inputs and has many applications in theoretical computer science. The analogous notion in the space of operators has also played a crucial role in designing testing and learning algorithms of $k$-junta unitaries in~\cite{Wan11_qtestingq},~\cite{CNY22_juntatesting}.

We summarize related works in Table~\ref{tab:rw}.

\begin{table}[ht]\label{tab:rw}
\renewcommand\arraystretch{1.5}
\centering
\caption{Our contributions and prior work on testing and learning boolean and quantum $k$-juntas.}
\begin{tabular}[t]{lccc}
\hline
&Classical Testing & Quantum Testing & Quantum Learning\\
\hline
$f:\{0,1\}^n\rarr \{0,1\}$&\makecell{ $O(k\log k)$ \\~\cite{Bla09_testing} } & \makecell{$\tilde{O}(\sqrt{k})$ \\\cite{ABRdW16_qtesting} } & \makecell{ $O(2^k)$ \\~\cite{AS07_qtesting} }\\
&\makecell{ $\Omega(k\log k)$ \\~\cite{Sag18_lowerbound} } & \makecell{$\Omega(\sqrt{k})$ \\\cite{BKT20_lowerbound} } & \makecell{ $\Omega(2^k)$ \\~\cite{AS07_qtesting} }\\
Unitary $U\in \mathcal{M}_{2^n\times 2^n}$&--- & \makecell{$\tilde{O}(\sqrt{k})$ \\\cite{CNY22_juntatesting} } & \makecell{ $O(4^k)$ \\~\cite{CNY22_juntatesting} }\\
&--- & \makecell{$\Omega(\sqrt{k})$ \\~\cite{CNY22_juntatesting} } & \makecell{ $\Omega(4^k/k)$ \\~\cite{CNY22_juntatesting} }\\
Channel $\Phi$, $n$ to $n$ qubits &--- & \makecell{$O(k)$ \\ \textbf{this work} } & \makecell{ $O(4^k)$ \\ \textbf{this work} }\\
&--- & \makecell{$\Omega(\sqrt{k})$ \\ \textbf{this work} } & \makecell{ $\Omega(4^k/k)$ \\ \textbf{this work} }\\\hline
\end{tabular}
\end{table}%

\subsection{Techniques}\label{sec:tech}

In this section, we give a high-level technical overview of our main results.

\subsubsection{Testing Junta Channels}

Our junta testing algorithm is inspired by the algorithm for $k$-junta boolean function testing by~\cite{AS07_qtesting}. The algorithm deeply relies on the notion of the influence of superoperators, which captures how much a subset of input qubits affect the output of a channel; see Section~\ref{sec:inf} for more details. The influence of a superoperator is defined through the formal Fourier analysis framework over superoperators. We prove in Section~\ref{sec:inf} that it has many properties similar to the influence of boolean functions and unitaries.

To prove the lower bound, we reduce $k$-junta channel testing to $k$-junta boolean function testing, which has a lower bound $\Omega(\sqrt{k})$ by~\cite{BKT20_lowerbound}. To make the reduction work, we prove that a tester for $k$-junta channels is also a tester of $k$-junta boolean function, if we view a boolean function as a quantum channel. Moreover, we also show that our algorithm naturally induces a tester for $k$-junta unitaries.

\subsubsection{Learning Junta Channels}

The learning algorithm is inspired by the  algorithms in~\cite{AS07_qtesting,CNY22_juntatesting}. We apply \algname{Pauli-Sample} to the Choi-state of the channel to find the high-influence registers. Then we apply the efficient quantum state tomography algorithm by~\cite{OW17_tomo} to learn the reduced density operator on the qubits with high influence. The lower bound of learning $k$-junta channels is obtained by reducing learning $k$-junta unitaries to learning $k$-junta channels.

\subsubsection{Influence Estimator}

We propose a new Influence-Estimator to estimate influence for channels, which only needs single-qubit operations. To achieve this, we utilize the ideas from the CSS code and quantum money. The estimator uses Hadamard operators to exchange bit-flip effects and phase-flip effects imposed by the channel and finally decides whether the channel changes the target subset of input qubits too much; see Section~\ref{app:ie}.


\section{Preliminary}\label{sec:pre}

We assume that readers are familiar with elementary quantum computing and information theory. Readers may refer to Chapters 1 and 2 of~\cite{NC00_qc} and Chapters 1 and 2 of~\cite{Wat18_TQI} for more detailed backgrounds. For natural number $n\ge 1$,  $[n]$ represents $\{1,2,\dots,n\}$. $I_n$ represents an $n\times n$ identity matrix. The subscript may be omitted whenever it is clear from the context. We say a Hermitian matrix is a positive semidefinite matrix (PSD) if all the eigenvalues are nonnegative.

Throughout the paper, we assume that the whole quantum system has $n$ qubits. Let $N=2^n$ be the dimension of the system. Denote $\Sigma=[N]$ $X=\CC^\Sigma$. $L(X)$ represents the set of all the linear maps from $X$ to $X$ itself. Therefore all $n$-qubit quantum states are a subset of $L(X)$. We note that $L(X)$ is isomorphic to $\CC^{N\times N}$, the set of $N\times N$ matrices. For any $A\in L(X)$, let $\vect{A}=(A\otimes I)\sum_{i=1}^N\ket{i,i}$ be the ``stretching'' column vector of $A$. For $x\in \Zfn$ and $T\subseteq[n]$, let $x_T\in \ZZ_4^T$ be the substring of $x$ obtained by restricting $x$ to all the coordinates in $T$ . We write $0^T\in \ZZ^T_4$ to denote all zero string on coordinates in $T$. The superscript may be dropped whenever it is clear from the context. We use $A^*$ to stand for the conjugate transpose of $A$.

Recall the definition of Pauli operators given by
$$
    \sigma_0=\p{\begin{matrix}1&0\\0&1\end{matrix}}=I,\quad
    \sigma_1=\p{\begin{matrix}0&1\\1&0\end{matrix}}=X,\quad
    \sigma_2=\p{\begin{matrix}0&-i\\i&0\end{matrix}}=Y,\quad
    \sigma_3=\p{\begin{matrix}1&0\\0&-1\end{matrix}}=Z.
$$
It forms an orthogonal basis for $L(\CC^2)$ (over \CC) with respect to the Hilbert-Schmidt inner product. For any $x\in\Zfn$, let $\sigma_x=\otimes_{i=1}^n\sigma_{x_i}$. It is easy to check $\lrs{\sigma_x}_{x\in \Zfn}$ is an orthogonal basis for $L(\CC^{N})=L(X)$.

For $x\in\Zfn$, $\ket{v(\sigma_x)}$ represents the quantum state corresponding to column vector $\vect{\sigma_x}$. It is easy to check $\{\ket{v(\sigma_x)}\}_{x\in \Zfn}$ is an orthogonal basis in $\mathbb{C}^{2^{2n}}$.

\subsection{Superoperators and quantum channels}\label{sec:superoperators}

A superoperator on $L(X)$ is a linear map from $L(X)$ to itself. $T(X)$ represents the set of all superoperators on $L(X)$. A quantum channel $\Phi: L(X) \rarr L(X)$ is completely positive and a trace preserving superoperator. In this work, we concern ourselves with the channels mapping $n$ qubits to $n$ qubits. We use $C(X)$ to denote the set of all quantum channels from $L(X)$ to itself. For any unitary $U\in L(X)$, $\Phi_U$ represents the channel which acts $U$ on the state, i.e., $\Phi_U(\rho)=U\rho U^*$. For any boolean function $g:\{0,1\}^n\rightarrow\{0,1\}$, we define $\Phi_g=\Phi_{U_g}$, where $U_g$ is the unitary defined to be $U_g\ket{x}=(-1)^{g(x)}\ket{x}$ for $x\in\{0,1\}^n$. Next, we introduce the Kraus representation and the Choi representation of superoperators. The properties and relations around two representations are postponed to Appendix~\ref{app:KCR}.

\begin{definition}[Kraus representations, Choi Representations, Choi states]
    Given superoperator $\Phi\in T(X)$, its Kraus representation is
    $$
        \Phi(\rho)=\sum_{s\in \Sgm} A_s\rho B_s^*
    $$
    where $A_s, B_s\in L(X)$. Its Choi representation is
    $$
        J(\Phi) = \sum_{a,b\in \Sgm} \Phi(\ketbra{a}{b})\otimes \ketbra{a}{b} = \p{\Phi\otimes I}\p{\sum_{a,b\in \Sgm}\ketbra{a}{b}\otimes \ketbra{a}{b}} \in L(X\otimes X),
    $$
    where $J$ is a linear map from $T(X)$ to $L(X \otimes X)$.

    For a quantum channel $\Phi$, the Choi state $v(\Phi)$ is defined to be
    \[v(\Phi)=\frac{J(\Phi)}{\Tr J(\Phi)}.\]
    The Choi state of unitaries is defined similarly. Note that for a unitary $U$, its Choi state is a pure state, denoted by $\ket{v(U)}$.
\end{definition}
By Fact~\ref{fact:property_KCP}, $v(\Phi)$ is a density operator if $\Phi$ is a quantum channel.

At the end of this section we introduce $k$-junta channels.

\begin{definition}[$k$-Junta Channels] Given $\Phi \in C(X)$ and a subset $T\subseteq[n]$, we say $\Phi$ is a $T$-junta channel if $\Phi = \Phi_T \otimes I_{T^c}$. $\Phi$ is a $k$-junta channel if $\Phi$ is a $T$-junta channel for some $T\subseteq[n]$ of size $k$.
\end{definition}


\section{Fourier Analysis over superoperators}\label{sec:fourier}

We are ready to introduce the Fourier analysis over superoperators. For any superoperators $\Phi,\Psi\in T(X)$, define the inner product $\ip{\Phi, \Psi} = \ip{J(\Phi), J(\Psi)} = \Tr J(\Phi)^*J(\Psi)$. It is easy to verify that $\ip{\cdot,\cdot}$ is an inner product and  $(T(X),\ip{\cdot,\cdot})$ forms a finite-dimensional Hilbert space. The norm of $\Phi$ is defined to be $\norm{\Phi} = \sqrt{\ip{\Phi,\Phi}} = \norm{J(\Phi)}_2$, where $\|\cdot\|_2$ is the Frobenius norm. The distance between
$\Phi$ and $\Psi$ is defined to be

\begin{equation}\label{def:distance}
D(\Phi,\Psi)=\f{1}{N\sqrt{2}}\norm{\Phi - \Psi}= \f{1}{N\sqrt{2}}\norm{J(\Phi) - J(\Psi)}_2
\end{equation}
The normalizer $N\sqrt{2}$ simply keeps the distance between two quantum channels in $[0,1]$.

Provided the definitions above, we are going to introduce an orthogonal basis.

\begin{definition}[Orthogonal Basis for Superoperators]
	For any $x,y\in\Zfn$, let
	\begin{equation}\label{def of SOB}
	    \Phi_{x,y}(\rho)=\sigma_x\rho\sigma_y.
	\end{equation}
\end{definition}

\begin{restatable}{proposition}{prowd}\label{pro:wd}
	$\{\Phi_{x,y}\}_{x,y\in \ZZ_4^n}$ forms an orthogonal basis in $(T(X),\ip{\cdot,\cdot})$. Besides, $\norm{\Phi_{x,y}}=N$ for all $x,y\in \Zfn$.
\end{restatable}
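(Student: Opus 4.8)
\emph{Proof plan.} The strategy is to compute all pairwise inner products $\langle\Phi_{x,y},\Phi_{x',y'}\rangle$ explicitly, reduce them to the already-known orthogonality of the Pauli basis $\{\sigma_x\}_{x\in\Zfn}$ of $L(X)$, and finish with a dimension count.

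First I would unfold the inner product through the Choi representation. Since $J(\Phi)=\sum_{a,b\in\Sgm}\Phi(\ketbra{a}{b})\otimes\ketbra{a}{b}$ and $\{\ketbra{a}{b}\}_{a,b\in\Sgm}$ is an orthonormal set in $L(X)$ (with respect to the Hilbert--Schmidt inner product, which is multiplicative over tensor factors), the definition $\langle\Phi,\Psi\rangle=\langle J(\Phi),J(\Psi)\rangle$ collapses to $\langle\Phi,\Psi\rangle=\sum_{a,b\in\Sgm}\langle\Phi(\ketbra{a}{b}),\Psi(\ketbra{a}{b})\rangle$. (Equivalently, $J(\Phi_{x,y})=(\sigma_x\otimes I)\ketbra{\Omega}{\Omega}(\sigma_y\otimes I)$ for $\ket{\Omega}=\vect{I}$, using that Paulis are Hermitian; this leads to the same computation.)

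Applying the collapsed formula to $\Phi_{x,y}(\ketbra{a}{b})=\sigma_x\ketbra{a}{b}\sigma_y$ gives $\langle\Phi_{x,y},\Phi_{x',y'}\rangle=\sum_{a,b\in\Sgm}\Tr\!\big(\sigma_y\ketbra{b}{a}\sigma_x\sigma_{x'}\ketbra{a}{b}\sigma_{y'}\big)$. By cyclicity of the trace this factors as $\big(\sum_a\bra{a}\sigma_x\sigma_{x'}\ket{a}\big)\big(\sum_b\bra{b}\sigma_{y'}\sigma_y\ket{b}\big)=\Tr(\sigma_x\sigma_{x'})\,\Tr(\sigma_{y'}\sigma_y)$. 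Now I invoke $\Tr(\sigma_x\sigma_{x'})=N\,\delta_{x,x'}$, which follows from the single-qubit identities $\Tr(\sigma_i\sigma_j)=2\delta_{ij}$ together with multiplicativity of the trace over tensor products, so that $\Tr(\sigma_x\sigma_{x'})=\prod_{i=1}^n\Tr(\sigma_{x_i}\sigma_{x'_i})=2^n\delta_{x,x'}$. Hence $\langle\Phi_{x,y},\Phi_{x',y'}\rangle=N^2\,\delta_{x,x'}\,\delta_{y,y'}$, which simultaneously establishes pairwise orthogonality and $\norm{\Phi_{x,y}}=N$.

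It remains to see that this orthogonal family spans $T(X)$. I would argue by counting: $T(X)$ is the space of linear maps on $L(X)\cong\CC^{N\times N}$, so $\dim T(X)=N^4$, while $\{\Phi_{x,y}\}_{x,y\in\Zfn}$ has $|\Zfn|^2=(4^n)^2=N^4$ members, and an orthogonal family of nonzero vectors whose cardinality equals the dimension of the ambient space is automatically an orthogonal basis. (Alternatively, one can show spanning directly: any $\Phi\in T(X)$ has a Kraus form $\Phi(\rho)=\sum_s A_s\rho B_s^*$, and expanding each $A_s$ and each $B_s^*$ in the Pauli basis of $L(X)$ exhibits $\Phi$ as a linear combination of the $\Phi_{x,y}$.) I do not anticipate a genuine obstacle here; the only step that demands care is the bookkeeping in the trace identity above — keeping the conjugate transposes attached to the correct factors and grouping the operators so that the sum over $a$ and the sum over $b$ decouple into two independent Pauli traces.
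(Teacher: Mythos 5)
Your proof is correct and follows essentially the same route as the paper: both arguments reduce orthogonality of $\{\Phi_{x,y}\}$ to the orthogonality of the Pauli basis and conclude that the family is a basis by counting $4^{2n}=\dim T(X)$ orthogonal nonzero elements. The only cosmetic difference is that you derive the factorization $\ip{\Phi_{x,y},\Phi_{x',y'}}=\Tr(\sigma_x\sigma_{x'})\Tr(\sigma_{y'}\sigma_y)$ inline via the Choi matrix and trace cyclicity, whereas the paper invokes the pre-established identity $\ip{\Phi,\Psi}=\ip{A,C}\ip{D,B}$ for rank-one superoperators (Fact~\ref{fact:property_ipn}) and factorizes qubit by qubit.
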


The proof is deferred to Appendix~\ref{app:wd}. We are ready to define the Fourier expansions of superoperators now.

\begin{definition}[Fourier Expansion of Superoperators]
	For superoperator $\Phi\in T(X)$, the Fourier expansion of $\Phi$ is defined to be
	$$
	    \Phi = \sum_{x,y\in \ZZ_4^n} \wPhi(x,y)\Phi_{x,y}
	$$
	where $\Phi_{x,y}$ is defined by Eq.~\eqref{def of SOB}. $\wPhi(x,y)$'s are the Fourier coefficients of $\Phi$ and  $\wPhi(x,y)=\f{1}{N^2}\ip{\Phi_{x,y}, \Phi}$. Moreover, we define $\wPhi$ to be the $N^2\times N^2$ matrix with entries $\p{\wPhi(x,y)}_{x,y\in \ZZ_4^n}$.
\end{definition}


\begin{restatable}{lemma}{lemmaexistsU}\label{lemma:existsU}
	There exists unitary U such that $\wPhi = \f{1}{N}U^*J(\Phi) U$. Therefore, $\wPhi$ is PSD if and only if $J(\Phi)$ is PSD. In particular, if $J(\Phi)$ is PSD, then $\wPhi(x,x)\in \RR$ for all $x\in \ZZ_4^n$. For a quantum channel $\Phi$, we have $0\le \wPhi(x,x)\le 1$ for all $x\in \ZZ_4^n$ and $\sum_{x\in \ZZ_4^n}\wPhi(x,x)=\Tr \wPhi=1.$
\end{restatable}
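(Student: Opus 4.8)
The plan is to exhibit $U$ explicitly as the change-of-basis unitary from the computational basis of $\CC^{N^2}$ to the orthonormal basis $\{\ket{v(\sigma_x)}\}_{x\in\Zfn}$ recalled in Section~\ref{sec:pre}, and then to read off all four assertions from the identity $\wPhi=\frac1N U^*J(\Phi)U$.

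\emph{Step 1 (Choi representation of the basis superoperators).} I would first establish that $J(\Phi_{x,y})=N\ket{v(\sigma_x)}\bra{v(\sigma_y)}$. Starting from $J(\Phi_{x,y})=\sum_{a,b}\sigma_x\ketbra{a}{b}\sigma_y\otimes\ketbra{a}{b}$ and using that each Pauli $\sigma_y$ is Hermitian, one rewrites $\sigma_x\ketbra{a}{b}\sigma_y=(\sigma_x\ket{a})(\sigma_y\ket{b})^*$, so the double sum factors as $\vect{\sigma_x}\,\vect{\sigma_y}^*$. Since $\norm{\vect{\sigma_x}}^2=\Tr\sigma_x^*\sigma_x=N$, we have $\vect{\sigma_x}=\sqrt N\,\ket{v(\sigma_x)}$, which gives the claimed formula.

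\emph{Step 2 (the unitary).} Substituting Step 1 into $\wPhi(x,y)=\frac1{N^2}\langle\Phi_{x,y},\Phi\rangle=\frac1{N^2}\Tr\!\big(J(\Phi_{x,y})^*J(\Phi)\big)$ and using linearity of the trace yields $\wPhi(x,y)=\frac1N\bra{v(\sigma_x)}J(\Phi)\ket{v(\sigma_y)}$. Let $U$ be the operator on $\CC^{N^2}$ whose column indexed by $x\in\Zfn$ is $\ket{v(\sigma_x)}$; there are $|\Zfn|=N^2$ of these and they form an orthonormal basis (Section~\ref{sec:pre}), so $U$ is unitary. The $(x,y)$-entry of $U^*J(\Phi)U$ is then $\bra{v(\sigma_x)}J(\Phi)\ket{v(\sigma_y)}=N\wPhi(x,y)$, i.e. $\wPhi=\frac1N U^*J(\Phi)U$.

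\emph{Step 3 (consequences).} Because $U$ is unitary and $1/N>0$, for every vector $v$ we have $\langle v,\wPhi v\rangle=\frac1N\langle Uv,J(\Phi)Uv\rangle$ with $U$ a bijection; hence $\wPhi$ is PSD iff $J(\Phi)$ is, and in the PSD case $\wPhi$ is Hermitian, so $\wPhi(x,x)\in\RR$. For a quantum channel $\Phi$, $J(\Phi)$ is PSD by Fact~\ref{fact:property_KCP}, so $\wPhi(x,x)=\frac1N\bra{v(\sigma_x)}J(\Phi)\ket{v(\sigma_x)}\ge 0$; trace preservation gives $\Tr J(\Phi)=\sum_a\Tr\Phi(\ketbra{a}{a})=N$, hence $\Tr\wPhi=1=\sum_x\wPhi(x,x)$, and since every summand is nonnegative each one is at most $1$. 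The only real subtlety is bookkeeping of the normalization factors — the $1/N^2$ in the definition of $\wPhi(x,y)$, the $N$ coming from $\vect{\sigma_x}=\sqrt N\,\ket{v(\sigma_x)}$, and the resulting $1/N$ in the final identity — together with indexing $U$ so that matrix entries line up correctly. The one structural observation driving everything is $J(\Phi_{x,y})=N\ket{v(\sigma_x)}\bra{v(\sigma_y)}$; once that is in hand, the rest is a short computation.
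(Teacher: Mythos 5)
Your proposal is correct and follows essentially the same route as the paper: you compute $\wPhi(x,y)=\frac{1}{N}\bra{v(\sigma_x)}J(\Phi)\ket{v(\sigma_y)}$ via $J(\Phi_{x,y})=\vect{\sigma_x}\vect{\sigma_y}^*$ and take $U$ to be the unitary whose columns are $\vect{\sigma_x}/\sqrt{N}$, exactly as in the paper's proof. You merely spell out in more detail the derivation of $J(\Phi_{x,y})$ and the PSD/trace consequences, which the paper cites or leaves implicit.
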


The proof is deferred to Appendix~\ref{app:expansion}.


\subsection{Influence}\label{sec:inf}

Given superoperator $\Phi\in T(X)$ and a subset $S\subseteq[n]$, the influence of $S$ on $\Phi$ measures how much the qubits in $S$ affect $\Phi$. It is an extension of the influence on operators introduced by~\cite{MO10_QBF}, which, in turn, is inspired by the analogous notion for boolean functions. We will establish several properties of the influence on quantum channels, which enable us to design both testing algorithms and learning algorithms for $k$-junta channels.

\begin{definition}[Influence of superoperators]
    Given superoperator $\Phi\in T(X)$, $S\subseteq [n]$, the influence of $\Phi$ on $S$ is defined as
    $$
        \inf_S[\Phi] = \sum_{x\in \ZZ_4^n; x_S \neq \0 } \wPhi(x,x).
    $$
    We use $\inf_i[\Phi]$ to represent $\inf_{\{i\}}[\Phi]$ for convenience.
\end{definition}

Notice that the influence of a superoperator can be negative, which is different from operators in~\cite{MO10_QBF} or boolean functions. However, we only concern ourselves about completely positive superoperators, whose influence is always nonnegative by Lemma~\ref{lemma:existsU}.

The following proposition follows from Lemma~\ref{lemma:existsU} directly.

\begin{proposition}
    Given quantum channel $\Phi\in C(X)$,  $S\subseteq [n]$, it holds that $0\leq \inf_S[\Phi]\leq 1$.
\end{proposition}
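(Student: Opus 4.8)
The plan is to reduce the claimed bound $0 \le \inf_S[\Phi] \le 1$ directly to the properties of $\wPhi$ already established in Lemma~\ref{lemma:existsU}. Recall that $\inf_S[\Phi] = \sum_{x \in \ZZ_4^n : x_S \neq \0} \wPhi(x,x)$, so the influence is a sum of a \emph{subset} of the diagonal entries of the matrix $\wPhi$. Since $\Phi$ is a quantum channel, $J(\Phi)$ is PSD, and by Lemma~\ref{lemma:existsU} we have $\wPhi = \frac{1}{N} U^* J(\Phi) U$ for some unitary $U$, hence $\wPhi$ is also PSD. In particular every diagonal entry satisfies $\wPhi(x,x) \ge 0$, which immediately gives the lower bound $\inf_S[\Phi] \ge 0$.

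For the upper bound, I would again invoke Lemma~\ref{lemma:existsU}, which states that for a quantum channel $\sum_{x \in \ZZ_4^n} \wPhi(x,x) = \Tr \wPhi = 1$ and each $\wPhi(x,x) \in [0,1]$. Since $\inf_S[\Phi]$ sums only over those $x$ with $x_S \neq \0$, and each omitted term $\wPhi(x,x)$ (those with $x_S = \0$) is nonnegative, we get
\[
\inf_S[\Phi] = \sum_{x : x_S \neq \0} \wPhi(x,x) \le \sum_{x \in \ZZ_4^n} \wPhi(x,x) = 1.
\]
That completes both inequalities.

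There is essentially no obstacle here: the proposition is an immediate corollary of Lemma~\ref{lemma:existsU}, as the paper itself signals by writing ``The following proposition follows from Lemma~\ref{lemma:existsU} directly.'' The only thing to be slightly careful about is to note explicitly that PSD-ness of $\wPhi$ is what guarantees nonnegativity of the diagonal entries (so that dropping terms only decreases the sum, and keeping a partial sum stays below the full trace). If one wanted to be fully self-contained one could also remark that $\wPhi(x,x) = \frac{1}{N}\bra{u_x} J(\Phi) \ket{u_x} \ge 0$ where $\ket{u_x}$ is the $x$-th column of $U$, but this is exactly the content already packaged in the cited lemma, so I would simply cite it.
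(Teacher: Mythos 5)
Your argument is correct and is exactly the route the paper intends: the proposition is stated as an immediate consequence of Lemma~\ref{lemma:existsU}, using PSD-ness of $\wPhi$ for nonnegativity of the diagonal entries and $\Tr\wPhi=1$ for the upper bound. Nothing further is needed.
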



The following are some basic properties of influence, which can be easily derived from the definition and Lemma~\ref{lemma:existsU}.
\begin{proposition}
    Given quantum channel $\Phi\in C(X)$ and $S,T\subseteq [n]$, we have
    \begin{enumerate}
    	\item $S\subseteq T \Rarr \inf_S[\Phi] \le \inf_T[\Phi]$;
    	\item $\inf_S[\Phi] + \inf_T[\Phi] \ge \inf_{S\cup T}[\Phi]$;
    	\item $\inf_{\varnothing}[\Phi]=0$, $\inf_{[n]}[\Phi] = 1	$.
    \end{enumerate}
\end{proposition}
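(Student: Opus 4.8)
The plan is to reduce all three claims to the definition $\inf_S[\Phi] = \sum_{x : x_S \neq \0} \wPhi(x,x)$ together with the two facts supplied by Lemma~\ref{lemma:existsU}, namely that for a quantum channel $\wPhi(x,x)\ge 0$ for all $x\in\Zfn$ and $\sum_{x\in\Zfn}\wPhi(x,x)=1$. The whole argument is really a statement about the nonnegative ``weights'' $p_x := \wPhi(x,x)$ which form a probability distribution on $\Zfn$, so each item becomes an elementary set-theoretic inequality on the index sets $\{x : x_S \neq \0\}$.

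For item 1, observe that if $S\subseteq T$ then for any $x\in\Zfn$, $x_S\neq\0$ implies $x_T\neq\0$ (a nonzero coordinate inside $S$ is also a nonzero coordinate inside $T$). Hence $\{x : x_S\neq\0\}\subseteq\{x : x_T\neq\0\}$, and since every $p_x\ge 0$, summing over the smaller set gives no more than summing over the larger one: $\inf_S[\Phi]\le\inf_T[\Phi]$. For item 2, the key is the inclusion $\{x : x_{S\cup T}\neq\0\}\subseteq\{x : x_S\neq\0\}\cup\{x : x_T\neq\0\}$: if $x$ has a nonzero coordinate somewhere in $S\cup T$, that coordinate lies in $S$ or in $T$. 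Therefore, using nonnegativity of the $p_x$ and the union bound for sums of nonnegative terms,
\[
\inf_{S\cup T}[\Phi] = \sum_{x : x_{S\cup T}\neq\0} p_x \le \sum_{x : x_S\neq\0} p_x + \sum_{x : x_T\neq\0} p_x = \inf_S[\Phi] + \inf_T[\Phi].
\]
For item 3, the set $\{x : x_\varnothing\neq\0\}$ is empty (the restriction of $x$ to the empty set is the empty string, which is never ``nonzero''), so $\inf_\varnothing[\Phi]=0$; and $\{x : x_{[n]}\neq\0\} = \Zfn\setminus\{\0\}$, so $\inf_{[n]}[\Phi] = \sum_{x\neq\0} p_x = \bigl(\sum_{x}p_x\bigr) - p_\0 = 1 - \wPhi(\0,\0)$. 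To conclude $\inf_{[n]}[\Phi]=1$ I need $\wPhi(\0,\0)=0$ for a quantum channel; this should follow from the trace-preserving condition, which forces the component of $\Phi$ along $\Phi_{\0,\0}(\rho)=\rho$ (the identity superoperator, up to scaling) to be appropriately normalized while leaving no ``weight'' on the diagonal at $x=\0$ — more precisely, $\wPhi(\0,\0)=\frac1{N^2}\ip{\Phi_{\0,\0},\Phi}=\frac1{N^2}\ip{I,J(\Phi)}=\frac1{N^2}\Tr J(\Phi)$, but for a trace-preserving map one should instead track that the normalization $\Tr\wPhi=1$ already absorbs this. I would double-check the exact normalization convention, since the cleanest route is simply: $\wPhi(\0,\0) = \frac1{N^2}\ip{\Phi_{\0,\0},\Phi}$, and a direct computation of $\ip{\Phi_{\0,\0},\Phi} = \ip{J(\Phi_{\0,\0}), J(\Phi)}$ shows this vanishes unless the claim is really $\inf_{[n]}[\Phi] = 1 - \wPhi(\0,\0)$; the paper's earlier Proposition already asserts $\inf_{[n]}[\Phi]=1$, so I trust that $\wPhi(\0,\0)=0$ for channels and only need to cite the trace-preservation argument.

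The only genuine obstacle is this last normalization point in item 3 — verifying $\wPhi(\0,\0)=0$ (equivalently, that trace preservation kills the diagonal Fourier weight at the zero string). Items 1 and 2 are immediate from monotonicity of sums of nonnegative terms over nested / covered index sets, and require nothing beyond the nonnegativity half of Lemma~\ref{lemma:existsU}. I would present items 1, 2, and the $\inf_\varnothing$ half of item 3 in one or two lines each, and spend the remaining space spelling out why trace preservation gives $\wPhi(\0,\0)=0$, perhaps by invoking the Choi-representation characterization of trace-preserving maps ($\mathrm{Tr}_{\mathrm{out}} J(\Phi) = I$) from Appendix~\ref{app:KCR} and Lemma~\ref{lemma:existsU}'s unitary conjugation relating $\wPhi$ to $\frac1N J(\Phi)$.
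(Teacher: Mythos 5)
Your items 1 and 2, and the $\inf_{\varnothing}[\Phi]=0$ half of item 3, are exactly the argument the paper intends: the paper gives no explicit proof, saying only that the claims follow easily from the definition and Lemma~\ref{lemma:existsU}, and your reduction to monotonicity and subadditivity of sums of the nonnegative weights $\wPhi(x,x)$ with $\sum_{x\in\Zfn}\wPhi(x,x)=1$ is that intended route.

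The genuine gap is the asserted equality $\inf_{[n]}[\Phi]=1$. You correctly reduce it to showing $\wPhi(\0,\0)=0$, but the repair you sketch cannot work. First, the intermediate computation is off: $J(\Phi_{\0,\0})=\vect{\sigma_{\0}}\vect{\sigma_{\0}}^*$ is the (unnormalized) maximally entangled projector, not the identity matrix, so $\ip{\Phi_{\0,\0},\Phi}=\vect{I}^*J(\Phi)\vect{I}$ rather than $\Tr J(\Phi)$, and this quantity is not fixed by trace preservation. Second, and decisively, trace preservation does not kill the zero-string diagonal weight: by Corollary~\ref{corr:channel_check} it only says $\sum_{x,y\in\Zfn}\wPhi(x,y)\sigma_x\sigma_y=I$. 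The identity channel is a concrete witness: there $\wPhi(\0,\0)=1$, every other Fourier coefficient vanishes, and hence $\inf_{[n]}[\Phi]=0$, not $1$ (as it must be, since the identity channel is a $0$-junta). What your bookkeeping actually establishes is $\inf_{[n]}[\Phi]=1-\wPhi(\0,\0)\le 1$, so the equality as literally stated fails for channels with nonzero overlap with the identity superoperator; deferring to the paper (``I trust that $\wPhi(\0,\0)=0$'') and citing trace preservation therefore leaves that part unproven, and no argument along the proposed lines can close it.
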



The following key theorem states that the closeness between a quantum channel and juntas is captured by the influence.

\begin{restatable}[Influence and Distance from k-Junta Channels]{theorem}{thmkeythm}\label{thm:keythm}
	Let $\Phi\in C(X)$ be a quantum channel. If there exists a subset $T\subseteq[n]$ satisfying that $\inf_{T^c}[\Phi]\le \vep$ for $0\leq \vep<1$, then there exists a
$T$-junta channel $\Phi''$ such that $\D(\Phi,\Phi'')\le \sqrt{\vep} + \vep/\sqrt{2}$.
\end{restatable}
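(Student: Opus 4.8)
The plan is to build the $T$-junta channel $\Phi''$ by ``averaging out'' the qubits in $T^c$ and then correcting the result to be a genuine channel. Concretely, first I would define an intermediate superoperator $\Phi'$ by projecting the Fourier expansion onto the $T$-supported part: $\Phi' = \sum_{x,y : x_{T^c} = \0} \wPhi(x,y)\,\Phi_{x,y}$. Equivalently, this is the superoperator obtained by applying to $\Phi$ the twirl over Pauli operators on $T^c$, i.e. averaging $\sigma_z \Phi(\sigma_z^* \cdot \sigma_z)\sigma_z^*$ over all $z$ supported on $T^c$; this twirl is a valid quantum operation, so $\Phi'$ is at least completely positive, but it may fail to be trace preserving and in general will only ``almost'' factor as $\Phi_T \otimes I_{T^c}$. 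Wait — I should be careful: the right intermediate object is the one whose Choi matrix is the partial-trace/replace on the $T^c$ registers. Let me instead define $\Phi'$ via its action $\Phi'(\rho) = \Tr_{T^c}\!\big(\Phi(\rho_T \otimes (I/2^{|T^c|})_{T^c})\big) \otimes (I/2^{|T^c|})_{T^c}$ applied appropriately, so that $\Phi'$ is manifestly a $T$-junta channel (tensor of a channel on $T$ with the completely depolarizing channel — no, with identity — on $T^c$); the point is that its Fourier coefficients $\wPhi'(x,y)$ vanish unless $x_{T^c} = y_{T^c} = \0$ and agree with $\wPhi(x,y)$ on that sub-block, up to the depolarizing bookkeeping. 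The cleanest route: let $\Phi'$ be the channel with $\wPhi'(x,y) = \wPhi(x,y)$ when $x_{T^c} = y_{T^c} = \0$ and $0$ otherwise — I need to check this is the Choi matrix of an honest $T$-junta channel, which should follow from Lemma~\ref{lemma:existsU} together with the structure of $\Phi_{x,y}$.

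The second step is the distance estimate for the ``diagonal'' part. By the definition of the distance \eqref{def:distance} and Proposition~\ref{pro:wd}, $\D(\Phi,\Phi')^2 = \frac{1}{2N^2}\norm{\Phi - \Phi'}^2 = \frac{1}{2N^2}\sum_{x,y} N^2 |\wPhi(x,y) - \wPhi'(x,y)|^2 = \tfrac12 \sum_{(x,y)\,:\,x_{T^c}\neq\0 \text{ or } y_{T^c}\neq\0} |\wPhi(x,y)|^2$. Now I invoke PSD-ness of $\wPhi$ (Lemma~\ref{lemma:existsU}): for a PSD matrix, every off-diagonal entry is bounded by the geometric mean of the corresponding diagonal entries, $|\wPhi(x,y)|^2 \le \wPhi(x,x)\wPhi(y,y)$. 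Splitting the sum over whether $x_{T^c} \ne \0$ or $y_{T^c}\ne \0$ and using Cauchy--Schwarz together with $\sum_x \wPhi(x,x) = 1$, the whole sum is bounded by $2\sum_{x: x_{T^c}\neq \0}\wPhi(x,x) = 2\inf_{T^c}[\Phi] \le 2\vep$ (the factor-of-2 absorbing the two symmetric cases, the cross term handled by AM--GM). Hence $\D(\Phi,\Phi')^2 \le \vep$, i.e. $\D(\Phi,\Phi') \le \sqrt{\vep}$.

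The third step handles the fact that $\Phi'$, while a $T$-junta superoperator that is close to $\Phi$, may not be completely positive or trace preserving, so it may not be an honest channel $\Phi''$. Here I would show $\Phi'$ is already nearly a channel and round it: since $\Phi$ is a channel its Choi matrix is PSD with trace $N$, and $\wPhi'$ is a principal-submatrix-type restriction of a PSD matrix, hence PSD, with $\Tr \wPhi' = \sum_{x: x_{T^c}=\0}\wPhi(x,x) \ge 1 - \vep$. So $\Phi'$ is completely positive and at most $\vep$ away from trace preserving; rescaling and mixing in a tiny amount of, say, the completely depolarizing $T$-junta channel produces a genuine $T$-junta channel $\Phi''$ with $\D(\Phi',\Phi'') \le \vep/\sqrt{2}$ (the $\sqrt 2$ coming from the normalizer in \eqref{def:distance}; a trace deficit of $\delta$ in the Choi picture costs $\delta/(N\sqrt2)\cdot\|\text{correction}\|$, which works out to $\le \vep/\sqrt2$). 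Then the triangle inequality gives $\D(\Phi,\Phi'') \le \sqrt{\vep} + \vep/\sqrt2$, as claimed.

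The main obstacle I anticipate is the third step — getting the constant in $\D(\Phi',\Phi'') \le \vep/\sqrt2$ exactly right, since it requires care about \emph{how} one corrects $\Phi'$ to a channel (renormalizing trace vs. mixing in depolarizing noise vs. a combination) and tracking how each correction interacts with the $1/(N\sqrt2)$ normalization; a naive correction can easily lose a factor of $2$ or introduce an extra additive term. A secondary subtlety is verifying in step one that the Fourier-truncated object $\Phi'$ genuinely has a PSD Choi matrix and is a $T$-junta in the strict sense $\Phi' = \Phi'_T \otimes I_{T^c}$ — this is where I'd lean on the explicit form of $\Phi_{x,y}$ and Lemma~\ref{lemma:existsU} rather than an abstract argument, and it may be cleaner to define $\Phi'$ operationally via a Pauli twirl on $T^c$ and then separately identify its Fourier support.
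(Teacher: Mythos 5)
Your first two steps coincide with the paper's proof: the intermediate object is exactly the Fourier truncation $\Phi'=\sum_{x,y:\,x_{T^c}=y_{T^c}=\0}\wPhi(x,y)\Phi_{x,y}$, its complete positivity comes from $\wPhi'$ being a principal sub-block of the PSD matrix $\wPhi$, and the bound $\D(\Phi,\Phi')\le\sqrt{\vep}$ is obtained the same way, via $\abs{\wPhi(x,y)}^2\le\wPhi(x,x)\wPhi(y,y)$ and $\sum_x\wPhi(x,x)=1$. (One side remark is wrong: the Pauli twirl over $T^c$ is \emph{not} this truncation --- averaging over $\sigma_z$ on $T^c$ kills only the terms with $x_{T^c}\neq y_{T^c}$, so the twirled map keeps Pauli-noise terms on $T^c$ and is not a $T$-junta; since you ultimately use the truncation this does not affect the argument, but the operational description should be dropped.)

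The genuine gap is in your third step. To complete $\Phi'$ to a channel by adding anything completely positive, you need the \emph{operator} inequality $A\le I$, where $A=\sum_{x,y:\,x_{T^c}=y_{T^c}=\0}\wPhi(x,y)\sigma_x\sigma_y$ is the trace term of $\Phi'$ (equivalently, that $\Phi'$ is trace non-increasing); indeed any CP completion adds a PSD operator to $A$ and must land exactly on $I$. What you establish is only the scalar statement $\Tr\wPhi'=\sum_{x:\,x_{T^c}=\0}\wPhi(x,x)\ge 1-\vep$, i.e.\ a bound on the Choi trace, which is strictly weaker: a principal-block truncation of a channel's Choi matrix is not trace non-increasing in general, and ``rescaling'' a CP map never restores trace preservation unless $A$ is a multiple of $I$. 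This is precisely where the paper's proof does its real work: it writes the trace certificate of $\Phi$ as $A+B+C_1+C_2$, tests against states of the form $\ketbra{\phi}\otimes I_{T^c}/2^{\abs{T^c}}$, uses PSD-ness of the $B$-block of $\wPhi$ and the vanishing of the cross terms $C_1,C_2$ (orthogonality of non-identity Paulis on $T^c$) to conclude $A\le I$ and $A=A'\otimes I_{T^c}$; only then can it set $\Phi''(\rho)=\Phi'(\rho)+\sqrt{I-A}\,\rho\,\sqrt{I-A}$, a $T$-junta channel, and the correction cost is computed exactly as $\D(\Phi',\Phi'')=\frac{1}{N\sqrt{2}}\Tr(I-A)=\frac{1}{\sqrt{2}}\inf_{T^c}[\Phi]\le\vep/\sqrt{2}$ because the Choi correction is rank one with Frobenius norm $\Tr(I-A)$. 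Without an argument for $A\le I$ (and for its $T$-junta tensor structure), your rounding step, and hence the stated constant $\vep/\sqrt{2}$, does not go through.
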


To obtain this theorem, we construct a $T$-junta channel $\Phi''$ explicitly from $\Phi$ by two steps. Firstly we construct a $T$-junta ``sub-channel'' $\Phi'$ and then complement it into a $T$-junta channel $\Phi''$. The proof of Theorem~\ref{thm:keythm} is deferred to Appendix~\ref{app:key}.

\begin{corollary}\label{corr:keythm}
	Given quantum channel $\Phi$, if $\Phi$ is $\vep$-far from any $k$-junta channels, then $\inf_{T^c}[\Phi] \ge \vep^2/4$ for all $T\subseteq [n]$ with $\abs{T}\le k$.
\end{corollary}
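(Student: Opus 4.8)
The plan is to argue by contraposition, feeding the hypothesis into Theorem~\ref{thm:keythm}. Suppose toward a contradiction that for some $T\subseteq[n]$ with $\abs{T}\le k$ we had $\inf_{T^c}[\Phi]<\vep^2/4$; the goal is to produce a $k$-junta channel lying strictly within distance $\vep$ of $\Phi$, contradicting the assumption that $\Phi$ is $\vep$-far from every $k$-junta channel. We may assume $0\le \vep\le 1$, since $\D$ takes values in $[0,1]$ and the statement is vacuous for $\vep>1$; for $\vep=0$ it is just the nonnegativity of influence.

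Set $\delta:=\inf_{T^c}[\Phi]$. By the earlier proposition we have $\delta\in[0,1]$, and by assumption $\delta<\vep^2/4\le 1/4<1$, so Theorem~\ref{thm:keythm} applies with $\vep$ there replaced by $\delta$: there is a $T$-junta channel $\Phi''$ with $\D(\Phi,\Phi'')\le \sqrt{\delta}+\delta/\sqrt 2$. Since $\abs{T}\le k$, enlarge $T$ to a set $T'$ of size exactly $k$; as any $T$-junta channel is also a $T'$-junta channel, $\Phi''$ is a $k$-junta channel in the sense of the definition.

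It remains to check $\sqrt{\delta}+\delta/\sqrt 2<\vep$. From $\delta<\vep^2/4$ we get $\sqrt{\delta}<\vep/2$, and using $\vep\le 1$ we get $\delta/\sqrt 2<\vep^2/(4\sqrt 2)\le \vep/(4\sqrt 2)$. Hence $\D(\Phi,\Phi'')<\vep/2+\vep/(4\sqrt 2)<\vep$, which is the desired contradiction. This argument is essentially a repackaging of Theorem~\ref{thm:keythm}; the only points needing (minor) care are the final numerical inequality and the harmless padding of $T$ up to size exactly $k$, so I do not anticipate a substantive obstacle here — the real work was already done in proving Theorem~\ref{thm:keythm}.
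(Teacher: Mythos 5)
Your proposal is correct and is essentially the argument the paper intends: the corollary is stated as a direct consequence of Theorem~\ref{thm:keythm}, obtained exactly by the contrapositive you give (apply the theorem with $\delta=\inf_{T^c}[\Phi]<\vep^2/4$, note a $T$-junta channel with $\abs{T}\le k$ is a $k$-junta channel, and check $\sqrt{\delta}+\delta/\sqrt{2}<\vep$ for $\vep\le 1$). Your handling of the edge cases $\vep=0$ and $\vep>1$ and the padding of $T$ to size $k$ are fine and match the paper's implicit reasoning.
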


\subsection{Characterizations of Distance Function}
In this section, we will compare the distance given in Eq.~\eqref{def:distance} with other metrics measuring the distances between two quantum channels.
All the proofs in this section can be found in Appendix~\ref{app:char}.


\cite{CNY22_juntatesting} introduced a distance  $\dist(\cdot,\cdot)$ between unitaries, with which the authors gave optimal testing and learning algorithms for $k$-junta unitaries. The distance $\dist(\cdot,\cdot)$ is defined as follows.

\begin{equation}\label{eq:dU}
  \dist(U,V)=\frac{1}{\sqrt{2N}}\min_{\theta\in[0,2\pi)}\|U-e^{i\theta}V\|_2
\end{equation}

The following lemma asserts that the distance $D(\cdot,\cdot)$ in Eq.~\eqref{def:distance} and $\dist(\cdot,\cdot)$ in Eq.~\eqref{eq:dU} are equivalent when considering unitary operations. Recall that $\Phi_U$ is defined in section~\ref{sec:superoperators}.

\begin{restatable}[Related to distance between Unitaries]{lemma}{lemmareldis}\label{lemma:reldis}
   For unitary matrices $U$ and $V$, it holds that
   $$
       \dist(U,V)\le D(\Phi_U,\Phi_V)\le \sqrt{2} \cdot \dist(U,V).
   $$
\end{restatable}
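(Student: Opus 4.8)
The plan is to compute both $\dist(U,V)$ and $D(\Phi_U,\Phi_V)$ explicitly in terms of the eigenvalues (or more precisely the spectrum of $U^*V$), and then compare the resulting scalar expressions. The key observation is that $J(\Phi_U) = \vect{U}\vect{U}^*$ up to normalization, so that $\ip{\Phi_U,\Phi_V} = |\ip{\vect{U},\vect{V}}|^2 = |\Tr(U^*V)|^2$. Expanding the Frobenius norm in the definition~\eqref{def:distance} gives
\begin{equation*}
  N^2 \cdot 2 \cdot D(\Phi_U,\Phi_V)^2 = \norm{J(\Phi_U) - J(\Phi_V)}_2^2 = \norm{J(\Phi_U)}_2^2 + \norm{J(\Phi_V)}_2^2 - 2\ip{J(\Phi_U),J(\Phi_V)} = 2N^2 - 2|\Tr(U^*V)|^2,
\end{equation*}
so $D(\Phi_U,\Phi_V)^2 = 1 - |\Tr(U^*V)|^2 / N^2$. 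On the other side, expanding~\eqref{eq:dU} gives $2N \cdot \dist(U,V)^2 = \min_\theta \norm{U - e^{i\theta}V}_2^2 = 2N - 2\max_\theta \mathrm{Re}(e^{-i\theta}\Tr(U^*V)) = 2N - 2|\Tr(U^*V)|$, hence $\dist(U,V)^2 = 1 - |\Tr(U^*V)|/N$.

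First I would write $t = |\Tr(U^*V)|/N$, noting $t \in [0,1]$ by Cauchy--Schwarz (since both $U,V$ have Frobenius norm $\sqrt{N}$). Then the claim reduces to the purely elementary inequality
\begin{equation*}
  1 - t \;\le\; 1 - t^2 \;\le\; 2(1-t) \qquad \text{for } t\in[0,1],
\end{equation*}
whose left inequality is $t^2 \le t$ and whose right inequality is $1 + t \le 2$, both obvious on $[0,1]$. Taking square roots of $\dist(U,V)^2 \le D(\Phi_U,\Phi_V)^2 \le 2\,\dist(U,V)^2$ yields exactly the stated chain $\dist(U,V) \le D(\Phi_U,\Phi_V) \le \sqrt{2}\,\dist(U,V)$.

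The only genuine content is the identity $\ip{\Phi_U,\Phi_V} = |\Tr(U^*V)|^2$; everything else is bookkeeping. I would derive it from the Choi representation: $J(\Phi_U) = (\Phi_U\otimes I)\sum_{a,b}\ketbra{a}{b}\otimes\ketbra{a}{b}$, and since $(\Phi_U\otimes I)(\ketbra{a}{b}\otimes\ketbra{a}{b}) = (U\ketbra{a}{b}U^*)\otimes\ketbra{a}{b}$, one recognizes $J(\Phi_U)$ as (a scalar multiple of) the rank-one projector onto the maximally entangled vector rotated by $U$, i.e.\ $J(\Phi_U) = \vect{U}\vect{U}^*$ with $\vect{U} = (U\otimes I)\sum_i\ket{i,i}$. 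Then $\ip{J(\Phi_U),J(\Phi_V)} = |\vect{V}^*\vect{U}|^2$ and $\vect{V}^*\vect{U} = \sum_{i,j}\bra{i,i}(V^*U\otimes I)\ket{j,j}^{\,}\!$, wait --- more carefully, $\vect{V}^*\vect{U} = \Tr(V^*U) = \overline{\Tr(U^*V)}$, giving the modulus-squared as claimed.

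\textbf{Main obstacle.} There is essentially no hard step; the one place to be careful is getting the normalization constants right (the $\frac{1}{N\sqrt2}$ in $D$, the $\frac{1}{\sqrt{2N}}$ in $\dist$, and $\norm{\Phi_{x,y}} = N$ from Proposition~\ref{pro:wd}, equivalently $\norm{J(\Phi_U)}_2 = \norm{\vect{U}}^2 = N$), so that the two formulas $D^2 = 1 - t^2$ and $\dist^2 = 1-t$ come out cleanly with the same parameter $t$. Once those match, the inequality is immediate. I would also double-check that the minimization over the global phase $\theta$ in $\dist$ is exactly what converts $\mathrm{Re}\,\Tr(U^*V)$ into $|\Tr(U^*V)|$, which it does, and that no analogous phase freedom is needed on the $D$ side because $D$ already sees only $|\Tr(U^*V)|^2$ (the channel $\Phi_U$ is insensitive to a global phase on $U$), which is the structural reason the two distances agree up to a constant rather than one dominating the other.
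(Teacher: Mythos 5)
Your proposal is correct and follows essentially the same route as the paper's proof: both reduce to the explicit formulas $\dist(U,V)^2 = 1 - \abs{\Tr(U^*V)}/N$ and $D(\Phi_U,\Phi_V)^2 = 1 - \abs{\Tr(U^*V)}^2/N^2$ and then invoke the elementary inequality $1-t \le 1-t^2 \le 2(1-t)$ on $[0,1]$. The only difference is that the paper states the two formulas as immediate, while you spell out their derivation via $J(\Phi_U)=\vect{U}\vect{U}^*$ and the phase minimization, which is a harmless (and correct) elaboration.
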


The following proposition proves that $D(\cdot,\cdot)$ captures the average operator distance between two channels. We expect that our distance function could be used in other channel property testing problems.

%
\begin{restatable}[Related to average-case operator distance]{proposition}{prorelave}\label{pro:relave}
    For quantum channels $\Phi$ and $\Psi$, it holds that
    $$
    \int_{\psi}\norm{\Phi(\ketbra{\psi})-\Psi(\ketbra{\psi})}_2^2d\psi = \f{2N}{N+1}D(\Phi,\Psi)^2+\f{1}{N(N+1)}\norm{\Phi(I)-\Psi(I)}_2^2,
    $$
    where the integral is taken over the Haar measure on all the unit vectors $\psi$.

    Especially for unital channels $\Phi$ and $\Psi$, i.e., $\Phi(I)=\Psi(I)=I$, we have
    $$
    \int_{\psi}\norm{\Phi(\ketbra{\psi})-\Psi(\ketbra{\psi})}_2^2d\psi = \f{2N}{N+1}D(\Phi,\Psi)^2
    $$
\end{restatable}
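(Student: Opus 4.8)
The plan is to expand everything in terms of the Pauli basis and carry out the Haar integral coordinate-wise. Writing $\Delta = \Phi - \Psi$, we have $\Delta(\rho) = \sum_{x,y} \widehat{\Delta}(x,y)\,\sigma_x \rho \sigma_y$, so for a unit vector $\psi$,
\[
\norm{\Delta(\ketbra{\psi})}_2^2 = \Tr\!\big(\Delta(\ketbra{\psi})^*\Delta(\ketbra{\psi})\big) = \sum_{x,y,x',y'} \overline{\widehat{\Delta}(x,y)}\,\widehat{\Delta}(x',y')\,\Tr\!\big(\sigma_y\ketbra{\psi}\sigma_x\sigma_{x'}\ketbra{\psi}\sigma_{y'}\big),
\]
and the trace equals $\bra{\psi}\sigma_x\sigma_{x'}\ket{\psi}\bra{\psi}\sigma_{y'}\sigma_y\ket{\psi}$. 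Thus integrating over $\psi$ reduces to computing $\int_\psi \bra{\psi}A\ket{\psi}\bra{\psi}B\ket{\psi}\,d\psi$ for $A = \sigma_x\sigma_{x'}$, $B = \sigma_{y'}\sigma_y$. The standard second-moment formula for the Haar measure on pure states gives
\[
\int_\psi \bra{\psi}A\ket{\psi}\bra{\psi}B\ket{\psi}\,d\psi = \frac{\Tr(A)\Tr(B) + \Tr(AB)}{N(N+1)}.
\]
First I would substitute this in and split the resulting sum into the two pieces coming from the two terms in the numerator.

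For the $\Tr(AB)$ term we get $\frac{1}{N(N+1)}\sum_{x,y,x',y'}\overline{\widehat{\Delta}(x,y)}\widehat{\Delta}(x',y')\Tr(\sigma_x\sigma_{x'}\sigma_{y'}\sigma_y)$. Here I would recognize this sum as $\frac{1}{N(N+1)}\cdot N\norm{\Delta}^2$ up to bookkeeping: indeed $\sum_{x,x'}\overline{\widehat{\Delta}(x,y)}\widehat{\Delta}(x',y')\Tr(\sigma_x\sigma_{x'}\cdots)$ collapses via orthogonality of the Pauli basis (using $\Tr(\sigma_a\sigma_b) = N\delta_{a,b}$ and the group structure of Paulis) to a diagonal sum $\propto \sum_{x,y}|\widehat{\Delta}(x,y)|^2 = \norm{\widehat{\Delta}}_2^2 = \frac{1}{N^2}\norm{J(\Delta)}_2^2$, which by Eq.~\eqref{def:distance} equals $2D(\Phi,\Psi)^2$. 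Carefully tracking the constants should turn this term into exactly $\frac{2N}{N+1}D(\Phi,\Psi)^2$. For the $\Tr(A)\Tr(B)$ term, note $\Tr(\sigma_x\sigma_{x'})\Tr(\sigma_{y'}\sigma_y) = N^2\delta_{x,x'}\delta_{y,y'}$ (again using reality/Hermiticity of Paulis so $\sigma_x^* = \sigma_x$), so this piece becomes $\frac{N^2}{N(N+1)}\sum_{x,y}|\widehat{\Delta}(x,y)|^2\cdot(\text{something})$ — more precisely $\frac{1}{N(N+1)}\big|\sum_x \widehat{\Delta}(x,x)\,\Tr(\sigma_x\cdot)\big|$-type expression — which I would identify with $\frac{1}{N(N+1)}\norm{\Delta(I)}_2^2$ by observing $\Delta(I) = \sum_x \widehat{\Delta}(x,x)\sigma_x\sigma_x\cdot$... actually $\Delta(I) = \sum_{x,y}\widehat{\Delta}(x,y)\sigma_x\sigma_y$ and its Frobenius norm squared is $\sum_{x,y,x',y'}\overline{\widehat{\Delta}(x,y)}\widehat{\Delta}(x',y')\Tr(\sigma_y\sigma_x\sigma_{x'}\sigma_{y'})$, which matches the $\Tr(A)\Tr(B)$-generated sum after reindexing.

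The main obstacle will be the careful combinatorial bookkeeping of the Pauli sums — matching the traces $\Tr(\sigma_x\sigma_{x'}\sigma_{y'}\sigma_y)$ against $\norm{\Delta}^2$ and $\norm{\Delta(I)}_2^2$ with the right normalization constants, since there are several factors of $N$ floating around (from $\widehat{\Delta} = \frac{1}{N}U^*J(\Delta)U$ in Lemma~\ref{lemma:existsU}, from $\norm{\Phi_{x,y}} = N$ in Proposition~\ref{pro:wd}, and from the Haar formula). A clean way to avoid errors is to work directly with the operator $\Delta$ rather than its Fourier coefficients: expand $\int_\psi \norm{\Delta(\ketbra\psi)}_2^2 d\psi$ using the swap-trick representation $\norm{M}_2^2 = \Tr[(M\otimes M^*)\mathrm{SWAP}]$ and the fact that $\int_\psi \ketbra\psi^{\otimes 2} d\psi = \frac{I + \mathrm{SWAP}}{N(N+1)}$, reducing the whole computation to evaluating $\Tr[(\Delta\otimes\Delta^*)(\text{known operators})]$ and then rewriting the two resulting terms as $\Tr[J(\Delta)^2]$-type quantities (giving the $D(\Phi,\Psi)^2$ term) and $\Tr[\Delta(I)\Delta(I)^*]$ (giving the $\norm{\Delta(I)}_2^2$ term). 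The unital case is then immediate since $\Phi(I) = \Psi(I) = I$ forces $\Delta(I) = 0$. Finally I would double-check the constant $\frac{2N}{N+1}$ by testing the identity on a simple example, e.g. $\Phi = \Phi_U$, $\Psi = \mathrm{id}$, where both sides can be computed independently.
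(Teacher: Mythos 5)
Your overall strategy---average $\bra{\psi}A\ket{\psi}\bra{\psi}B\ket{\psi}$ with the second-moment Haar formula and split the result into two sums---is sound, and your ``cleaner way'' at the end (work directly with $\Delta=\Phi-\Psi$, use $\int_\psi\ketbra{\psi}^{\otimes 2}d\psi=(I+F)/(N(N+1))$ and $\Tr((A\otimes B)F)=\Tr(AB)$, then identify one term with $\norm{J(\Delta)}_2^2$ and the other with $\norm{\Delta(I)}_2^2$) is essentially the paper's own proof, which expands $J(\Delta)$ in the computational basis via $J_{i,j}=\Delta(\ketbra{i}{j})$.

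However, the Pauli-basis execution you describe contains a concrete error: you have attached the two Haar terms to the wrong quantities. The term $\Tr(A)\Tr(B)=\Tr(\sigma_x\sigma_{x'})\Tr(\sigma_{y'}\sigma_y)=N^2\delta_{x,x'}\delta_{y,y'}$ is the one that collapses to the diagonal, giving $\frac{N}{N+1}\sum_{x,y}\abs{\widehat{\Delta}(x,y)}^2=\frac{2N}{N+1}D(\Phi,\Psi)^2$ by Corollary~\ref{corr:rel_dis_norm}. The term $\Tr(AB)=\Tr(\sigma_x\sigma_{x'}\sigma_{y'}\sigma_y)$ does \emph{not} collapse by orthogonality to $x=x'$, $y=y'$: it is nonzero whenever the product of the four Pauli operators is proportional to the identity (for instance $x=y$, $x'=y'$, $x\neq x'$ contributes), and summed against $\overline{\widehat{\Delta}(x,y)}\,\widehat{\Delta}(x',y')$ it equals, by cyclicity of the trace, $\Tr\big(\Delta(I)^*\Delta(I)\big)=\norm{\Phi(I)-\Psi(I)}_2^2$; so this term produces the $\frac{1}{N(N+1)}\norm{\Phi(I)-\Psi(I)}_2^2$ piece, not the $D^2$ piece. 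As written, the step ``the $\Tr(AB)$ sum collapses via orthogonality to a diagonal sum proportional to $\sum_{x,y}\abs{\widehat{\Delta}(x,y)}^2$'' would fail, and the attempted matching of the $\Tr(A)\Tr(B)$ sum with $\norm{\Delta(I)}_2^2$ fails with it (that sum is exactly diagonal, while $\norm{\Delta(I)}_2^2$ expands into the four-Pauli traces). Once the two identifications are swapped, your computation goes through with the constants you already have, and the unital case follows since $\Delta(I)=0$.
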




Similar properties have been established for the distance $\dist(\cdot,\cdot)$ between two unitaries in Proposition 21 of~\cite{MdW16_survey}. We refer interested readers to the discussion about the reason for the chosen distances in Section 5.1.1 of~\cite{MdW16_survey}


Finally, we prove that $D(\cdot,\cdot)$ can be very far from the worst-case operator norm. Here we consider the $1$ to $2$ diamond norm.

\begin{definition}[$1$ to $2$ Diamond Norm]\label{def:dotnorm}
    Given $\Phi\in T(X)$, its $1$ to $2$ diamond norm is defined to be
    $$
        \dotnorm{\Phi}=\otnorm{\Phi\otimes \mathbbm{1}_X}= \max_{\rho:\norm{\rho}_1=1}\{\norm{(\Phi\otimes \mathbbm{1}_X)(\rho)}_2\}
    $$
\end{definition}

\begin{restatable}[Related to worst-case operator distance]{proposition}{prorelworst}\label{pro:relworst}
    For quantum channels $\Phi$ and $\Psi$, it holds that
    $$
    \sqrt{2}D(\Phi,\Psi) \le \dotnorm{\Phi-\Psi} \le N\cdot \sqrt{2}D(\Phi,\Psi)
    $$
Both equalities above can be achieved.
\end{restatable}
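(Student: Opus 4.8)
The plan is to prove the two inequalities separately, both by reducing the $1\to2$ diamond norm to a computation involving the Choi representation of $\Delta := \Phi-\Psi$, and then to exhibit explicit channels achieving each bound.

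For the lower bound $\sqrt{2}D(\Phi,\Psi)\le\dotnorm{\Delta}$, I would choose a specific maximally entangled input. Recall $\dotnorm{\Delta}=\max_{\norm{\rho}_1=1}\norm{(\Delta\otimes\mathbbm 1_X)(\rho)}_2$. Taking $\rho=\frac1N\sum_{a,b}\ketbra{a}{b}\otimes\ketbra{a}{b}$, which satisfies $\norm{\rho}_1=1$ (it is a pure state on $X\otimes X$, up to the normalization $\frac1N$), we get $(\Delta\otimes\mathbbm 1_X)(\rho)=\frac1N J(\Delta)=\frac1N\big(J(\Phi)-J(\Psi)\big)$, so $\norm{(\Delta\otimes\mathbbm 1_X)(\rho)}_2=\frac1N\norm{J(\Phi)-J(\Psi)}_2=\sqrt2\,D(\Phi,\Psi)$ by the definition in Eq.~\eqref{def:distance}. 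This immediately gives the lower bound, and it shows the lower bound is tight whenever this particular input is optimal — e.g. for $\Phi$ the identity channel and $\Psi$ a channel whose Choi operator differs from the maximally entangled state along a single matrix unit, which I would spell out as the equality case.

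For the upper bound $\dotnorm{\Delta}\le N\sqrt2\,D(\Phi,\Psi)$, the idea is a crude but sufficient norm comparison. Write any input as $\rho=\sum_j c_j \ketbra{\psi_j}{\phi_j}$ via its singular value decomposition, so $\norm{\rho}_1=\sum_j|c_j|$; by convexity/triangle inequality it suffices to bound $\norm{(\Delta\otimes\mathbbm 1_X)(\ketbra{\psi}{\phi})}_2$ for unit vectors $\psi,\phi$ on $X\otimes X$, i.e. to bound the $1\to2$ norm on rank-one inputs of trace-norm $1$. Expanding $\Delta$ in the orthogonal basis $\{\Phi_{x,y}\}$ of Proposition~\ref{pro:wd} as $\Delta=\sum_{x,y}\widehat{\Delta}(x,y)\Phi_{x,y}$, one has $(\Delta\otimes\mathbbm 1)(\sigma)=\sum_{x,y}\widehat\Delta(x,y)(\sigma_x\otimes I)\sigma(\sigma_y\otimes I)$, and since each $\sigma_x\otimes I$ is unitary, $\norm{(\Delta\otimes\mathbbm 1)(\sigma)}_2\le \sum_{x,y}|\widehat\Delta(x,y)|\,\norm{\sigma}_2\le N^2\,\max_{x,y}|\widehat\Delta(x,y)|\cdot\norm\sigma_2$; combined with $\norm\sigma_2\le\norm\sigma_1=1$ and $\max_{x,y}|\widehat\Delta(x,y)|\le\norm{\widehat\Delta}_2=\frac1N\norm{J(\Delta)}_2$ (using Lemma~\ref{lemma:existsU}, $\widehat\Delta=\frac1N U^*J(\Delta)U$), this gives $\dotnorm\Delta\le N\norm{J(\Delta)}_2=N\sqrt2\,D(\Phi,\Psi)$ after matching normalizations. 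I would then double-check the constant carefully — the factor-of-$N$ bookkeeping between $\norm{\widehat\Delta}_2$, $\norm{J(\Delta)}_2$ and $D$ is where an off-by-$\sqrt N$ error is easiest to make — and adjust the Cauchy–Schwarz step (bounding $\sum|\widehat\Delta(x,y)|$ by $N\,\norm{\widehat\Delta}_2$ rather than $N^2\max$) if a tighter argument is needed to land exactly on $N\sqrt2 D$. For the equality case of the upper bound, I would take two channels differing by a single Pauli conjugation supported on all $n$ qubits, so that $\Delta$ has a single Fourier coefficient and the Cauchy–Schwarz is tight, with the optimal input being a maximally entangled state aligned to that Pauli.

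The main obstacle I anticipate is not conceptual but the constant-tracking in the upper bound: ensuring the chain of inequalities collapses to exactly $N\sqrt2\,D$ rather than a larger constant, and producing a clean equality example that simultaneously saturates every inequality used (the triangle inequality over the SVD, the $\norm\cdot_2\le\norm\cdot_1$ step, and the Cauchy–Schwarz over Fourier coefficients). A secondary subtlety is verifying that the maximizing $\rho$ in the definition of $\dotnorm\cdot$ may be taken rank-one with $\norm\rho_1=1$, which follows from convexity of the $\ell_2$ norm of the output over the trace-norm ball but should be stated explicitly.
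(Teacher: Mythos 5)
Your lower bound is exactly the paper's argument (feed the maximally entangled pure state $\ketbra{\Pz}$ into $(\Phi-\Psi)\otimes\mathbbm{1}_X$ and note that its output is $\frac1N(J(\Phi)-J(\Psi))$), and your reduction of the maximization to rank-one inputs is legitimate. The genuine gap is in the upper bound: your chain of inequalities does not produce the constant $N\sqrt2\,D$. The Fourier matrix $\widehat{\Delta}$ is $N^2\times N^2$ (the indices $x,y$ range over $\ZZ_4^n$, which has $N^2$ elements), so the triangle inequality gives $\dotnorm{\Delta}\le\sum_{x,y}\abs{\widehat{\Delta}(x,y)}$, and Cauchy--Schwarz over the $N^4$ entries gives at best $\sum_{x,y}\abs{\widehat{\Delta}(x,y)}\le N^2\norm{\widehat{\Delta}}_2=N\norm{J(\Delta)}_2=N^2\sqrt2\,D(\Phi,\Psi)$ --- a factor of $N$ worse than claimed. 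The "tighter" step you propose, $\sum_{x,y}\abs{\widehat{\Delta}(x,y)}\le N\norm{\widehat{\Delta}}_2$, is simply false in general (take $\widehat{\Delta}$ with all entries of equal modulus: the ratio is $N^2$, and nothing about $\Delta$ being a difference of channels forces near-sparsity of $\widehat\Delta$). So this is not constant bookkeeping; the entrywise-$\ell_1$ route intrinsically loses a factor of $N$. The paper avoids it: by Theorem 3.51 of Watrous the optimizer can be taken to be $uu^*$ with $u=\vect{A}$, $\norm{A}_2=1$; then $(\Delta\otimes\mathbbm{1})(\vect{A}\vect{A}^*)=(I\otimes A^T)\,J(\Delta)\,(I\otimes A^T)^*$, and $\norm{BCE}_2\le\norm{B}_\infty\norm{C}_2\norm{E}_\infty$ together with $\norm{A^T}_\infty\le\norm{A}_2=1$ gives exactly $\dotnorm{\Delta}\le\norm{J(\Delta)}_2=N\sqrt2\,D(\Phi,\Psi)$. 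If you want to keep a Fourier-flavoured argument you would need an operator-level bound of this kind, not an entrywise one.

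Your equality cases also need repair. The pair "two channels differing by a single Pauli conjugation" saturates the \emph{lower} bound, not the upper: this is essentially the paper's lower-bound example ($\Phi=\Phi_U$ with $U=X\otimes I^{\otimes n-1}$ versus the identity channel), and for any two unitary channels $\dotnorm{\Phi_U-\Phi_V}\le2$ while $N\sqrt2\,D$ can be of order $N$, so unitary pairs can never witness tightness of the upper bound for $n\ge2$. The paper's upper-bound-tight example instead uses two dephasing-type channels (one of which additionally swaps $\ketbra{1}$ and $\ketbra{2}$), for which $D=1/N$ and $\dotnorm{\Phi-\Psi}=\sqrt2$. Finally, your sketched lower-bound equality instance ("Choi operator differing from the maximally entangled state along a single matrix unit") is not obviously the Choi matrix of a completely positive, trace-preserving map; an explicit valid pair, such as the paper's, is needed.
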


\section{Testing $k$-Junta Quantum Channels}\label{sec:testing}

In this section, we show an $O(k)$-query $k$-junta channel testing algorithm and an $\Omega(\sqrt{k})$ lower bound. First, we prove an upper bound on the sample complexity by presenting a $k$-junta channel tester, where the analysis of the algorithm relies on the Fourier analysis of superoperators. The lower bound is obtained by reducing $k$-junta boolean function testing to $k$-junta channel testing. Finally, we show that the $k$-junta channels testing problem is the natural extension of $k$-junta unitary testing problem under our distance function of channels, which gives an alternative proof of the lower bound.


\subsection{$O(k)$ Upper Bound and $\Omega(\sqrt{k})$ Lower Bound}\label{sec:testuplo}

We firstly show our $k$-junta channel tester. 
Our tester is inspired by~\cite{AS07_qtesting}.
However, we introduce a new subroutine, \algname{Influence-Sample} to replace 
    \algname{Pauli-Sample} used in~\cite{AS07_qtesting}.
Note that \algname{Pauli-Sample} requires preparing maximally entanglement state
    and measurement operations over the two-qubit Bell basis,
    whereas \algname{Influence-Sample} requires only single-qubit operations.
In addition, \algname{Influence-Sample} results in only a constant-factor
    decrease in efficiency. 
Our learning algorithm Algorithm~\ref{alg:learner} also includes 
    \algname{Influence-Sample} as a subroutine instead of \algname{Pauli-Sample}.

\begin{myalg}
\caption{\algname{Influence-Sample}($\Phi, t$)}
\label{alg:is}
\Input{Oracle access to quantum channel $\Phi\in C(X)$, a natural number t}	
\Output{$S\subseteq[n]$}

\begin{algorithmic}[1]
\item Initialize $S=\varnothing$;
\item Repeat the following for $t$ times;
\vspace{-\topsep}
\begin{itemize}
\setlength\itemsep{0mm}
\item Uniformly randomly choose $i\in \{0,1\}^n$. 
    Prepare state $\ket{i}$;

\item Uniformly randomly choose $U$ from 
$$\lrs{
    I=\p{\begin{matrix}1&0\\0&1\end{matrix}},
    H=\f{1}{\sqrt{2}}\p{\begin{matrix}1&1\\1&-1\end{matrix}},
    R_x\p{\f{\pi}{2}}=\f{1}{\sqrt{2}}\p{\begin{matrix}1&-i\\-i&1\end{matrix}}
};$$

\item Query $\Phi$ to obtain 
$\p{U^{\otimes n}}^\dagger
    \Phi\p{
        U^{\otimes n}
        \ketbra{i}
        \p{U^{\otimes n}}^\dagger
    }
U^{\otimes n}$;

\item Measure the qubits over the computational basis.
    Let the result be $i'$;

\item $S\leftarrow S\cup \{l\in [n]\mid i_l\neq i'_l\}$.
\end{itemize}
\vspace{-\topsep}
\item Return $S$.
\end{algorithmic}
\end{myalg}

\begin{myalg}
\caption{\algname{Junta-Channel-Tester}($\Phi, k, \vep$)}
\label{alg:tester}
\Input{Oracle access to quantum channel $\Phi\in C(X)$, $k$, $\vep$}	
\Output{``Yes'' or ``No''}
\begin{algorithmic}[1]
\item Let $S=\algname{Influence-Sample}(\Phi, 60(k+1) / \vep^2)$;
\item Output ``Yes'' if $|S|\le k$, or else output ``No''.
\end{algorithmic}
\end{myalg}

\begin{restatable}[Property of Algorithm~\ref{alg:tester}, Restatement of Theorem~\ref{thm:main1}]{theorem}{thmtestuptwo}\label{thm:testup2}
    Given quantum channel $\Phi\in C(X)$, 
        with probability at least $9/10$, 
        the algorithm $\algname{Junta-Channel-Tester}(\Phi,k,\vep)$ 
        outputs ``Yes'' if $\Phi$ is a $k$-junta, 
        and outputs ``No'' if $\Phi$ is $\vep$-far from any $k$-junta channel. 
    The algorithm makes $O\p{k/\vep^2}$ queries to the channel $\Phi$.
\end{restatable}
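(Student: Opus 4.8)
The plan is to verify three claims about \algname{Channel-Junta-Tester}: the query bound, completeness (it accepts every $k$-junta channel with probability $\ge 9/10$), and soundness (it rejects every channel $\vep$-far from all $k$-junta channels with probability $\ge 9/10$). The query count is routine bookkeeping: by Lemma~\ref{lem:influenceestimator} and Lemma~\ref{lemma:ABRdW}, one run of \algname{Quantum-GGT}$(k,d_l)$ composed with \algname{Influence-Estimator}$(\Phi,\cdot,\delta_l)$ costs $O\p{\sqrt{1+k/d_l}\cdot\sqrt{1/\delta_l}}$ queries, and with $d_l=2^l$ and $\delta_l=\vep^2/(2^{l+5}\log 400k)$ this is $O\p{\sqrt{(2^l+k)\log k}/\vep}$; multiplying by the $O(\log\log k)$ repetitions inside \algname{Tester-I} and summing over $l\in\{0,\dots,\lceil\log 200k\rceil\}$, with $\sum_l\sqrt{2^l+k}\le\sum_l\p{2^{l/2}+\sqrt k}=O(\sqrt k)+O(\sqrt k\log k)$, yields $O\p{\sqrt k\log^{3/2}k\log\log k/\vep}$ for all the \algname{Tester-I} calls; \algname{Tester-II} needs only $O(1)$ runs of \algname{Influence-Estimator}$(\Phi,\cdot,\vep^2/16k)$, i.e. $O(\sqrt k/\vep)$ queries, which is dominated. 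A union bound over the $\lceil\log 200k\rceil+2$ sub-testers, whose error probabilities have been driven to $O(1/\log k)$ and $O(1)$ respectively, keeps the total failure below $1/10$.

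For completeness, suppose $\Phi=\Phi_T\otimes I_{T^c}$ with $|T|=k$. Then $\inf_S[\Phi]=0$ whenever $S\cap T=\varnothing$, so the first alternative of Lemma~\ref{lemma:ABRdW} holds for every $d,\delta$, and (after its internal error reduction) each \algname{Tester-I} accepts with high probability. For \algname{Tester-II}, its subroutine can output $1$ only if \algname{Influence-Estimator} does, which by Lemma~\ref{lem:influenceestimator} forces $\inf_S[\Phi]>0$, i.e. $S\cap T\neq\varnothing$; since $S$ contains each coordinate independently with probability $1/k$, the subroutine accepts with probability at most $1-(1-1/k)^k\le 3/4$ for $k\ge 3$ (the cases $k\le 2$ being handled directly), so its estimate is at most $0.8$ and \algname{Tester-II} accepts.

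For soundness, fix $\Phi$ that is $\vep$-far from every $k$-junta; then $\Phi$ is not a $k$-junta, so by Theorem~\ref{thm:keythm} (with $\vep=0$) strictly more than $k$ coordinates have $\inf_i[\Phi]>0$. Write $J_\delta=\{i:\inf_i[\Phi]\ge\delta\}$. If some scale $l\le L:=\lceil\log 200k\rceil$ has $|J_{\delta_l}|\ge k+d_l$, the second alternative of Lemma~\ref{lemma:ABRdW} shows the corresponding \algname{Tester-I} rejects with high probability and I am done; so assume $|J_{\delta_l}|<k+2^l$ for all such $l$, and I will make \algname{Tester-II} reject. Let $T$ be a set of $k$ coordinates of largest influence (so $|T|=k$, and $\inf_j[\Phi]\le\inf_i[\Phi]$ whenever $i\in T$, $j\notin T$); then for each $l\le L$, either the $k$-th largest influence is below $\delta_l$, so $J_{\delta_l}\subseteq T$ and $J_{\delta_l}\cap T^c=\varnothing$, or $T\subseteq J_{\delta_l}$ and $|J_{\delta_l}\cap T^c|=|J_{\delta_l}|-k<2^l$; in all cases $|J_{\delta_l}\cap T^c|<2^l$. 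By Corollary~\ref{corr:keythm}, $\inf_{T^c}[\Phi]\ge\vep^2/4$. By Lemma~\ref{lemma:existsU}, $p(x):=\wPhi(x,x)$ is a probability distribution on $\ZZ_4^n$ with $\inf_S[\Phi]=\Pr_{x\sim p}[x_S\neq\0]$. For $i\in T^c$ set $a_i=\Pr_{x\sim p}[\,i=\min\{j\in T^c:x_j\neq 0\}\,]$ (and $a_i=0$ otherwise); these events are pairwise disjoint, so $\inf_S[\Phi]\ge\sum_{i\in S}a_i$, while $a_i\le\inf_i[\Phi]$ and $\sum_{i\in T^c}a_i=\inf_{T^c}[\Phi]\ge\vep^2/4$. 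Bucketing $\{i\in T^c:a_i\ge\delta_L\}$ by dyadic scale and using $|J_{\delta_l}\cap T^c|<2^l$ with $\delta_{l-1}=2\delta_l$ gives $\sum_{i:\,a_i\ge\delta_L}a_i<\sum_{l=1}^{L}2^l\delta_{l-1}=L\cdot\vep^2/(16\log 400k)\le\vep^2/16$, so $\sum_{i:\,a_i<\delta_L}a_i>3\vep^2/16$. Finally $\sum_{i\in S:\,a_i<\delta_L}a_i$ is a sum of independent variables bounded by $\delta_L$ with mean exceeding $3\vep^2/(16k)$, and since $\delta_L\le\vep^2/(6400k\log 400k)$ the mean exceeds the per-term bound by a factor $\Omega(\log k)$, so a Bernstein bound gives $\sum_{i\in S}a_i\ge\vep^2/(16k)$ except with probability $(400k)^{-\Omega(1)}<1/20$; hence $\Pr_S[\inf_S[\Phi]\ge\vep^2/16k]>19/20$, the subroutine inside \algname{Tester-II} outputs $1$ with probability $\ge\frac{9}{10}\cdot\frac{19}{20}>0.85$, its estimate exceeds $0.8$, and \algname{Tester-II} rejects.

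The hard part is the second case of soundness. The obvious choice $T=J_{\delta_0}$ yields only $\mathbb{E}_S[\inf_S[\Phi]]\ge\vep^2/(4k)$ (by subadditivity of influence over a random $k$-partition), from which Markov's inequality gives only a lower bound of order $\vep^2/k$ on $\Pr_S[\inf_S[\Phi]\ge\vep^2/(8k)]$ --- far too weak, since \algname{Tester-II}'s $0.8$ threshold, combined with \algname{Influence-Estimator}'s $9/10$ reliability, needs this probability to exceed $\approx 0.95$. The fix, following~\cite{ABRdW16_qtesting} and its unitary analogue in~\cite{CNY22_juntatesting}, is to take $T$ to be the $k$ \emph{globally} largest influences, so that the multi-scale sparsity bounds $|J_{\delta_l}|<k+2^l$ force the influence carried by the ``moderate'' coordinates of $T^c$ down to $\approx\vep^2/16$; the guaranteed mass $\vep^2/4$ on $T^c$ is then concentrated on coordinates of individually tiny influence, and that tininess is exactly what lets a Bernstein-type bound replace Markov and drive the failure probability to $k^{-\Omega(1)}$. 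Checking that the constants built into the algorithm (the $200k$ scales, $\delta_l=\vep^2/(2^{l+5}\log 400k)$, the $0.8$ and $0.05$ thresholds) leave slack in every inequality, and disposing of the small-$k$ corner cases, is where the remaining care is spent.
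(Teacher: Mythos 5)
Your proposal is correct and follows essentially the same route as the paper's proof (which itself adapts~\cite{ABRdW16_qtesting}): the same completeness argument via Lemma~\ref{lemma:ABRdW} and the $(1-1/k)^k$ bound for \algname{Tester-II}, the same soundness dichotomy between ``many moderately influential coordinates'' (caught by \algname{Tester-I} at some dyadic scale) and ``influence spread over individually tiny coordinates'' (caught by \algname{Tester-II} via Corollary~\ref{corr:keythm}, an additive underestimate of influence, and a second-moment concentration bound), and the same query accounting. Your $a_i$ (probability that $i$ is the first nonzero coordinate in $T^c$) is exactly the paper's $\uInf$ in disguise, your case split is the contrapositive of the paper's Claim on the tail sum $\sum_{j=k+1}^{200k}\inf_j[\Phi]$ obtained by the same dyadic bucketing, and your Bernstein step can be replaced by the paper's Chebyshev bound; these are bookkeeping variations, not a different argument.
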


An algorithm is a $(k,\vep)$-channel junta tester if it can distinguish whether the given channel is $k$-junta or is $\vep$-far from any $k$-junta channels. $(k,\vep)$-classical junta testers and $(k,\vep)$-unitary junta testers are defined similarly.

\begin{restatable}{lemma}{lemmalored}\label{lemma:lored}
    A $(k,\sqrt{\vep/2})$-channel junta tester is a $(k,\vep)$-classical junta tester.
\end{restatable}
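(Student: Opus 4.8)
The plan is to reduce classical $k$-junta testing to channel $k$-junta testing by feeding the tester the unitary channel $\Phi_g$, where $\Phi_g(\rho)=U_g\rho U_g^*$ and $U_g\ket x=(-1)^{g(x)}\ket x$; each oracle call to $\Phi_g$ is simulated with one query to $g$, so the query complexity is preserved. Two things must be verified: \emph{completeness}, that if $g$ is a $k$-junta Boolean function then $\Phi_g$ is a $k$-junta channel; and \emph{soundness}, that if $g$ is $\vep$-far (in Hamming distance) from every $k$-junta Boolean function then $D(\Phi_g,\Psi)\ge\sqrt{\vep/2}$ for every $k$-junta channel $\Psi$. Completeness is immediate: if $g$ depends only on the coordinates in $T$, then $U_g=V\otimes I_{T^c}$ with $V$ acting only on the qubits in $T$, so $\Phi_g=\Phi_V\otimes I_{T^c}$ is a $T$-junta channel and the channel tester accepts.

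For soundness I would argue the contrapositive: assume $\Psi$ is a $k$-junta channel with $D(\Phi_g,\Psi)<\sqrt{\vep/2}$; padding the junta set, we may take $\Psi=\Psi_T\otimes I_{T^c}$ with $|T|=k$. The first step is a Fourier computation. Expanding $U_g=\sum_{A\subseteq[n]}\widehat g(A)\,\sigma_{z(A)}$ in the Pauli basis, where $z(A)\in\Zfn$ has $3$'s exactly on $A$ and $(-1)^{g}=\sum_A\widehat g(A)\chi_A$ is the classical Fourier expansion, one sees that $\widehat{\Phi_g}$, as an $N^2\times N^2$ matrix, is the rank-one matrix $\vec g\,\vec g^{\top}$, where $\vec g\in\RR^{\Zfn}$ is the unit vector with $\vec g(z(A))=\widehat g(A)$ and $\vec g(x)=0$ for all other $x$. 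In particular $\inf_{T^c}[\Phi_g]=\sum_{A\not\subseteq T}\widehat g(A)^2=\norm{\vec g^{>T}}^2$, where $\vec g^{>T}$ is the restriction of $\vec g$ to the indices $z(A)$ with $A\not\subseteq T$; that is, the channel influence of $\Phi_g$ equals the classical Fourier weight of $(-1)^g$ on sets meeting $T^c$.

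The crucial step is the estimate $\inf_{T^c}[\Phi_g]\le 2D(\Phi_g,\Psi)^2$. Since $\Psi=\Psi_T\otimes I_{T^c}$ and the Fourier expansion of $I_{T^c}$ is supported on the identity Pauli string, $\widehat\Psi(x,y)=0$ whenever $x_{T^c}\neq\0$; hence $\widehat\Psi$ annihilates $\vec g^{>T}$ and maps $\vec g$ into the subspace $\{x:x_{T^c}=\0\}$, which is orthogonal to $\vec g^{>T}$, while $\vec g^{\le T}\perp\vec g^{>T}$ by disjoint support. Using $\widehat{\Phi_g}\vec g=\vec g$ (as $\vec g$ is the unit eigenvector of the rank-one $\widehat{\Phi_g}$) and the Fourier form of the distance $2D(\Phi_g,\Psi)^2=\norm{\widehat{\Phi_g}-\widehat\Psi}_2^2$, we get $2D(\Phi_g,\Psi)^2\ge\norm{(\widehat{\Phi_g}-\widehat\Psi)\vec g}^2=\norm{\vec g^{>T}+(\vec g^{\le T}-\widehat\Psi\vec g^{\le T})}^2=\norm{\vec g^{>T}}^2+\norm{\vec g^{\le T}-\widehat\Psi\vec g^{\le T}}^2\ge\inf_{T^c}[\Phi_g]$. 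Thus $\inf_{T^c}[\Phi_g]<\vep$, i.e. the Fourier weight of $(-1)^g$ outside $T$ is below $\vep$, and then the standard classical fact (see \cite{O14_ABF}) applies: rounding the $T$-averaged real function $G_T=\sum_{A\subseteq T}\widehat g(A)\chi_A$ to its sign gives a $T$-junta $h$ with $\Pr_x[g(x)\neq h(x)]\le\sum_{A\not\subseteq T}\widehat g(A)^2<\vep$ (wherever $g\neq h$ we have $(G-G_T)^2\ge1$, so Markov applies). Since $|T|=k$ this contradicts $g$ being $\vep$-far from $k$-juntas. Combining completeness and soundness, running the $(k,\sqrt{\vep/2})$-channel junta tester on $\Phi_g$ decides classical $k$-junta testing with parameter $\vep$, with the tester's own success probability and query count.

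The main obstacle I expect is exactly the bound $\inf_{T^c}[\Phi_g]\le 2D(\Phi_g,\Psi)^2$: naive estimates (Cauchy–Schwarz on the diagonal Fourier coefficients, or passing through the trace norm via $\norm{M}_1\le\sqrt{\mathrm{rank}}\,\norm{M}_2$) lose a factor polynomial in $N=2^n$ and are useless, so one must exploit both that $\widehat{\Phi_g}$ is rank one with a known eigenvector $\vec g$ — so that testing $\widehat{\Phi_g}-\widehat\Psi$ against the single vector $\vec g$ already captures the entire influence — and the block structure of a junta channel's Fourier matrix. I would also flag that Corollary~\ref{corr:keythm} points the wrong way here (it lower-bounds the influence by the distance to juntas), which is why the argument goes directly through the special form of $\Phi_g$ rather than quoting it.
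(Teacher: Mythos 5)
Your proposal is correct, but it takes a genuinely different route from the paper. The paper never touches the Fourier structure of $\Phi_g$ directly: it first proves a structural result (Lemma~\ref{lemma:lokey}, via the convexity argument of Lemma~\ref{L:border claim}) saying that the Boolean-function channel closest to a given $k$-junta channel can be taken to come from a $k$-junta Boolean function, separately computes $D(\Phi_f,\Phi_g)=2\sqrt{(1-d)d}$ with $d$ the Hamming distance (Lemma~\ref{L:lemma1}), and then combines the two by an averaging/triangle-inequality step, $D(\Phi',\Phi_f)\ge \tfrac12\p{D(\Phi',\Phi_f)+D(\Phi',\Phi_g)}\ge\tfrac12 D(\Phi_f,\Phi_g)\ge\sqrt{\vep/2}$. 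You instead exploit that $\wPhi_g$ is the rank-one matrix built from the Fourier coefficients of $(-1)^g$, test $\wPhi_g-\wPsi$ against its unit eigenvector, and use the block support of a $T$-junta channel's Fourier matrix to get $\inf_{T^c}[\Phi_g]\le 2D(\Phi_g,\Psi)^2$, after which the standard classical rounding of the $T$-averaged function finishes the argument; all steps check out (the factorization $\wPsi(x,y)=\wPsi_T(x_T,y_T)\,\delta_{x_{T^c},\0}\delta_{y_{T^c},\0}$, the orthogonality of the two components, and $\norm{M v}_2\le\norm{M}_2$ for a unit vector are exactly right), and you land on the same parameter $\sqrt{\vep/2}$. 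What each buys: your argument is self-contained within the paper's Fourier framework and in effect establishes a converse of Theorem~\ref{thm:keythm} (distance to a $T$-junta channel upper-bounds influence, with no dimension loss) for the special channels $\Phi_g$ -- you correctly flag that this converse is false in general without the rank-one structure, which is why Corollary~\ref{corr:keythm} cannot be quoted here; the paper's route, while needing the extra optimization machinery, yields the structural Lemma~\ref{lemma:lokey}, which mirrors Lemma~\ref{lemma:redkey} and powers the analogous unitary reduction in Section~\ref{sec:Ured}, so the same triangle-inequality template serves both lower-bound reductions.
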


Combining Lemma~\ref{lemma:lored} with the $\Omega(\sqrt{k})$ lower bound on testing $k$-junta boolean function proved by~\cite{BKT20_lowerbound}, we obtain an $\Omega(\sqrt{k})$ lower bound on testing $k$-junta channels. Our key technical lemma is as follows. Recall that $\Phi_g$ is defined in Section~\ref{sec:superoperators} for boolean function $g$.

\begin{restatable}{lemma}{lemmalokey}\label{lemma:lokey}
    For a $k$-junta channel $\Phi$, there exists a $k$-junta boolean function $g'$ satisfying that $D(\Phi,\Phi_{g'})=\min_{g} D(\Phi,\Phi_g)$, where the minimization is over all boolean functions $g:\{0,1\}^n\rightarrow\{0,1\}$.
\end{restatable}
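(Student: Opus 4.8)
The plan is to recast $\min_g D(\Phi,\Phi_g)$ as maximizing a positive semidefinite quadratic form over sign vectors, and then to exploit the product structure $\Phi=\Phi_T\otimes I_{T^c}$ via the elementary fact that a convex function on a box attains its maximum at a vertex.

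First I would compute the objective. Since $\Phi_g=\Phi_{U_g}$ with $U_g\ket{x}=(-1)^{g(x)}\ket{x}$, its Choi operator is rank one and PSD: $J(\Phi_g)=\ketbra{v_g}{v_g}$ with $\ket{v_g}=\sum_{a\in\{0,1\}^n}(-1)^{g(a)}\ket{a}\ket{a}$, so $\norm{J(\Phi_g)}_2^2=N^2$ for every $g$. Expanding the squared distance and using that $J(\Phi)$ is Hermitian,
\[
  D(\Phi,\Phi_g)^2=\frac{1}{2N^2}\Big(\norm{J(\Phi)}_2^2+N^2-2\,\bra{v_g}J(\Phi)\ket{v_g}\Big),
\]
so minimizing over $g$ is the same as maximizing $\bra{v_g}J(\Phi)\ket{v_g}$. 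Writing $\chi(a)=(-1)^{g(a)}\in\{\pm1\}$ and $M_{a,b}:=\bra{a,a}J(\Phi)\ket{b,b}=\bra{a}\Phi(\ketbra{a}{b})\ket{b}$, this quantity equals $Q(\chi):=\sum_{a,b}\chi(a)\chi(b)M_{a,b}$. The key point is that $M$ is the compression $\iota^*J(\Phi)\iota$ of the PSD operator $J(\Phi)$ by the isometry $\iota:\ket{b}\mapsto\ket{b}\ket{b}$, hence $M\succeq 0$; in particular $Q(\chi)\in\RR$ and a maximizer of $Q$ over the finite set $\{\pm1\}^{\{0,1\}^n}$ exists.

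Next I would use $\Phi=\Phi_T\otimes I_{T^c}$ with $\abs{T}=k$. Writing $a=(a_T,a_{T^c})$, one has $\Phi(\ketbra{a}{b})=\Phi_T(\ketbra{a_T}{b_T})\otimes\ketbra{a_{T^c}}{b_{T^c}}$, so
\[
  M_{a,b}=\bra{a_T}\Phi_T(\ketbra{a_T}{b_T})\ket{b_T}=:N_{a_T,b_T},
\]
which is independent of the coordinates outside $T$; here $N$ (a $2^k\times 2^k$ matrix indexed by $\{0,1\}^T$) is the analogue of $M$ for the sub-channel $\Phi_T$, PSD for the same reason. Consequently $Q(\chi)=\sum_{a_T,b_T}N_{a_T,b_T}\,w(a_T)\,w(b_T)=w^{\top}Nw$, where $w(a_T):=\sum_{a_{T^c}\in\{0,1\}^{T^c}}\chi(a_T,a_{T^c})$, so $Q(\chi)$ depends on $\chi$ only through $w=w(\chi)\in\ZZ^{2^k}$, all of whose entries lie in $[-L,L]$ for $L:=2^{n-k}$. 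Since $N\succeq0$, the map $w\mapsto w^{\top}Nw$ is convex, hence attains its maximum over the box $[-L,L]^{2^k}$ at a vertex $w\in\{-L,L\}^{2^k}$; and every such vertex equals $w(\chi)$ for the sign vector $\chi(a_T,a_{T^c})=w(a_T)/L$, which depends only on the coordinates in $T$. Since every $w(\chi)$ lies in the box and every vertex of the box arises from such a $k$-junta $\chi$, we get $\max_\chi Q(\chi)=\max_{w\in[-L,L]^{2^k}}w^{\top}Nw$, attained at a $k$-junta $\chi$; the Boolean function $g'$ defined by $(-1)^{g'(a)}=\chi(a)$ is then a $k$-junta with $D(\Phi,\Phi_{g'})=\min_g D(\Phi,\Phi_g)$.

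The two structural facts that need care are that $M$ (and hence $N$) is positive semidefinite — this is what upgrades the bilinear form from merely Hermitian to convex in the reduced weights $w$, and is what makes the reduction succeed — and that $M$ does not depend on the coordinates outside $T$ once $\Phi$ is a $T$-junta; both follow directly from the definition of the Choi operator and of a $T$-junta channel. After that, the only real content is the vertex-maximization step, which is elementary, so I do not anticipate a substantive obstacle.
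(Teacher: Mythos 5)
Your proposal is correct and follows essentially the same route as the paper: both reduce the minimization to maximizing a quadratic form $\sum_{a',b'} r_{a',b'}\, w(a')w(b')$ in the folded weights $w(a')=\sum_{a''}(-1)^{g(a',a'')}\in[-2^{n-k},2^{n-k}]$ and then push the maximizer to a vertex of the box, which corresponds to a $k$-junta sign function. The only cosmetic difference is that you establish full positive semidefiniteness of the compressed Choi matrix (hence joint convexity over the box), while the paper only uses nonnegativity of the diagonal entries and coordinate-wise convexity (its Lemma~\ref{L:border claim}); both suffice.
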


With the assistance of this result around the distance structure of $k$-junta channels, we obtain the desired reduction in Lemma~\ref{lemma:lored}. See Appendix~\ref{app:testing} for the detailed proofs.

\subsection{Reduction from $k$-Junta Unitary Testing}\label{sec:Ured}

To show our distance function induced by Fourier analysis over superoperators is a natural extension of the distance function on unitaries discussed in~\cite{MdW16_survey}, we provide an extra reduction from $k$-junta unitary testing. It gives an alternative proof of our testing lower bound. All the proofs can be found in Appendix~\ref{app:testing}.

\begin{restatable}[Reduction from Testing $k$-Junta Unitaries to Testing $k$-Junta Channels]{lemma}{lemmaredtwo}\label{lemma:red2}
   	A $(k,\vep)$-channel junta tester is naturally a $(k,\vep/2)$-unitary junta tester.
\end{restatable}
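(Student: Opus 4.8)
The plan is to feed the $(k,\vep)$-channel junta tester the unitary channel $\Phi_U(\rho)=U\rho U^{*}$ and output its verdict. Each query to $\Phi_U$ is simulated by one coherent application of the $U$-oracle, so the query count is unchanged. Completeness is immediate: if $U=U_T\otimes I_{T^{c}}$ is a $k$-junta unitary then $\Phi_U=\Phi_{U_T}\otimes I_{T^{c}}$ is a $T$-junta channel, so the channel tester accepts with probability at least $9/10$.

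The work is in the soundness direction, for which I would prove the contrapositive: if $D(\Phi_U,\Psi)<\vep$ for some $T$-junta channel $\Psi=\Psi_T\otimes I_{T^{c}}$ with $\abs{T}\le k$, then some $T$-junta unitary $V$ satisfies $\dist(U,V)<\vep/2$. The starting observation is that $\Phi_U$ has the \emph{pure} Choi state $v(\Phi_U)=\ketbra{v(U)}$, while $v(\Psi)=v(\Psi_T)\otimes\ketbra{\Omega_{T^{c}}}$ (after reordering registers), so by the definition of $D$,
$$D(\Phi_U,\Psi)^{2}=\f{1}{2}\norm{v(\Phi_U)-v(\Psi)}_2^{2}=\f{1}{2}\left(1+\Tr v(\Psi_T)^{2}-2F\right),\qquad F:=\langle v(U)|\, v(\Psi)\, |v(U)\rangle .$$
Since $\Tr v(\Psi_T)^{2}\ge\lambda_{\max}(v(\Psi_T))^{2}\ge F^{2}$, this forces $(1-F)^{2}\le 2\,D(\Phi_U,\Psi)^{2}$, so $F$ is close to $1$; a short spectral estimate then shows $\ket{v(U)}$ is close, up to a phase, to the top eigenvector $\ket{\phi}$ of $v(\Psi)$, which has the product form $\ket{\psi_\star}\otimes\ket{\Omega_{T^{c}}}$. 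The key structural point is that trace preservation of $\Psi_T$ (i.e. $\Tr_{\mathrm{out}}v(\Psi_T)=I/2^{k}$), together with $\lambda_{\max}(v(\Psi_T))$ being close to $1$, forces the input-register marginal of $\ket{\psi_\star}$ to be close to maximally mixed; hence $\ket{\psi_\star}=\ket{v(A)}$ for an operator $A$ on the $T$-qubits that is close to unitary, and taking $V_0$ to be the polar-decomposition unitary of $A$ and $V:=V_0\otimes I_{T^{c}}$ gives a $T$-junta unitary with $\abs{\langle v(U)|v(V)\rangle}$ close to $1$. Tracking constants through these estimates (and using Lemma~\ref{lemma:reldis} to compare $D$ on unitary channels with $\dist$ on unitaries) yields $\dist(U,V)<\vep/2$.

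The main obstacle is exactly this quantitative transfer in the soundness step: closeness of the unitary channel $\Phi_U$ to an \emph{arbitrary} junta channel must be converted into closeness of $U$ to an \emph{actual} junta unitary. A generic junta channel need not be unitary, so one must extract its ``unitary part''; the point is that the rank-one structure of $v(\Phi_U)$ and the trace-preservation constraint on $v(\Psi_T)$ are precisely what make this extraction lossless enough to avoid the $\mathrm{poly}(2^{k})$ blow-ups that a naive Frobenius- or trace-norm bound on $\norm{v(\Phi_U)-v(\Psi)}$ would incur. As a byproduct, the same computation can be phrased via influence: from $\widehat{\Phi_U}(x,x)=\abs{\widehat U(x)}^{2}$ one gets $\inf_S[\Phi_U]=\inf_S[U]$, so $D(\Phi_U,\Psi)<\vep$ first yields a small $\inf_{T^{c}}[\Phi_U]$ and then a nearby junta unitary.
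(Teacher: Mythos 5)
Completeness and the query accounting are fine, but the soundness statement you set out to prove is false, so the gap is not merely in the unfinished estimates. You want: $D(\Phi_U,\Psi)<\vep$ for some $k$-junta channel $\Psi$ implies $\dist(U,V)<\vep/2$ for some $k$-junta unitary $V$. Already for $k=0$ this fails: take $U=\mathrm{diag}(e^{i\theta},e^{-i\theta})\otimes I^{\otimes(n-1)}$ with $\theta$ small. The only $0$-junta channel is the identity channel and the only $0$-junta unitaries are the global phases $e^{i\phi}I$; one computes $D(\Phi_U,\Phi_I)=\sin\theta$ while $\dist(U,I)=\sqrt{1-\cos\theta}\approx\sin\theta/\sqrt{2}$. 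Choosing $\vep$ slightly above $\sin\theta$, your hypothesis holds, yet every $0$-junta unitary is at distance about $0.7\,\vep>\vep/2$ from $U$ (a two-qubit example with $U=e^{i\theta Z\otimes Z}$ gives the same for $k=1$). Lemma~\ref{lemma:reldis} already signals this: even when $\Psi$ is itself a junta-unitary channel, $\dist$ and $D$ differ by at most a factor $\sqrt{2}$, so converting channel-closeness into unitary-closeness can only lose a constant, never gain a factor $2$. Accordingly, what the paper's proof actually establishes is the implication with the loss on the channel side: if $U$ is $\vep$-far from every $k$-junta unitary, then $\Phi_U$ is $\vep/2$-far from every $k$-junta channel; your contrapositive takes the lemma's phrasing at face value, and that is exactly where it breaks.

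Even with the constant corrected, your route has a quantitative hole and is much heavier than the paper's argument. From $2D^2=1+\Tr v(\Psi)^2-2F$ you only get $1-F\le\sqrt{2}\,D$, and this linear (not quadratic) dependence on $D$ is unavoidable for mixed $\Psi$ (e.g.\ $\Psi=(1-p)\Phi_V+p\,\Xi$ with $U=V$ has $1-F\sim D\sim p$). Since $\dist(U,V)^2$ is of order $1-\abs{\braket{v(U)}{v(V)}}$, chasing the top eigenvector of $v(\Psi)$ along your chain yields only $\dist(U,V)=O\bigl(\sqrt{D(\Phi_U,\Psi)}\bigr)$, not a linear bound, and the marginal/polar-decomposition extraction still has to be carried out with further constant losses. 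The paper bypasses all of this: by Lemma~\ref{lemma:redkey} the unitary channel closest to a $k$-junta channel $\Phi'$ may be taken to be a $k$-junta unitary $V$, so $D(\Phi',\Phi_V)\le D(\Phi',\Phi_U)$, and the triangle inequality together with Lemma~\ref{lemma:reldis} gives $D(\Phi_U,\Phi')\ge\frac{1}{2}D(\Phi_U,\Phi_V)\ge\frac{1}{2}\dist(U,V)\ge\vep/2$ whenever $U$ is $\vep$-far from all $k$-junta unitaries. If you want to keep a Choi-state argument, the realistic target is a linear bound $\dist(U,V)\le c\cdot D(\Phi_U,\Psi)$ for a universal constant $c$ (the triangle-inequality route gives $c=2$, and the example above shows $c<1/\sqrt{2}$ is impossible), which then yields the reduction with correspondingly adjusted parameters.
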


The key technical result is as follows:


\begin{restatable}{lemma}{lemmaredkey}\label{lemma:redkey}
	For every $k$-junta channel $\Phi'$, there exists a $k$-junta unitary $V$, such that $D(\Phi',\Phi_V)=\min_{V} D(\Phi',\Phi_V)$, where the minimization is over all unitaries $V$.
\end{restatable}

\section{Learning $k$-Junta Quantum Channels}\label{sec:learning}

In this section, we prove a nearly tight bound on $k$-junta learning problem. 
Our algorithm is inspired by the learning algorithms in~\cite{AS07_qtesting} 
    and~\cite{CNY22_juntatesting}. 
As mentioned above, we use \algname{Influence-Sample} to replace 
    \algname{Pauli-Sample} to reduce operation complexity,
    and this replacement results in only a constant-factor decrease in efficiency.
We describe the algorithm \algname{Junta-Channel-Learner} as follows.

\begin{myalg}
\caption{\algname{Junta-Channel-Learner}($\Phi, k, \vep$)}
\label{alg:learner}
\Input{Oracle access to $k$-junta channel $\Phi\in C(X)$, $\vep$}	
\Output{A classical description of $\Phi$ in the form of its Choi representation, a $4^n\times 4^n$ matrix}
\begin{algorithmic}[1]
\item Let $S=\algname{Influence-Sample}(\Phi, O(k\log k / \vep^2))$;
\item Set $t=O(4^k/\vep^2)$. Call $\algname{Quantum-State-Preparation}(\Phi, S)$ for $10t$ times to obtain at least $t$ copies of quantum state $\psi_S$;
\item Return $\algname{Channel-Tomography}(\psi_S^{\otimes t}, \vep)\otimes v(I^{S^c})$ as the result.
\end{algorithmic}
\end{myalg}

\begin{myalg}
\caption{\algname{Quantum-State-Preparation}($\Phi, S\subseteq[n]$)}
\label{alg:qsp}
\Input{Oracle access to $k$-junta channel $\Phi\in C(X)$, $\gamma$}	
\Output{A $2|S|$-qubit quantum state, or ``error''}
\begin{algorithmic}[1]
\item Prepare the state $v(\Phi)$;
\item Measure $2\abs{S^c}$ qubits in $S^c$ onto
 the Pauli basis $\{\ket{\sigma_x}\}_{x\in \ZZ_4^{\abs{S^c}}}$;
\item If the measurement result is $\0^{S^c}$, return the untouched $2\abs{S}$ qubits. Otherwise, return ``error''.
\end{algorithmic}
\end{myalg}

\begin{myalg}
\caption{\algname{Channel-Tomography}($\psi^{\otimes O(4^k/\vep^2)}$, $\vep$)}
\label{alg:ct}
\Input{Independent copies of $\psi$ and $\vep$, enough for $\algname{Tomography}$}	
\Output{A classical description of $\psi$}
\begin{algorithmic}[1]
\item Run $\algname{Tomography}(\psi^{\otimes O(4^k/\vep^2)}, 0.04\vep)$ to obtain a description of state $\psi$;
\item Find out, by only local calculation, the Choi state closest to $\psi$ and return the description.
\end{algorithmic}
\end{myalg}

\begin{restatable}[Property of Algorithm~\ref{alg:learner}, Restatement of Theorem~\ref{thm:main1}]{theorem}{thmplearner}\label{thm:plearner}
	Given oracle access to $k$-junta channel $\Phi$, with probability at least $9/10$, $\algname{Junta-Channel-Learner}(\Phi,k,\vep)$ outputs the description of quantum channel $\Psi$ such that $D(\Phi,\Psi)\le \vep$. Furthermore, this algorithm makes $O(4^k/\vep^2)$ queries.
\end{restatable}

As for the $k$-junta channel learning lower bound, recall Lemma~\ref{lemma:reldis} shows that our distance function over channels is equivalent to the distance between unitaries used in~\cite{CNY22_juntatesting}, up to a constant factor, it is very natural to reduce learning $k$-junta unitaries to learning $k$-junta channels, and therefore the following lower bound follows.

\begin{theorem}[Lower Bound on Learning $k$-Junta Channels]\label{thm:ll}
	Any algorithm learning $k$-junta channels within precision $\vep$ under $D(\cdot,\cdot)$ requires $\Omega(4^k\log(1/\vep)/k)$ queries.
\end{theorem}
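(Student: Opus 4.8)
The plan is to reduce the problem of learning $k$-junta unitaries to that of learning $k$-junta channels, and then invoke the $\Omega(4^k\log(1/\vep)/k)$ lower bound for learning $k$-junta unitaries under the distance $\dist(\cdot,\cdot)$ of Eq.~\eqref{eq:dU}, which is established in~\cite{CNY22_juntatesting}. The bridge between the two distance functions is Lemma~\ref{lemma:reldis}, which shows that $D(\Phi_U,\Phi_V)$ and $\dist(U,V)$ agree up to a factor of $\sqrt{2}$. So morally the two learning problems have the same query complexity, and it suffices to turn a channel learner into a unitary learner with a matching query count.

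Concretely, suppose $\algname{L}$ is an algorithm that, given oracle access to any $k$-junta channel, makes $Q(k,\eta)$ queries and outputs a description of a channel that is $\eta$-close in $D(\cdot,\cdot)$ with probability at least $9/10$. I would build a $k$-junta unitary learner as follows. Given oracle access to a $k$-junta unitary $U=U_T\otimes I_{T^c}$ with $|T|=k$, a single invocation of the unitary oracle on the input register implements exactly one query to the channel $\Phi_U$, which by definition equals the $T$-junta channel $\Phi_{U_T}\otimes I_{T^c}$; hence each channel query made by $\algname{L}$ costs exactly one unitary query, even when the channel is applied to a subsystem of an entangled input. Running $\algname{L}$ with precision parameter $\eta=\vep/2$ therefore produces, using $Q(k,\vep/2)$ unitary queries, a description of a channel $\Psi$ with $D(\Phi_U,\Psi)\le\vep/2$ with probability at least $9/10$. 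Since a lower bound argument only cares about query complexity, we may now perform arbitrary classical post-processing: compute a $k$-junta unitary $V$ minimizing $D(\Psi,\Phi_W)$ over all $k$-junta unitaries $W$ (the minimum is attained because the set of $k$-junta unitaries is a finite union of compact sets and $W\mapsto D(\Psi,\Phi_W)$ is continuous), and output $V$.

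For the analysis, note that $U$ itself is feasible in this minimization, so $D(\Psi,\Phi_V)\le D(\Psi,\Phi_U)\le\vep/2$; combining this with the triangle inequality and the left inequality of Lemma~\ref{lemma:reldis} gives
\[
\dist(U,V)\le D(\Phi_U,\Phi_V)\le D(\Phi_U,\Psi)+D(\Psi,\Phi_V)\le\vep .
\]
Thus the constructed algorithm learns $k$-junta unitaries to precision $\vep$ under $\dist(\cdot,\cdot)$ using $Q(k,\vep/2)$ queries, so $Q(k,\vep/2)=\Omega(4^k\log(1/\vep)/k)$ by~\cite{CNY22_juntatesting}. Renaming the precision parameter (note $\log(1/\eta)=\Theta(\log(1/(2\eta)))$) yields $Q(k,\vep)=\Omega(4^k\log(1/\vep)/k)$, which is the claim.

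I expect the genuinely load-bearing points to be bookkeeping rather than conceptual: first, confirming that the unitary-oracle model simulates an arbitrary channel query to $\Phi_U$ at unit cost, including when the channel learner uses ancillas and entanglement; and second, making sure the cited unitary lower bound indeed retains the $\log(1/\vep)$ dependence. The one place that might look like a gap — that the description output by $\algname{L}$ is an arbitrary channel $\Psi$ rather than a unitary channel — is handled by the post-processing step, which projects $\Psi$ back onto the family of junta-unitary channels at no query cost; no new Fourier-analytic machinery is needed for this theorem.
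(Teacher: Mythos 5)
Your proposal is correct and follows essentially the same route as the paper: the paper also obtains this lower bound by reducing learning $k$-junta unitaries to learning $k$-junta channels, using Lemma~\ref{lemma:reldis} to transfer the $\Omega(4^k\log(1/\vep)/k)$ unitary bound of~\cite{CNY22_juntatesting} to the channel distance $D(\cdot,\cdot)$. Your write-up merely makes explicit the details the paper leaves implicit (unit-cost simulation of channel queries by unitary queries and the rounding of the learned channel to a nearest junta-unitary channel), and these steps are sound.
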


\section{Conclusion}\label{sec:summary}

We exhibit two algorithms, one for testing $k$-junta channels and one for learning $k$-junta channels and lower bounds respectively. The $k$-junta channel learning algorithm is nearly optimal. Our algorithms generalize the work~\cite{AS07_qtesting,CNY22_juntatesting} about testing and learning $k$-junta unitaries and $k$-junta boolean function. To design the algorithms and prove the lower bounds, we introduce the Fourier analysis over the space of superoperators, which extends the Fourier analysis over operators in~\cite{MO10_QBF}. As~\cite{MdW16_survey} mentioned, there was not much work on testing the properties of quantum channels. We expect more applications in designing algorithms for testing and learning quantum channels through the Fourier analysis presented in this paper.

\section*{Acknowledgement}
We would thank Jingquan Luo for pointing out 
    an error in the previous version of this work. 
We thank the anonymous reviewers for their careful 
    reading and helpful comments. 
We also would like to express our gratitude for the insightful discussions 
    with Eric Blais and Nengkun Yu.
The first author would also like to extend special thanks to Minglong Qing, 
    Mingnan Zhao and Haochen Xu for their invaluable support in 
    problem-solving and the writing of this paper. 
This work was supported by National Natural Science Foundation of China 
    (Grant No. 62332009, 61972191) and Innovation Program for Quantum Science and
    Technology (Grant No. 2021ZD0302900).

\bibliographystyle{alpha}
\bibliography{bib/juntabib}

\appendix

\section{Properties on Kraus and Choi Representations}\label{app:KCR}

In this section, we list some basic properties of Kraus and Choi representations, whose proofs can be found in Section 2.2 in~\cite{Wat18_TQI}.

\begin{fact}\label{fact:property_KCP} Given superoperators $\Phi\in T(X),\Phi'\in T(X')$, it holds that
\begin{enumerate}
	\item $\Phi$ is completely positive if and only if it has a Kraus representation $\Phi(\rho)=\sum_{s\in \Sgm} A_s\rho A_s^*$. It is trace preserving if and only if its Kraus representation $\Phi(\rho)=\sum_{s\in \Sgm} A_s\rho B_s^*$. satisfies that $\sum_{s\in \Sgm}B_s^*A_s=I$;
	\item  $\Phi$ is completely positive if and only if $J(\Phi)$ is PSD. It is trace preserving if and only if $\Tr_{X_1} J(\Phi)=I_{X_2}$, where $J$ is viewed as a map from $L(X)$ to $L(X_1)\otimes L(X_2)$ with $X_1=X_2=X$;
	\item If $\Phi(\rho)=\sum_{s\in \Sgm} A_s\rho B_s^*$, we have
        $$
            J(\Phi) = \sum_{s\in \Sgm} \vect{A_s}\vect{B_s}^*;
        $$
    \item  $J(\Phi\otimes\Phi')=J(\Phi)\otimes J(\Phi')$.
\end{enumerate}
	
\end{fact}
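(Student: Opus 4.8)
The plan is to prove part~3 first and to obtain the other three parts from it using only elementary facts about the stretching operation $A\mapsto\vect{A}$. The facts I would invoke repeatedly are $(A\otimes I)\vect{I}=\vect{A}$ and $\vect{A}^{*}\vect{B}=\Tr\big(A^{*}B\big)$, each verified by expanding in the computational basis, together with the observation that $A\mapsto\vect{A}$ is a linear bijection from $L(X)$ onto $X\otimes X$. I would also record once and for all that the Choi map $J\colon T(X)\to L(X\otimes X)$ is a linear bijection: it is injective because $\Phi(\ketbra{a}{b})$ is precisely the $(a,b)$-block of $J(\Phi)$ (so $J(\Phi)$ determines $\Phi$ on all $\ketbra{a}{b}$, hence everywhere by linearity), and surjective by part~3 (given $M\in L(X\otimes X)$, write $M=\sum_{s}\ket{u_{s}}\bra{v_{s}}$, define $A_{s},B_{s}$ by $\vect{A_{s}}=\ket{u_{s}}$ and $\vect{B_{s}}=\ket{v_{s}}$, and then $\rho\mapsto\sum_{s}A_{s}\rho B_{s}^{*}$ has Choi operator $M$). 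This bijectivity is the one genuinely structural ingredient the argument needs.

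For part~3, I would begin from the identity $\sum_{a,b}\ketbra{a}{b}\otimes\ketbra{a}{b}=\vect{I}\vect{I}^{*}$, so that $J(\Phi)=(\Phi\otimes I)(\vect{I}\vect{I}^{*})$ directly from the definition. Since $(\Phi\otimes I)(M)=\sum_{s}(A_{s}\otimes I)\,M\,(B_{s}\otimes I)^{*}$ for every $M$ whenever $\Phi(\rho)=\sum_{s}A_{s}\rho B_{s}^{*}$, applying this to $M=\vect{I}\vect{I}^{*}$ and using $(A_{s}\otimes I)\vect{I}=\vect{A_{s}}$ on the left (and its adjoint on the right) yields $J(\Phi)=\sum_{s}\vect{A_{s}}\vect{B_{s}}^{*}$. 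This is the only computation I would write out in full.

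Granting part~3, parts~1 and~2 follow quickly. The implication ``Kraus form with $B_{s}=A_{s}$ $\Rightarrow$ completely positive'' is immediate, since $\rho\mapsto\sum_{s}A_{s}\rho A_{s}^{*}$ and each map $\tau\mapsto\sum_{s}(A_{s}\otimes I_{d})\,\tau\,(A_{s}\otimes I_{d})^{*}$ preserve positive semidefiniteness. Conversely, if $\Phi$ is completely positive then $J(\Phi)=(\Phi\otimes I_{X})(\vect{I}\vect{I}^{*})\succeq 0$ because $\vect{I}\vect{I}^{*}\succeq 0$ --- this is the forward half of part~2, obtained straight from the definition of complete positivity --- and then a spectral decomposition $J(\Phi)=\sum_{s}\vect{A_{s}}\vect{A_{s}}^{*}$ combined with part~3 and the injectivity of $J$ gives $\Phi(\rho)=\sum_{s}A_{s}\rho A_{s}^{*}$, which settles the completely-positive clause of part~1 and the reverse half of part~2 at once. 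For the trace-preservation clauses I would use that $\Tr\big(\Phi(\rho)\big)=\Tr\big(M\rho\big)$ with $M=\sum_{s}B_{s}^{*}A_{s}$, which equals $\Tr\rho$ for every $\rho$ exactly when $M=I$, that is, $\sum_{s}B_{s}^{*}A_{s}=I$, equivalently $\sum_{s}A_{s}^{*}B_{s}=I$ after taking adjoints; and, writing $\Tr_{X_{1}}J(\Phi)=\sum_{a,b}\Tr\big(\Phi(\ketbra{a}{b})\big)\ketbra{a}{b}$, a comparison of $(a,b)$-blocks with $I_{X_{2}}$ shows $\Tr_{X_{1}}J(\Phi)=I_{X_{2}}$ iff $\Tr\big(\Phi(\ketbra{a}{b})\big)=\delta_{ab}$ for all $a,b$, which by linearity is precisely trace preservation.

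For part~4 I would argue straight from the definition: expanding $J(\Phi\otimes\Phi')=\sum_{a,a',b,b'}\Phi(\ketbra{a}{b})\otimes\Phi'(\ketbra{a'}{b'})\otimes\ketbra{a}{b}\otimes\ketbra{a'}{b'}$ and swapping the second and third tensor factors regroups this as $\big(\sum_{a,b}\Phi(\ketbra{a}{b})\otimes\ketbra{a}{b}\big)\otimes\big(\sum_{a',b'}\Phi'(\ketbra{a'}{b'})\otimes\ketbra{a'}{b'}\big)=J(\Phi)\otimes J(\Phi')$, under the canonical reordering of the output and copy registers; alternatively one can fix Kraus forms for $\Phi$ and $\Phi'$ and push them through part~3, at the price of carrying a swap isometry through the identity for $\vect{A\otimes C}$. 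None of the four statements is deep --- all are standard and appear in Section~2.2 of~\cite{Wat18_TQI} --- so the only real obstacle is keeping the vec/tensor bookkeeping and the direction of the adjoints straight; the single conceptual point, used in the step ``PSD Choi operator $\Rightarrow$ completely positive channel'', is the bijectivity of $J$ recorded at the start.
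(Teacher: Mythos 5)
The paper gives no proof of this fact; it simply defers to Section~2.2 of~\cite{Wat18_TQI}. Your argument is correct and is essentially the standard textbook derivation found there: part~3 via $J(\Phi)=(\Phi\otimes I)\p{\vect{I}\vect{I}^*}$ and $(A\otimes I)\vect{I}=\vect{A}$, parts~1--2 via the bijectivity of $J$ together with a spectral decomposition of the PSD Choi operator, the trace conditions via $\Tr\Phi(\rho)=\Tr\p{\sum_s B_s^*A_s\,\rho}$ and blockwise comparison of $\Tr_{X_1}J(\Phi)$ with $I_{X_2}$, and part~4 by regrouping tensor factors. No gaps.
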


\section{Fourier Basis of Superoperators Is Well-defined}\label{app:wd}

Here we list some basic properties of the inner product and the norm introduced in Section~\ref{sec:fourier}, which are easy to verify by the definitions.
\begin{fact}[Properties of Inner Product and Norm]\label{fact:property_ipn}
\begin{enumerate}
	\item Given $\Phi(\rho)=A\rho B^*$ and $\Psi(\rho)=C\rho D^*$, we have $\ip{\Phi,\Psi} = \ip{A,C}\cdot \ip{D,B}$;
	\item For $\Phi(\rho)=A\rho B^*$, we have $\norm{\Phi}=\sqrt{\ip{\Phi,\Phi}}=\sqrt{\ip{A,A}\cdot \ip{B,B}}=\norm{A}_2\cdot \norm{B}_2$;
	\item Suppose $\Phi = \Phi_1\otimes \Phi_2$ and $\Psi = \Psi_1 \otimes \Psi_2$. We have $\ip{\Phi,\Psi}=\ip{\Phi_1, \Psi_1}\cdot \ip{\Phi_2, \Psi_2}$.
\end{enumerate}
	
\end{fact}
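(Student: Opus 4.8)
All three items unwind directly from the definition $\ip{\Phi,\Psi}=\ip{J(\Phi),J(\Psi)}=\Tr J(\Phi)^*J(\Psi)$, the formula for $J$ on an elementary map (Fact~\ref{fact:property_KCP}, item 3), the tensor identity $J(\Phi\otimes\Phi')=J(\Phi)\otimes J(\Phi')$ (item 4), and the elementary ``vec'' identity $\vect{A}^*\vect{C}=\Tr(A^*C)=\ip{A,C}$, which I would record first. The last identity follows by expanding $\vect{A}=(A\otimes I)\sum_i\ket{i,i}$: then $\vect{A}^*\vect{C}=\sum_{i,j}\bra{i,i}(A^*C\otimes I)\ket{j,j}=\sum_i\bra{i}A^*C\ket{i}=\Tr(A^*C)$.

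For item 1, the plan is: since $\Phi(\rho)=A\rho B^*$ and $\Psi(\rho)=C\rho D^*$, Fact~\ref{fact:property_KCP}(3) gives $J(\Phi)=\vect{A}\vect{B}^*$ and $J(\Psi)=\vect{C}\vect{D}^*$. Hence
$$
\ip{\Phi,\Psi}=\Tr\p{\p{\vect{A}\vect{B}^*}^*\vect{C}\vect{D}^*}=\Tr\p{\vect{B}\,\p{\vect{A}^*\vect{C}}\,\vect{D}^*}=\p{\vect{A}^*\vect{C}}\p{\vect{D}^*\vect{B}},
$$
where in the last step I pull out the scalar $\vect{A}^*\vect{C}$ and use $\Tr(\vect{B}\vect{D}^*)=\vect{D}^*\vect{B}$. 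Applying the vec identity twice turns this into $\ip{A,C}\cdot\ip{D,B}$. Item 2 is then the special case $C=A$, $D=B$: item 1 yields $\ip{\Phi,\Phi}=\ip{A,A}\ip{B,B}=\norm{A}_2^2\norm{B}_2^2$, and taking square roots (both factors are nonnegative reals) gives $\norm{\Phi}=\norm{A}_2\norm{B}_2$.

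For item 3, the plan is to use $J(\Phi)=J(\Phi_1)\otimes J(\Phi_2)$ and $J(\Psi)=J(\Psi_1)\otimes J(\Psi_2)$ from Fact~\ref{fact:property_KCP}(4), so that
$$
\ip{\Phi,\Psi}=\Tr\p{\p{J(\Phi_1)^*\otimes J(\Phi_2)^*}\p{J(\Psi_1)\otimes J(\Psi_2)}}=\Tr\p{J(\Phi_1)^*J(\Psi_1)}\cdot\Tr\p{J(\Phi_2)^*J(\Psi_2)},
$$
using $\Tr(P\otimes Q)=\Tr P\cdot\Tr Q$; the right-hand side is $\ip{\Phi_1,\Psi_1}\cdot\ip{\Phi_2,\Psi_2}$ by definition. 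The only subtlety is that $J(\Phi_1\otimes\Phi_2)$ lives in $L(X\otimes X)$ whereas $J(\Phi_1)\otimes J(\Phi_2)$ naturally lives in the reordered space $L(X_1\otimes X_1\otimes X_2\otimes X_2)$; since the trace is invariant under permuting tensor factors, this reshuffling does not affect any of the identities above, and Fact~\ref{fact:property_KCP}(4) already fixes the convention. There is no real obstacle here — the ``hard part'' is purely the bookkeeping of adjoints, the scalar-pull-out in item 1, and the tensor-factor reordering in item 3; everything else is a one-line computation.
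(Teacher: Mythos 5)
Your proof is correct and follows exactly the route the paper intends: the paper states Fact~\ref{fact:property_ipn} without proof, remarking only that it is ``easy to verify by the definitions,'' and your verification via $J(\Phi)=\vect{A}\vect{B}^*$, the identity $\vect{A}^*\vect{C}=\Tr(A^*C)$, and the multiplicativity of $J$ and of the trace over tensor factors is precisely that routine verification, carried out correctly (including the sign/order bookkeeping that yields $\ip{D,B}$ rather than $\ip{B,D}$ in item 1).
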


We are going to prove Proposition~\ref{pro:wd} now.

\prowd*

\begin{proof}
    \par{\emph{Norm}}. $\forall x,y\in \ZZ_4^n$, $\norm{\Phi_{x,y}}=\norm{\sigma_x}_2\norm{\sigma_y}_2 = N$ using Fact~\ref{fact:property_ipn}.
    \par{\emph{Orthogonality}}. $\forall x,x',y,y'\in \ZZ_4^n$,$x\neq x'$ or $y\neq y'$, we have
    \begin{align*}
        \ip{\Phi_{x,y},\Phi_{x',y'}} &= \prod_{i\in[n]} \ip{\Phi_{x_i,y_i},\Phi_{x'_i,y'_i}} \\
    	&= \prod_{i\in[n]} \ip{\sigma_{x_i},\sigma_{x'_i}}\cdot \ip{\sigma_{y'_i},\sigma_{y_i}} \\
    	&= 0
    \end{align*}
    All equalities follow from Fact~\ref{fact:property_ipn} directly. Note that for non-zero vectors, orthogonality implies linear independence.

    \par{\emph{Basis, spanning the whole space}}. The dimension of $T(X)$ is $N^4=2^{4n}$ and we have $4^{2n}=2^{4n}$ linearly independent vectors in $\{\Phi_{x,y}\}_{x,y\in \ZZ_4^n}$.

\end{proof}

\section{Properties of Fourier Expansions of Superoperators}\label{app:expansion}

\lemmaexistsU*

\begin{proof}
    By the definition of $\wPhi$, we have
    \begin{align*}
        \wPhi(x,y)&=\f{1}{N^2}\ip{\Phi_{x,y}, \Phi} \\
        &= \f{1}{N^2}\Tr (\vect{\sigma_x}\vect{\sigma_y}^*)^* J(\Phi) \\
        &= \f{1}{N^2} \vect{\sigma_x}^* J(\Phi) \vect{\sigma_y},
    \end{align*}
    where the second equality is by the definition of the inner product and the fact that $J(\Phi_{x,y})=\vect{\sigma_x}\vect{\sigma_y}^*$.     Therefore
    $$
        \wPhi = \f{1}{N} U^* J(\Phi) U
    $$
    where $U = [vec(\sigma_x)/\sqrt{N}]_{x\in \ZZ_4^n}$ is a unitary.

\end{proof}

The next corollary follows from the properties of Kraus and Choi representations in Fact~\ref{fact:property_KCP}. We note that $\Phi(\rho) = \sum_{x,y\in \ZZ_4^n} \wPhi(x,y)\sigma_x\rho\sigma_y$ is a Kraus representation of $\Phi$. Therefore $\Phi\in T(X)$ is trace preserving if and only if $\sum_{x,y\in \Zfn}\wPhi(x,y)\sigma_y\sigma_x=I$.

\begin{corollary}\label{corr:channel_check}
    Let $\Phi\in T(X)$ be a superoperator. The following statements are equivalent.
    \begin{enumerate}
      \item $\Phi\in T(X)$ is completely positive.
      \item $\wPhi$ is PSD.
    \end{enumerate}
    The following statements are equivalent as well.
    \begin{enumerate}
      \item $\Phi\in T(X)$ is trace preserving.
      \item $\sum_{x,y\in \Zfn}\wPhi(x,y)\sigma_y\sigma_x=I$.
    \end{enumerate}
\end{corollary}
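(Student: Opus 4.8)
The final statement to prove is Corollary~\ref{corr:channel_check}, which characterizes completely positive and trace-preserving superoperators in terms of the Fourier coefficient matrix $\wPhi$.

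\medskip

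The plan is to derive both equivalences as immediate consequences of the results already established in the excerpt, namely Lemma~\ref{lemma:existsU} and the preceding remark that $\Phi(\rho) = \sum_{x,y\in\ZZ_4^n}\wPhi(x,y)\sigma_x\rho\sigma_y$ is a Kraus representation of $\Phi$. For the first equivalence, I would invoke Fact~\ref{fact:property_KCP}, item~2: $\Phi$ is completely positive if and only if $J(\Phi)$ is PSD. Then Lemma~\ref{lemma:existsU} gives $\wPhi = \f{1}{N}U^*J(\Phi)U$ for a unitary $U$, and since conjugation by a unitary together with a positive scalar preserves positive semidefiniteness (and the relation is reversible, as $J(\Phi) = N U\wPhi U^*$), we conclude $\wPhi$ is PSD if and only if $J(\Phi)$ is PSD, hence if and only if $\Phi$ is completely positive. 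This chain is exactly what Lemma~\ref{lemma:existsU} already asserts, so this half is essentially a restatement.

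\medskip

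For the second equivalence, I would use Fact~\ref{fact:property_KCP}, item~1: a superoperator with Kraus representation $\Phi(\rho) = \sum_s A_s\rho B_s^*$ is trace preserving if and only if $\sum_s A_s^*B_s = I$. Applying this to the particular Kraus representation $\Phi(\rho) = \sum_{x,y}\wPhi(x,y)\sigma_x\rho\sigma_y$ — where the role of $A_s$ is played by $\wPhi(x,y)^{1/2}$-weighted... more carefully, one writes the sum over the single index set $\ZZ_4^n\times\ZZ_4^n$ with the pair $(\sigma_x, \overline{\wPhi(x,y)}\,\sigma_y)$ or simply absorbs the coefficient $\wPhi(x,y)$ into one factor — the trace-preserving condition becomes $\sum_{x,y}\wPhi(x,y)\sigma_x^*\sigma_y = I$. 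Since the $\sigma_x$ are Hermitian, $\sigma_x^* = \sigma_x$, giving $\sum_{x,y\in\ZZ_4^n}\wPhi(x,y)\sigma_x\sigma_y = I$, which is exactly the claimed condition. (The $^*$ on the second Kraus factor in Fact~\ref{fact:property_KCP} and the Hermiticity of Paulis need to be tracked so the product comes out as $\sigma_x\sigma_y$ and not $\sigma_y\sigma_x$; this is a bookkeeping point, not a difficulty.)

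\medskip

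I do not anticipate any real obstacle here: the corollary is a direct corollary of Lemma~\ref{lemma:existsU} and the two items of Fact~\ref{fact:property_KCP}, together with the observation (already stated in the surrounding text) that the Fourier expansion furnishes an explicit Kraus representation. The only point requiring minor care is matching the form $\sum_s A_s\rho B_s^*$ in Fact~\ref{fact:property_KCP} to the Fourier Kraus form, in particular deciding which of $\sigma_x$, $\sigma_y$ plays the role of $A_s$ versus $B_s$ and where the scalar $\wPhi(x,y)$ goes, and then using $\sigma_x^* = \sigma_x$ to simplify $A_s^*B_s$. I would write out this one line explicitly and then declare both equivalences proved.
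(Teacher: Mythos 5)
Your argument is correct and coincides with the paper's own justification, which likewise obtains the first equivalence from Lemma~\ref{lemma:existsU} together with Fact~\ref{fact:property_KCP} (item 2) and the second by applying Fact~\ref{fact:property_KCP} (item 1) to the Kraus representation $\Phi(\rho)=\sum_{x,y}\wPhi(x,y)\sigma_x\rho\sigma_y$ furnished by the Fourier expansion. The conjugate/ordering bookkeeping you flag (the Fact as stated literally yields $\sum_{x,y}\overline{\wPhi(x,y)}\,\sigma_x\sigma_y=I$, i.e.\ the adjoint of the displayed condition up to relabeling $x\leftrightarrow y$) is glossed over just as quickly in the paper itself, so there is no substantive difference between your route and theirs.
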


\begin{corollary}[Relations between Fourier Expansion and Norm and Distance]\label{corr:rel_dis_norm} Let $\Phi,\Psi\in T(X)$ be superoperators and $\wPhi$, $\wPsi$ be the corresponding Fourier expansions. Then
	\begin{enumerate}
		\item $\norm{\Phi}=N\norm{\wPhi}_2=N\sqrt{\sum_{x,y\in \Zfn}\abs{\wPhi(x,y)}^2}$;
		\item $\D(\Phi,\Psi)= \f{1}{\sqrt{2}}\norm{\wPhi-\wPsi}_2=\f{1}{\sqrt{2}}\sqrt{\sum_{x,y\in \Zfn}\abs{\wPhi(x,y) - \wPsi(x,y)}^2}$.
	\end{enumerate}
\end{corollary}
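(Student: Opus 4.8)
The plan is to derive both identities from Lemma~\ref{lemma:existsU} together with the unitary invariance of the Frobenius norm; equivalently, item 1 can be read as a Parseval identity for the orthogonal basis $\{\Phi_{x,y}\}$. I would prove the norm identity first and then obtain the distance identity from it by linearity of the Fourier transform.

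For item 1, I would recall from Lemma~\ref{lemma:existsU} that there is a unitary $U$ with $\wPhi = \f{1}{N}U^*J(\Phi)U$, so $J(\Phi) = N\,U\wPhi U^*$. Since $\norm{\Phi} = \norm{J(\Phi)}_2$ by the definition of the norm in Section~\ref{sec:fourier}, and $\norm{\cdot}_2$ is invariant under multiplication by unitaries on either side, this gives $\norm{\Phi} = N\norm{U\wPhi U^*}_2 = N\norm{\wPhi}_2$, and $\norm{\wPhi}_2 = \sqrt{\sum_{x,y\in\Zfn}\abs{\wPhi(x,y)}^2}$ is just the definition of the Frobenius norm of the coefficient matrix $\wPhi$. (As a cross-check, one could instead expand $\Phi = \sum_{x,y}\wPhi(x,y)\Phi_{x,y}$ and use orthogonality together with $\norm{\Phi_{x,y}} = N$ from Proposition~\ref{pro:wd} to get $\norm{\Phi}^2 = \sum_{x,y}\abs{\wPhi(x,y)}^2\norm{\Phi_{x,y}}^2 = N^2\sum_{x,y}\abs{\wPhi(x,y)}^2$.)

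For item 2, I would first observe that $\Phi\mapsto\wPhi$ is linear, because each coefficient $\wPhi(x,y) = \f{1}{N^2}\ip{\Phi_{x,y},\Phi}$ is linear in $\Phi$; hence the coefficient matrix of $\Phi-\Psi\in T(X)$ is exactly $\wPhi-\wPsi$. Applying item 1 to the superoperator $\Phi-\Psi$ yields $\norm{\Phi-\Psi} = N\norm{\wPhi-\wPsi}_2$, and substituting into the definition $\D(\Phi,\Psi) = \f{1}{N\sqrt{2}}\norm{\Phi-\Psi}$ from Eq.~\eqref{def:distance} gives $\D(\Phi,\Psi) = \f{1}{\sqrt{2}}\norm{\wPhi-\wPsi}_2$, which is the claimed formula once $\norm{\cdot}_2$ is written out entrywise.

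I do not expect any genuine obstacle; the only point I would be careful about is that item 1 must be invoked for the general superoperator $\Phi-\Psi$, which need not be a quantum channel, so I would make sure to state and prove item 1 for all of $T(X)$ — which is automatic, since Lemma~\ref{lemma:existsU} already supplies the factorization $\wPhi = \f{1}{N}U^*J(\Phi)U$ for every $\Phi\in T(X)$.
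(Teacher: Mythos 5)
Your proof is correct and matches the paper's (implicit) reasoning: the corollary is stated without a written-out proof precisely because it follows from Lemma~\ref{lemma:existsU} via unitary invariance of the Frobenius norm (equivalently, Parseval with Proposition~\ref{pro:wd}), together with linearity of $\Phi\mapsto\wPhi$ applied to $\Phi-\Psi$, exactly as you argue. Your remark that the argument must hold for general elements of $T(X)$, not just channels, is the right point of care and is indeed covered by Lemma~\ref{lemma:existsU}.
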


\section{Proof of Theorem~\ref{thm:keythm}}\label{app:key}

\thmkeythm*

\begin{proof}
    We need two steps to construct $\Phi''$ explicitly. Firstly we construct a $k$-junta \textit{sub-channel} $\Phi'$, which is completely positive and trace non-increasing, and then turn it to a channel $\Phi''$.	
    \par{\textbf{Construction of \textit{sub-channel} $\Phi'$}} Let
    $$
        \Phi'(\rho)=\sum_{x,y\in \Zfn; x_{T^c}=y_{T^c}=\0} \wPhi(x,y)\sigma_x\rho\sigma_y
    $$
    Notice that $\Phi$ is a quantum channel. By Fact~\ref{fact:property_ipn} and Corollary~\ref{corr:channel_check}, it is easy to see $\Phi'$ is a $T$-junta sub-channel. Notice that $\wPhi'$ is a principle submatrix of PSD matrix $\wPhi$, which implies $\wPhi'$ is PSD. Then again by Corollary~\ref{corr:channel_check}, $J(\wPhi')$ is also PSD. Now we bound the distance between $\Phi$ and $\Phi'$ from above.

    By Corollary~\ref{corr:rel_dis_norm}, we have
    $$
        2\cdot \D(\Phi,\Phi')^2 = \sum_{x,y\in \Zfn; x_{T^c}\neq \0 \text{ or }y_{T^c}\neq \0}\abs{\wPhi(x,y)}^2
    $$
     For any $x,y\in \Zfn$ we have $\abs{\wPhi(x,y)}^2\le \wPhi(x,x)\wPhi(y,y)$ since $\wPhi$ is a PSD matrix. This implies
    \begin{eqnarray}
      \sum_{x,y\in \Zfn; x_{T^c}\neq \0 \text{ or }y_{T^c}\neq \0}\abs{\wPhi(x,y)}^2 &\le& \sum_{x,y\in \Zfn: x_{T^c}\neq \0 \text{ or }y_{T^c}\neq \0} \wPhi(x,x)\wPhi(y,y) \nonumber\\
      &\le& \left(\sum_{x,y\in \Zfn: x_{T^c}\neq \0}+\sum_{x,y\in \Zfn: y_{T^c}\neq \0}\right)\wPhi(x,x)\wPhi(y,y).\label{eqn:distance}
    \end{eqnarray}

    Notice $\sum_{x\in \Zfn}\wPhi(x,x)=1$ by Lemma~\ref{lemma:existsU}. We have
    $$
        \text{RHS of Eq.~\eqref{eqn:distance}}= 2 \sum_{x\in \Zfn; x_{T^c}\neq \0} \wPhi(x,x)     $$
    To summarize, we have
    $$
    2\cdot \D(\Phi,\Phi')^2 \le 2 \sum_{x\in \Zfn; x_{T^c}\neq \0} \wPhi(x,x)=2\cdot \inf_{T^c}[\Phi]\le 2\vep
    $$

    We claim that $\sum_{x,y\in\Zfn; x_{T^c}=y_{T^c}=\0}\wPhi(x,y)\sigma_y\sigma_x \le I$.

    Let
    \begin{align}
        A &= \sum_{x,y\in \Zfn, x_{T^c}=y_{T^c}=\0} \wPhi(x,y)\sigma_y\sigma_x,\label{eqn:A} \\
        B &= \sum_{x,y\in \Zfn, x_{T^c}\neq \0,y_{T^c}\neq\0} \wPhi(x,y)\sigma_y\sigma_x, \nonumber\\
        C_1 &= \sum_{x,y\in \Zfn, x_{T^c}= \0,y_{T^c}\neq\0} \wPhi(x,y)\sigma_y\sigma_x, \nonumber\\
        C_2 &= \sum_{x,y\in \Zfn, x_{T^c}\neq \0,y_{T^c}=\0} \wPhi(x,y)\sigma_y\sigma_x.\nonumber
    \end{align}

    We note that $A=\sum_{x',y'\in \ZZ_4^{T}} \wPhi(x'\circ 0^{T^c},y'\circ 0^{T^c})\sigma_{y'}\sigma_{x'}\otimes I^{T^c}=:A'\otimes I^{T^c}$, where $x'\circ 0^{T^c}$ is the concatenation of $x'$ and $0^{T^c}$. Same for $y'\circ 0^{T^c}$. To see $A\le I$, it is enough to show $A'\le I$, which is equivalent to $\Tr A'\ketbra{\phi}\le 1$ for any quantum state $\ket{\phi}_T$.

    Let $I_{T^c}$ be a $2^{\abs{T^c}}\times {2^{\abs{T^c}}}$ identity matrix. Notice that $\Tr A(\ketbra{\phi}\otimes \frac{I_{T^c}}{2^{\abs{T^c}}})=\Tr (A'\cdot\ketbra{\phi})$. It suffices to prove that $\Tr A(\ketbra{\phi}\otimes \frac{I_{T^c}}{2^{\abs{T^c}}})\le 1$ for arbitrary quantum state $\ket{\phi}$. To this end,
    \begin{align*}
        1 &= \Tr \Phi\p{\ketbra{\phi}\otimes \frac{I_{T^c}}{2^{\abs{T^c}}} } \\
          &=\frac{1}{2^{\abs{T^c}}}\p{ \Tr A(\ketbra{\phi}\otimes I_{T^c}) + \Tr B(\ketbra{\phi}\otimes I_{T^c}) +\Tr C_1(\ketbra{\phi}\otimes I_{T^c}) +\Tr C_2(\ketbra{\phi}\otimes I_{T^c})} \\
          &\ge \frac{1}{2^{\abs{T^c}}}\p{\Tr A(\ketbra{\phi}\otimes I_{T^c}) + \Tr C_1(\ketbra{\phi}\otimes I_{T^c}) +\Tr C_2(\ketbra{\phi}\otimes I_{T^c})} \\
          &= \frac{1}{2^{\abs{T^c}}}\p{\Tr A(\ketbra{\phi}\otimes I_{T^c})}
    \end{align*}
    where the first inequality is because $B$ is a principle sub-matrix of $\wPhi$, which is also PSD by Corollary~\ref{corr:channel_check}; the last equality is because $\Tr C_1(\ketbra{\phi}\otimes I_{T^c}) +\Tr C_2(\ketbra{\phi}\otimes I_{T^c})=0$. To see this, we will prove that $\Tr C_1(\ketbra{\phi}\otimes I_{T^c})=0$.
    \begin{align*}
        	\Tr C_1(\ketbra{\phi}\otimes I_{T^c})&= \sum_{x,y\in \Zfn, x_{T^c}= \0,y_{T^c}\neq\0} \wPhi(x,y)\Tr \sigma_y\sigma_x(\ketbra{\phi}\otimes I_{T^c}) \\
        	&= \sum_{x,y\in \Zfn, x_{T^c}= \0,y_{T^c}\neq\0}\wPhi(x,y) \bra{\phi}\sigma_{y_T}\sigma_{x_T}\ket{\phi}\cdot \ip{\sigma_{y_{T^c}},\sigma_{x_{T^c}}}\\
        	&= 0
    \end{align*}
    $\Tr C_2(\ketbra{\phi}\otimes I_{T^c})=0$ follows from the same argument. Therefore $A\le I$.

    \par{\textbf{Construction of \textit{Channel} $\Phi''$}} We set
    $$
        \Phi''(\rho)=\Phi'(\rho) + \sqrt{I-A}\rho\sqrt{I-A},
    $$
    where $A$ is given in Eq.~\eqref{eqn:A}.  By Corollary~\ref{corr:channel_check}, we have $J(\Phi'')=J(\Phi') + \vect{\sqrt{I-A}}\vect{\sqrt{I-A}}^*$, which is PSD. Notice that $A=A'\otimes I_{T^c}$. Thus  $\Phi''$ is also a T-junta completely positive map. To prove $\Phi''$ is a channel, it suffices to prove that $\Phi''$ is trace-preserving. By the Kraus representation of $\Phi''$
    $$
        \Phi''(\rho)=\sum_{x,y\in \Zfn; x_{T^c}=y_{T^c}=\0} \wPhi(x,y)\sigma_x\rho\sigma_y + \sqrt{I-A}\rho\sqrt{I-A},
    $$
    we have
    $$
        \sum_{x,y\in \Zfn; x_{T^c}=y_{T^c}=\0} \wPhi(x,y)\sigma_y\sigma_x+\sqrt{I-A}\sqrt{I-A} = A + \sqrt{I-A}\sqrt{I-A} = I,
    $$
    which implies $\Phi''$ is trace preserving according to Fact~\ref{fact:property_KCP}.

    Next we bound the distance between $\Phi'$ and $\Phi''$ from above. Note that $J(\Phi'')=J(\Phi') + \vect{\sqrt{I-A}}\vect{\sqrt{I-A}}^*$, we have
    \begin{eqnarray*}
     D(\Phi'',\Phi')& =& \f{1}{N\sqrt{2}}\norm{J(\Phi'')-J(\Phi')}_2 =	\f{1}{N\sqrt{2}}\norm{\vect{\sqrt{I-A}}\vect{\sqrt{I-A}}^*}_2 \\
      &=& \f{1}{N\sqrt{2}} \norm{\sqrt{I-A}}_2^2 =\f{1}{N\sqrt{2}} \Tr(I-A)
    \end{eqnarray*}
    From the definition of $A$, we have  $\f{1}{N}\Tr A=\sum_{x\in \Zfn, x_{T^c}=\0}\wPhi(x,x)=1-\inf_{T^c}[\Phi]$, which implies
    $$
        \f{1}{N\sqrt{2}}\Tr(I-A) = \f{1}{\sqrt{2}}\inf_{T^c}[\Phi].
    $$
    Therefore
    $$
        D(\Phi'',\Phi') \le \f{1}{\sqrt{2}}\inf_{T^c}[\Phi]\le \f{\vep}{\sqrt{2}}.
    $$
    In conclusion, $\Phi''$ is a $T$-junta channel and $D(\Phi,\Phi'')\le \sqrt{\vep} + \vep/\sqrt{2}$, which completes the proof.

\end{proof}

\section{Characterization of Distance Function}\label{app:char}

\subsection{Proof of Lemma~\ref{lemma:reldis}}

\lemmareldis*

\begin{proof}
       It's easy to see
       $$
           \dist(U,V)=\sqrt{1-\f{1}{N}\abs{\ip{U,V}}}
       $$
       and
       $$
           D(\Phi_U,\Phi_V)=\sqrt{1-\f{1}{N^2}\abs{\ip{U,V}}^2}
       $$
       Let $\alpha=\f{1}{N}\abs{\ip{U,V}}\in[0,1]$.  Lemma~\ref{lemma:reldis} follows from the inequality $\sqrt{1-\alpha}\le \sqrt{1-\alpha^2}\le \sqrt{2}\sqrt{1-\alpha }$.

   \end{proof}

\subsection{Comparison with other operator norms}

\prorelave*

\begin{proof}
    Let $J=J(\Phi)-J(\Psi)=J(\Phi-\Psi)=\sum_{i,j\in[N]}J_{i,j}\otimes \ketbra{i}{j}$, where $J_{i,j}=(\Phi-\Psi)(\ketbra{i}{j})$.
    \begin{align*}
    	\norm{\Phi(\ketbra{\psi})-\Psi(\ketbra{\psi})}_2^2 &= \norm{\Tr_{X_2}\p{J\cdot (I\otimes \ketbra{\psi})}}_2^2 \\
    	&= \norm{\sum_{i,j\in [N]} J_{i,j}\braket{j}{\psi}\braket{\psi}{i}}_2^2 \\
    	&= \sum_{i,j,i',j'\in [N]} \ip{J_{i,j},J_{i',j'}}\braket{i}{\psi}\braket{\psi}{j}\braket{j'}{\psi}\braket{\psi}{i'}
    \end{align*}
    Note that $\braket{i}{\psi}\braket{\psi}{j}\braket{j'}{\psi}\braket{\psi}{i'} = \Tr\ \p{\ketbra{j}{i}\otimes \ketbra{i'}{j'}}\cdot \p{\ketbra{\psi}\otimes \ketbra{\psi}}$, we have
    \begin{align*}
    	&\int_{\psi}\norm{\Phi(\ketbra{\psi})-\Psi(\ketbra{\psi})}_2^2d\psi \\
    	=& \sum_{i,j,i',j'\in [N]} \ip{J_{i,j},J_{i',j'}}\Tr\,\p{\ketbra{j}{i}\otimes \ketbra{i'}{j'}}\cdot \int_{\psi}\ketbra{\psi}^{\otimes2}d\psi \\
    	=& \sum_{i,j,i',j'\in [N]} \ip{J_{i,j},J_{i',j'}}\Tr\,\p{\ketbra{j}{i}\otimes \ketbra{i'}{j'}}\cdot \f{I+F}{N(N+1)} \\
    	=& \f{1}{N(N+1)}\p{\sum_{i,j\in [N]}\ip{J_{i,j},J_{i,j}}+\sum_{i,j\in[N]}\ip{J_{i,i},J_{j,j}}}
    \end{align*}
    In the second equality we use the fact that $\int_{\psi}\ketbra{\psi}^{\otimes2}d\psi=(I+F)/N(N+1)$, where $F$ is the swap operator which interchanges two $n$-qubit quantum systems; see Lemma 7.24 of~\cite{Wat18_TQI}. The third equality follows from $\Tr ((A\otimes B)F)=\Tr AB$.

    By the definition of $J$, we have
    \begin{align*}
    	\norm{J(\Phi)-J(\Psi)}_2^2 &= \norm{J}_2^2=\ip{J,J}=\sum_{i,j\in [N]}\ip{J_{i,j},J_{i,j}} \\
    	\norm{\Phi(I)-\Psi(I)}_2^2&=\norm{\sum_{i\in[N]}J_{i,i}}_2^2=\ip{\sum_{i\in[N]}J_{i,i}, \sum_{j\in[N]}J_{j,j}}=\sum_{i,j\in[N]}\ip{J_{i,i},J_{j,j}}
    \end{align*}
    and therefore
    \begin{align*}
        &\int_{\psi}\norm{\Phi(\ketbra{\psi})-\Psi(\ketbra{\psi})}_2^2d\psi \\
        =&\f{1}{N(N+1)}\p{\sum_{i,j\in [N]}\ip{J_{i,j},J_{i,j}}+\sum_{i,j\in[N]}\ip{J_{i,i},J_{j,j}}} \\
    	=& \f{1}{N(N+1)}\p{\norm{J(\Phi)-J(\Psi)}_2^2+\norm{\Phi(I)-\Psi(I)}_2^2} \\
    	=& \f{2N}{N+1}D(\Phi,\Psi)^2+\f{1}{N(N+1)}\norm{\Phi(I)-\Psi(I)}_2^2
    \end{align*}


\end{proof}

\prorelworst*

\begin{proof}
	We will show $$\f{1}{N}\norm{J(\Phi)-J(\Psi)}_2 \le \otnorm{(\Phi-\Psi)\otimes \mathbbm{1}_X} \le \norm{J(\Phi)-J(\Psi)}_2.$$ Let $\ket{\Pz}=\f{1}{\sqrt{N}}\sum_{i\in X}\ket{ii}$, we have
	$$
	    \f{1}{N}\norm{J(\Phi)-J(\Psi)}_2=\norm{((\Phi-\Psi)\otimes I_X)(\ketbra{\Pz})}_2\le \otnorm{(\Phi-\Psi)\otimes I_X}
	$$
	and the first inequality follows immediately.\
	To prove the next inequality, By the following fact, the $1\rightarrow 2$ diamond norm can be achieved by a rank-$1$ Hermitian matrix.
	\begin{fact}[Theorem 3.51 in~\cite{Wat18_TQI}]\label{fact:univec}
		There exists an unit vector $u\in X\otimes X$, which satisfies that $\dotnorm{\Phi-\Psi}=\norm{((\Phi-\Psi)\otimes I_X)(uu^*)}_2$.
	\end{fact}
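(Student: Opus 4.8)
The plan is to recognize $\dotnorm{\Phi-\Psi}$ as the maximum of a convex functional over a compact convex set of operators, and then exploit the extreme-point structure of that set. First I would unfold Definition~\ref{def:dotnorm}: writing $\Xi=\Phi-\Psi$, we have $\dotnorm{\Xi}=\otnorm{\Xi\otimes\mathbbm{1}_X}=\max\{\norm{(\Xi\otimes\mathbbm{1}_X)(\rho)}_2:\rho\in L(X\otimes X),\ \rho\succeq 0,\ \Tr\rho=1\}$, i.e.\ $\rho$ ranges over density operators on the $2n$-qubit system $X\otimes X$. This feasible set is compact and convex, and the objective $\rho\mapsto\norm{(\Xi\otimes\mathbbm{1}_X)(\rho)}_2$ is continuous --- so the maximum is attained --- and convex, being the Frobenius norm composed with the linear superoperator $\Xi\otimes\mathbbm{1}_X$.

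Next I would apply the standard principle that a convex function on a compact convex set attains its maximum at an extreme point, together with the elementary fact (read off from the spectral decomposition of a density operator) that the extreme points of the set of density operators on $X\otimes X$ are exactly the rank-one projectors $uu^*$ with $u\in X\otimes X$ a unit vector. Hence the maximum is attained at some $\rho=uu^*$, and this $u$ is the vector asserted in the statement. If one instead reads the feasible set in Definition~\ref{def:dotnorm} as all Hermitian $\rho$ with $\norm{\rho}_1=1$, the same argument applies: its extreme points are the operators $\pm uu^*$, and since the objective $\norm{(\Xi\otimes\mathbbm{1}_X)(\cdot)}_2$ is even, the $+$ sign may be taken.

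The convexity/extreme-point step is routine, so the only place that deserves care --- and where the citation to~\cite{Wat18_TQI} really enters --- is fixing the feasible set. It must be the Hermitian (equivalently positive, trace-$1$) operators rather than the full trace-norm ball: over the full ball the statement is false, as one sees by taking $\Phi=\mathrm{id}$ and $\Psi$ the completely dephasing channel, for which $\otnorm{(\Phi-\Psi)\otimes\mathbbm{1}_X}$ evaluated at a suitable non-Hermitian rank-one operator strictly exceeds its value at every pure state. A secondary point is that the reference system in Definition~\ref{def:dotnorm} has dimension $N=2^n$ equal to that of the input space; one should recall the standard stabilization fact that enlarging the reference cannot increase a diamond-type norm, but since the size-$N$ reference is already built into the definition of $\dotnorm{\cdot}$, nothing beyond that is needed. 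Thus I expect the main (and only) obstacle to be bookkeeping: pinning down the feasible set and the reference dimension so that a pure-state optimizer provably exists.
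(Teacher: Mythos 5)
Your argument is correct, and since the paper offers no proof of this Fact beyond the citation, the convexity route you take is the natural way to supply one: $\rho\mapsto\norm{((\Phi-\Psi)\otimes I_X)(\rho)}_2$ is a norm composed with a linear map, hence continuous and convex, so its maximum over the compact convex set of density operators is attained at an extreme point, i.e.\ at some $uu^*$. (Pedantically, you should maximize over the ball $\norm{\rho}_1\le 1$ or over the density operators rather than over the non-convex sphere $\norm{\rho}_1=1$; homogeneity of the objective makes the two suprema equal.) Your caveat about the feasible set is the substantive point, and you are right: if Definition~\ref{def:dotnorm} is read literally, with $\rho$ ranging over \emph{all} operators of unit trace norm, the Fact is false. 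Your dephasing example checks out: for $\Xi=\Phi-\Psi$ with $\Phi$ the identity channel and $\Psi$ the completely dephasing channel, the non-Hermitian input $\ketbra{a}{b}\otimes vv^*$ with $a\neq b$ yields output Frobenius norm $1$, whereas writing a unit vector as $u=\sum_i\ket{i}\otimes w_i$ (so $\sum_i\norm{w_i}_2^2=1$) gives $\norm{(\Xi\otimes I_X)(uu^*)}_2^2=1-\sum_i\norm{w_i}_2^4\le 1-1/N<1$. This also pinpoints why the citation must be handled with care: Theorem 3.51 of~\cite{Wat18_TQI} concerns the trace-norm output $\norm{\cdot}_\diamond$, for which Hermitian-preserving maps do attain their induced norm over all unit-trace-norm inputs at a pure state; that reduction from general to positive inputs is precisely what breaks for the Frobenius-norm output. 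So the Fact, and Definition~\ref{def:dotnorm}, should be read with $\rho$ a density operator --- which is all that the proof of Proposition~\ref{pro:relworst} actually uses.
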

	Let $u$ be the unit vector in Fact~\ref{fact:univec} and $A$ be a matrix satisfying that $u=\vect{A}=\sqrt{N}(A\otimes I)\ket{\Pz}=\sqrt{N}(I\otimes A^T)\ket{\Pz}$. We have
	\begin{align*}
	    \dotnorm{\Phi-\Psi} &= \norm{((\Phi-\Psi)\otimes I)(\vect{A}\vect{A}^*)}_2 \\
	    &= N\cdot \norm{((\Phi-\Psi)\otimes I)((I\otimes A^T)\ketbra{\Pz}(I\otimes A^T)^*)}_2 \\
	    &= N\cdot \norm{(I\otimes A^T)((\Phi-\Psi)\otimes I)(\ketbra{\Pz})(I\otimes A^T)^*}_2 \\
	    &= \norm{(I\otimes A^T)(J(\Phi)-J(\Psi))(I\otimes A^T)^*}_2
	\end{align*}
	Applying the norm inequality $\norm{ABC}_2\le \norm{A}_\infty\norm{B}_2\norm{C}_\infty$ and $\norm{I\otimes A^T}_\infty=\norm{A^T}_\infty=\norm{A}_\infty\le \norm{A}_2$, we have
	\begin{align*}
	    \norm{(I\otimes A^T)(J(\Phi)-J(\Psi))(I\otimes A^T)^*}_2&\le \norm{I\otimes A^T}_\infty\cdot \norm{J(\Phi)-J(\Psi)}_2 \cdot \norm{(I\otimes A^T)^*}_\infty \\
	    &\le \norm{A}_2^2\cdot \norm{J(\Phi)-J(\Psi)}_2 \\
	    &= \norm{J(\Phi)-J(\Psi)}_2
	\end{align*}
	The last equality is because $\norm{A}_2=\norm{vec(A)}_2=\norm{u}_2=1$.
	
	To see the tightness of the first inequality, let $\Phi=\Phi_U$ where $U=X\otimes I^{\otimes n-1}$ and $\Phi_U$ is defined in section~\ref{sec:superoperators}. Let $\Psi$ be an identity channel, it's easy to check $\norm{J(\Phi)-J(\Psi)}_2/N = \dotnorm{\Phi-\Psi}=\sqrt{2}$. For second inequality, let $\Phi$ be
	\begin{align*}
	    &\Phi(\ketbra{1})=\ketbra{2}, \Phi(\ketbra{2})=\ketbra{1} \\
	    &\Phi(\ketbra{i})=\ketbra{i}, \forall i\neq 1,2\\
	    &\Phi(\ketbra{i}{j})=0^{N\times N},\forall i\neq j
	\end{align*}
	and $\Psi$ be
	\begin{align*}
	    &\Psi(\ketbra{i})=\ketbra{i}, \forall i\\
	    &\Psi(\ketbra{i}{j})=0^{N\times N},\forall i\neq j
	\end{align*}
	To verify $\Phi$ and $\Psi$ are quantum channels, note that $J(\Phi)$ and $J(\Psi)$ are both PSD and they are trace preserving obviously. Meanwhile, $\norm{J(\Phi)-J(\Psi)}_2=\sqrt{2}=\norm{(\Phi-\Psi)(\ketbra{1})}_2$.
	
\end{proof}

\section{A Simple Influence Estimator}\label{app:ie}

In this section, we will describe a new influence estimator. The estimator only includes single-qubit operations though it fulfills the same function efficiently as the raw influence estimator in~\cite{CNY22_juntatesting}, which needs two-qubit operations and maximally entanglement states.

\begin{myalg}
\caption{\algname{Influence-Estimator}($\Phi, S$)}\label{alg:rie}
\Input{Oracle access to quantum channel $\Phi\in C(X)$, $S\subseteq [n]$}	
\Output{$Y\in \{0,1\}$}

\begin{algorithmic}[1]
    \item Uniformly randomly choose $i\in \{0,1\}^S$, $j\in \{0,1\}^{S^c}$. 
        Prepare state $\ket{i}_S\ket{j}_{S^c}$;

    \item Uniformly randomly choose $U$ from 
        $$\lrs{
            I=\p{\begin{matrix}1&0\\0&1\end{matrix}},
            H=\f{1}{\sqrt{2}}\p{\begin{matrix}1&1\\1&-1\end{matrix}},
            R_x\p{\f{\pi}{2}}=\f{1}{\sqrt{2}}\p{\begin{matrix}1&-i\\-i&1\end{matrix}}
        };$$
    
    \item Query $\Phi$ to obtain 
        $\p{U^{\otimes n}}^\dagger
            \Phi\p{
                U^{\otimes n}
                \ketbra{i}_S\otimes \ketbra{j}_{S^c}
                \p{U^{\otimes n}}^\dagger
            }
        U^{\otimes n}$;

    \item Measure qubits in $S$ over the computational basis, 
        set $Y=0$ if the result is $i$, otherwise set $Y=1$;

    \item Return $Y$.
\end{algorithmic}
\end{myalg}

\begin{restatable}{theorem}{thmrie}\label{thm:rie}
    Given quantum channel $\Phi$ and $S\subseteq [n]$, let $Y$ be the output of Algorithm $\ref{alg:rie}$. For arbitrary $\delta>0$, we have:
    \begin{enumerate}
    	\item $\inf_S[\Phi]=0\Rarr Y=0$ with probability $1$;
    	\item $\inf_S[\Phi]\ge \delta\Rarr \E{Y}
           \ge \f{2}{3}\cdot \inf_S[\Phi]
           \ge \f{2}{3}\cdot \delta$.
    \end{enumerate}
\end{restatable}

\begin{proof}
    In the first case, 
    when $\inf_S[\Phi]=0$, we know $\Phi=\tilde\Phi_{S^c}\otimes I_{S}$ 
    for some $\tilde\Phi_{S^c}$, therefore $Y=0$ with probability $1$. 
    We focus on the second case.
    \begin{align*}
        \Prob{Y=0\mid U=I} &= \f{1}{2^n}\sum_{i\in \{0,1\}^S}\sum_{j\in \{0,1\}^{n-S}} \Prob{Y=0\mid i,j,U=I} \\
        &= \f{1}{2^n}\sum_{i}\sum_{j} \Tr\ \p{\bra{i}_S\otimes I_{S^c}}\cdot \Phi(\ketbra{i,j})\cdot \p{\ket{i}_S\otimes I_{S^c}} \\
        &= \f{1}{2^n}\sum_i \sum_j \sum_{x,y\in \Zfn} \Tr \bra{i}\otimes I \cdot \wPhi(x,y)\sigma_x \ketbra{i,j} \sigma_y \cdot \ket{i}\otimes I \\
        &= \f{1}{2^n}\sum_i \sum_j \sum_{x,y\in \Zfn} \wPhi(x,y) \bra{i} \sigma_{x_S} \ket{i} \bra{i}\sigma_{y_S} \ket{i} \cdot \bra{j} \sigma_{y_{S^c}}\sigma_{x_{S^c}} \ket{j} \\
        &= \f{1}{2^n}\sum_{x,y\in \Zfn} \wPhi(x,y) \p{\sum_i  \bra{i} \sigma_{x_S} \ket{i} \bra{i}\sigma_{y_S} \ket{i}} \cdot \p{\sum_j \bra{j} \sigma_{y_{S^c}}\sigma_{x_{S^c}} \ket{j}}.
    \end{align*}

    For the summation in the first bracket, We note that if $x_S\notin \{0,3\}^S$, say, $x_S$ contains $1$ or $2$, $\bra{i} \sigma_{x_S} \ket{i}=0$ for all $i$. Same for $y_S$. Thus if $x_S\neq y_S\in \{0,3\}^S$, then $\sum_i  \bra{i} \sigma_{x_S} \ket{i} \bra{i}\sigma_{y_S} \ket{i}=0$.  For the summation in the second bracket, we have $\sum_j \bra{j} \sigma_{y_{S^c}}\sigma_{x_{S^c}} \ket{j}=\ip{\sigma_{y_{S^c}},\sigma_{x_{S^c}}}$, which is zero if $y_{S^c}\neq x_{S^c}$ and $2^{n-S}$ if $y_{S^c}=x_{S^c}$. Hence,
    \begin{align*}
    	\Prob{Y=0\mid U=I} &= \f{1}{2^n}\sum_{x,y\in \Zfn} \wPhi(x,y) \p{\sum_i  \bra{i} \sigma_{x_S} \ket{i} \bra{i}\sigma_{y_S} \ket{i}} \cdot \p{\sum_j \bra{j} \sigma_{y_{S^c}}\sigma_{x_{S^c}} \ket{j}}\\
    	&= \sum_{x\in \Zfn; x_{S}\in \{0,3\}^{S}} \wPhi(x,x).
    \end{align*}

    Note that $H^\dagger \sigma_0 H = \sigma_0$, $H^\dagger \sigma_1 H = \sigma_3$,
    $H^\dagger \sigma_2 H = -\sigma_2$, $H^\dagger \sigma_3 H = \sigma_1$. 
    We use $R_x$ as an alias for $R_x\p{\f{\pi}{2}}$ used in the algorithm above
    and have $R_x^\dagger \sigma_0 R_x = \sigma_0$, $R_x^\dagger \sigma_1 R_x = \sigma_1$,
    $R_x^\dagger \sigma_2 R_x = -\sigma_3$, $R_x^\dagger \sigma_3 R_x = \sigma_2$.
    With similiar calculations as above, we obtian
    \begin{align*}
        &\Prob{Y=0\mid U=H} = 
            \sum_{x\in \Zfn; x_{S}\in \{0,1\}^{S}} \wPhi(x,x), \\
        &\Prob{Y=0\mid U=R_x} = 
            \sum_{x\in \Zfn; x_{S}\in \{0,2\}^{S}} \wPhi(x,x).
    \end{align*}

    Recall that
    $$
        1 - \inf_{S}[\Phi] = \sum_{x\in \Zfn; x_{S}=\0} \wPhi(x,x).
    $$
    We can deduce that
    \begin{align*}
        \E{Y}&=\f{1}{3}\p{\Prob{Y=1\mid U=I} + \Prob{Y=1\mid U=H} + \Prob{Y=1\mid U=R_x}} \\
             &\ge \f{1}{3}\cdot 2 \cdot \inf_S[\Phi].
    \end{align*}
    The conclusion follows.

\end{proof}

\section{Testing Quantum $k$-Junta Channels}\label{app:testing}

\subsection{$O(k)$ Upper Bound on Testing $k$-Junta Quantum Channels, Proof of Theorem~\ref{thm:testup2}}

Firstly we show a key property of Algorithm~\ref{alg:is}, \algname{Influence-Sample}.

\begin{lemma}[Property of Algorithm~\ref{alg:is}]\label{lemma:keyproperty}
    For every iteration in $\algname{Influence-Sample}(\Phi,t)$, 
        we have with probability at least $\f{2}{3}\cdot\inf_{S^c}[\Phi]$, 
        $S$ become larger in this iteration.
\end{lemma}

\begin{proof}
    Following the similar calculation of the proof pf Theorem~\ref{thm:rie},
        recall that $i$ is the random string sampled in the iteration of 
        \algname{Influence-Sample} and $i'$ is the corresponding measurement 
        result, we have 
    \begin{align*}
        \Prob{\text{$S$ becomes larger}} &= 
            \Prob{\exists l, l\in [n]\backslash S, i_l\neq i'_l} \\
        &= \Prob{\algname{Influence-Estimator}(\Phi, S^c) = 1} \\
        &\ge \f{2}{3}\cdot \inf_{S^c}[\Phi].
    \end{align*}
\end{proof}

\thmtestuptwo*

\begin{proof}
    Recall that $S$ is the output of 
        $\algname{Influence-Sample}(\Phi,60(k+1)/\vep^2)$.
    If $\Phi$ is a $k$-junta channel, let $R$ be the qubits in which $\Phi$ 
        has non-trivial effect. We have $\abs{R}\le k$.
    It is easy to see $S\subseteq R$ with probability $1$, therefore 
        the algorithm always outputs ``Yes'' in this case.
    
    For the case $\Phi$ is $\vep$-far from any $k$-junta channel, 
        we let $X_i$, $i\in \{0,1,\dots,k\}$ be the number of iterations between
        when the $i$-th coordinate is added to $S$ and when the $i+1$-th 
        coordinate is added to $S$. 
    As an example of $n=5$, $S$'s are $\varnothing,\{1,2\},\{1,2,4\}$ 
        through three iterations, we have $X_0=2$, $X_1=0$, $X_2=1$.
    According to Corollary~\ref{corr:keythm}, for any $\abs{S}\le k$, 
        $\inf_{S^c}[\Phi]\ge \vep^2/4$.
    Combining with Lemma~\ref{lemma:keyproperty}, which says when $\abs{S}=i$,
        for each iteration, $S$ becomes largers with probability at least 
        $\f{2}{3}\cdot\inf_{S^c}[\Phi]\ge \f{\vep^2}{6}$. Therefore, 
        $\E{X_i}\le \f{6}{\vep^2}$, for $\forall i\in \{0,1,\dots,k\}$.
    We conclude that
    $$
        \E{\sum_{i=0}^kX_i}\le \f{6(k+1)}{\vep^2}.
    $$
    By Markov Inequality, 
    $$
        \Prob{\sum_{i=0}^kX_i\ge \f{60(k+1)}{\vep^2}} \le \f{1}{10}.
    $$
    Thus in the case $\Phi$ is $\vep$-far from any $k$-junta channel, 
        the algorithm outputs ``No'' with probability at least $9/10$.

    Besides, $\algname{Junta-Channel-Tester}(\Phi,k,\vep)$ makes 
        $O(k/\vep^2)$ queries to $\Phi$. 
    Theorem~\ref{thm:testup2} follows.

\end{proof}

\subsection{$\Omega(\sqrt{k})$ Lower Bound on Testing Quantum $k$-Junta Channels, Proof of Lemma~\ref{lemma:lokey} and Lemma~\ref{lemma:lored}}

Before proving Lemma~\ref{lemma:lokey}, we need the following technical lemma.
\begin{lemma}\label{L:border claim}
    Let $n,m$ be natural numbers and $r_{a,b}\in \RR, r_{a,a}\ge 0$ for $a,b\in[n]$. The maximum value
    \begin{align*}
        \max_x\ &\sum_{a,b\in[n]} r_{a,b}x_ax_b \\
        s.t.\ &x_a\in[-m,m] , \forall a\in[n]
    \end{align*}
    can be achieved if we restrict $x$ satisfying that $x_a\in \{m,-m\}$, for all $a\in [n]$.
\end{lemma}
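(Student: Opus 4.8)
The objective $Q(x) = \sum_{a,b\in[n]} r_{a,b}x_ax_b$ is a quadratic form, hence in particular it is affine (indeed linear) in each coordinate $x_a$ when all other coordinates are held fixed. The plan is to exploit exactly this one-variable-at-a-time structure via a coordinate-wise rounding argument. Concretely, I would fix any maximizer $x^\star\in[-m,m]^n$ (which exists by compactness of $[-m,m]^n$ and continuity of $Q$) and show that one may push each coordinate to an endpoint $\{m,-m\}$ without decreasing the value.

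The main step is the following. Suppose some coordinate, say $x^\star_1$, lies in the open interval $(-m,m)$. Write $Q$ as a function of $t := x_1$ alone, with all other coordinates frozen at their values in $x^\star$:
\[
  Q(t, x^\star_2,\dots,x^\star_n) = r_{1,1}t^2 + \Big(\sum_{b\neq 1}(r_{1,b}+r_{b,1})x^\star_b\Big)\, t + \text{const}.
\]
This is a quadratic in $t$ with leading coefficient $r_{1,1}\ge 0$, hence convex in $t$. A convex function on the interval $[-m,m]$ attains its maximum at an endpoint, so replacing $x^\star_1$ by whichever of $m$ or $-m$ gives the larger value does not decrease $Q$; call the result $x'$, still a maximizer, now with first coordinate in $\{m,-m\}$. (The hypothesis $r_{a,a}\ge 0$ is precisely what guarantees convexity here; note we do not need any sign condition on the off-diagonal $r_{a,b}$, and the quantity $\sum_{b\neq 1}(r_{1,b}+r_{b,1})x^\star_b$ is some fixed real number whose sign is irrelevant.) I would then iterate: having fixed coordinates $1,\dots,j$ to endpoint values while preserving the maximum, apply the same argument to coordinate $j+1$. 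After $n$ steps every coordinate lies in $\{m,-m\}$ and the value is still the global maximum over $[-m,m]^n$, which is exactly the claim.

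I do not expect any serious obstacle; the only mild subtlety is bookkeeping in the induction — one must note that freezing earlier coordinates at endpoint values does not interfere with the convexity argument for the current coordinate, which is immediate since the coefficient of $t^2$ in the restricted quadratic is always the single number $r_{j+1,j+1}\ge 0$ regardless of the frozen values. One should also remark that if the frozen quadratic in $t$ is actually affine (the case $r_{j+1,j+1}=0$), an affine function on $[-m,m]$ still attains its maximum at an endpoint, so the argument goes through uniformly. The existence of a maximizer to start from follows from extreme value theorem on the compact set $[-m,m]^n$, so the statement is well-posed. Altogether this is a short convexity/rounding argument, and I would present it essentially as written above.
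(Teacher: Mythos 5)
Your proposal is correct and follows essentially the same route as the paper: the paper's proof likewise observes that $\partial^2 Q/\partial x_a^2 = 2r_{a,a}\ge 0$, so $Q$ is convex in each coordinate separately and its maximum over $[-m,m]^n$ is attained at endpoint values. Your write-up simply makes explicit the coordinate-by-coordinate rounding and the degenerate affine case, which the paper leaves implicit.
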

\begin{proof}
    For arbitrary $a\in [n]$,
    $$
        \f{\partial^2\sum_{a,b\in[n]}r_{a,b}x_ax_b}{\p{\partial x_a}^2} = 2r_{a,a}\ge 0.
    $$
    Thus the objective function is convex in $x_a$ for all $a\in[n]$. The conclusion follows.

\end{proof}

We will prove our key technical lemma, Lemma~\ref{lemma:lokey}.

\lemmalokey*

\begin{proof}
    For any $k$-junta channel $\Phi$, let boolean function $g$ minimize $D(\Phi,\Phi_g)$. We will show $g$ could be a $k$-junta boolean function.
    \begin{align*}
        g &= \argmin	_{g} D(\Phi,\Phi_g) \\
        &= \argmin_{g} \norm{\sum_{a,b\in \{0,1\}^n} \p{\Phi(\ketbra{a}{b})-\Phi_g(\ketbra{a}{b})}\otimes \ketbra{a}{b}}_2^2 \\
        &= \argmin_{g} \sum_{a,b\in \{0,1\}^n} \norm{ \Phi(\ketbra{a}{b})-\Phi_g(\ketbra{a}{b})}_2^2 \\
        &= \argmin_{g} \sum_{a,b\in \{0,1\}^n} \p{ \bra{a}\Phi(\ketbra{a}{b})\ket{b}-(-1)^{g(a)+g(b)}}^2
    \end{align*}
    Since $\Phi$ is a $k$-junta channel, there exists $T\subseteq [n]$,$|T|=k$, such that $\Phi(\rho)=\widetilde{\Phi}(\rho_T)\otimes \rho_{T^c}$. We have
    \begin{align*}
        g &= \argmin_{g} \sum_{a,b\in \{0,1\}^n} \p{ \bra{a}\Phi(\ketbra{a}{b})\ket{b}-(-1)^{g(a)+g(b)}}^2 \\
        &= \argmin_{g} \sum_{a',b'\in \{0,1\}^T\atop a'',b''\in \{0,1\}^{T^c}} \p{ \bra{a',a''}\widetilde{\Phi}(\ketbra{a'}{b'})\otimes \ketbra{a''}{b''}\ket{b',b''}-(-1)^{g(a',a'')+g(b',b'')}}^2 \\
        &= \argmin_{g} \sum_{a',b'\in \{0,1\}^T\atop a'',b''\in \{0,1\}^{T^c}} \p{ \bra{a'}\widetilde{\Phi}(\ketbra{a'}{b'})\ket{b'}-(-1)^{g(a',a'')+g(b',b'')}}^2 \\
        &= \argmax_{g}\sum_{a',b'\in \{0,1\}^T\atop a'',b''\in \{0,1\}^{T^c}} \p{ \bra{a'}\widetilde{\Phi}(\ketbra{a'}{b'})\ket{b'}+\overline{\bra{a'}\widetilde{\Phi}(\ketbra{a'}{b'})\ket{b'}}}\cdot (-1)^{g(a',a'')}\cdot (-1)^{g(b',b'')} \\
        &= \argmax_{g} \sum_{a',b'\in \{0,1\}^T} \p{ \bra{a'}\widetilde{\Phi}(\ketbra{a'}{b'})\ket{b'}+\overline{\bra{a'}\widetilde{\Phi}(\ketbra{a'}{b'})\ket{b'}}}\cdot g'(a')\cdot g'(b')
    \end{align*}
    where $g'(a')=\sum_{a''\in\{0,1\}^{T^c}}(-1)^{g(a',a'')}$,$a'\in\{0,1\}^T$. Let $r_{a',b'}=2\text{Re}\p{\bra{a'}\widetilde{\Phi}(\ketbra{a'}{b'})\ket{b'}}$, we know $r_{a',a'}\ge 0$ since $\widetilde{\Phi}(\ketbra{a'}{b'})$ is PSD.

    Combining with Lemma~\ref{L:border claim}, we know that there exists $g'$ which achieves the maximum satisfying that $g'(a')=2^{n-k}$ or $g'(a')=-2^{n-k}$, for all $a'\in \{0,1\}^T$. Thus, we can take $k$-junta boolean function $g$ to obtain the minimum of $D(\Phi,\Phi_g)$.
\end{proof}

Before we prove Lemma~\ref{lemma:lored}, we need the following lemma.

\begin{lemma}\label{L:lemma1}
    Given boolean function $f$, if $f$ is $\vep$-far from any $k$-junta boolean function, then for any $k$-junta boolean function $g$, we have $D(\Phi_f,\Phi_g)\ge \sqrt{2\vep}$.
\end{lemma}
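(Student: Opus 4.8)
The plan is to obtain an exact expression for $D(\Phi_f,\Phi_g)$ in terms of the normalized Hamming distance $\delta:=\Pr_{x}[f(x)\neq g(x)]$ between the two Boolean functions, and then to bound $\delta$ using the hypothesis. First I would recall from the proof of Lemma~\ref{lemma:reldis} that $D(\Phi_U,\Phi_V)=\sqrt{1-\f{1}{N^2}\abs{\ip{U,V}}^2}$ for any unitaries $U,V$, and apply it with $U=U_f$ and $V=U_g$, which are the diagonal unitaries with entries $(-1)^{f(x)}$ and $(-1)^{g(x)}$. Then $\ip{U_f,U_g}=\sum_{x\in\{0,1\}^n}(-1)^{f(x)+g(x)}=N(1-2\delta)$, which is real, so $D(\Phi_f,\Phi_g)^2=1-(1-2\delta)^2=4\delta(1-\delta)$.

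Next I would use the hypothesis to confine $\delta$ to the interval $[\vep,1-\vep]$. Since $g$ is a $k$-junta and $f$ is $\vep$-far from every $k$-junta, $\delta\ge\vep$. The one mildly non-obvious point is the matching upper bound: the complement $\overline g:=1-g$ is again a $k$-junta, and $\Pr_x[f(x)\neq\overline g(x)]=1-\delta$, so the same hypothesis yields $1-\delta\ge\vep$, i.e.\ $\delta\le 1-\vep$. On $[\vep,1-\vep]$ the concave function $\delta(1-\delta)$ attains its minimum at the endpoints, hence $D(\Phi_f,\Phi_g)^2=4\delta(1-\delta)\ge 4\vep(1-\vep)$.

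Finally I would verify $4\vep(1-\vep)\ge 2\vep$, which is equivalent to $\vep\le\f{1}{2}$. This is harmless: the two constant functions are $0$-juntas, hence $k$-juntas, and every Boolean function agrees with one of them on at least half of the inputs, so no $f$ can be more than $\f{1}{2}$-far from the $k$-juntas; thus the hypothesis already forces $\vep\le\f{1}{2}$ (and is vacuous otherwise). Combining the three steps gives $D(\Phi_f,\Phi_g)\ge\sqrt{2\vep}$. I do not anticipate any genuine obstacle: the whole argument is the one-line computation $D(\Phi_f,\Phi_g)^2=4\delta(1-\delta)$ together with the observation that replacing $g$ by $\overline g$ keeps it a $k$-junta, which is exactly what supplies the two-sided control on $\delta$.
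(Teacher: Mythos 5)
Your proposal is correct and follows essentially the same route as the paper: both arguments reduce to the exact identity $D(\Phi_f,\Phi_g)^2=4\delta(1-\delta)$ with $\delta=\Pr_x[f(x)\neq g(x)]$ (you via $|\langle U_f,U_g\rangle|=N(1-2\delta)$, the paper by expanding $\|J(\Phi_f)-J(\Phi_g)\|_2^2$ directly) and then lower-bound it using $\delta\ge\vep$. Your complement trick ($\bar g=1-g$ is also a $k$-junta, giving $\delta\le 1-\vep$ and forcing $\vep\le 1/2$) is a slightly more explicit way of dispensing with the case $\delta>1/2$, which the paper handles by simply assuming $D(f,g)\le 1/2$.
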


\begin{proof}
    For $k$-junta boolean function $g$, if $1/2\ge D(f,g)\ge \vep$, we claim that $D(\Phi_f,\Phi_g)\ge \sqrt{2D(f,g)}\ge \sqrt{2\vep}$. Recall that $D(f,g)=\Pr_{x}[f(x)\neq g(x)]$.
    \begin{align*}
        D(\Phi_f,\Phi_g)&=\f{1}{\sqrt{2}N}\norm{\vect{U_f}\vect{U_f}^*-\vect{U_g}\vect{U_g}^*}_2	 \\
        &=\f{1}{\sqrt{2}N}\sqrt{\sum_{a,b\in\{0,1\}^n}\lrb{(-1)^{f(a)+f(b)}-(-1)^{g(a)+g(b)}}^2} \\
        &=\f{\sqrt{2}}{N}\sqrt{\sum_{a,b\in\{0,1\}^n}\p{\mathbbm{1}[f(a)\neq g(a)]\cdot \mathbbm{1}[f(b) = g(b)]+\mathbbm{1}[f(a)= g(a)]\cdot \mathbbm{1}[f(b) \neq g(b)]}} \\
        &=\f{\sqrt{2}}{N}\sqrt{2(1-D(f,g))D(f,g)\cdot N^2} \\
        &\ge \sqrt{2\vep}
    \end{align*}
\end{proof}

\begin{figure}
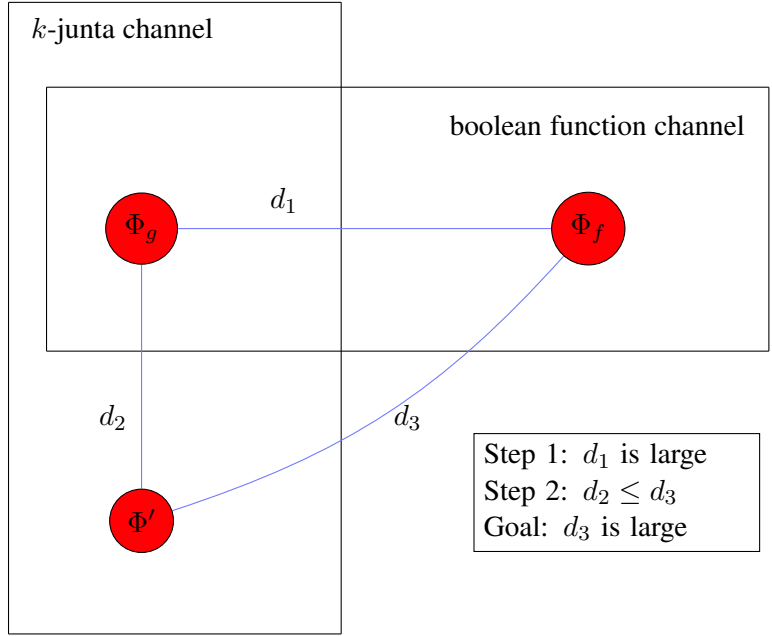

  \centering
  \tikzfig{tikz/lowerbound}
  \caption{Illustration of proof of Lemma~\ref{lemma:lored}}
\label{fig:lowerbound}
\end{figure}

\lemmalored*

\begin{proof}
	We will analyze the output of a $(k,\sqrt{\vep/2})$-channel junta tester given oracle to $\Phi_f$ for some boolean function $f:\{0,1\}^n\rarr \{0,1\}$.
	
	If $f$ is a $k$-junta, it is easy to see $\Phi_f$ is also a $k$-junta. Thus the channel junta tester outputs ``Yes'' with probability at least $9/10$.
	
	If $f$ is $\vep$-far from any $k$-junta boolean function, we are going to show $\Phi_f$ is $\sqrt{\vep/2}$-far from any $k$-junta channel. We give an illustration of our proof as Figure~\ref{fig:lowerbound}. Our goal is show for any $k$-junta channel $\Phi'$, $D(\Phi_f,\Phi')=d_3$ is large. We firstly show, from Lemma~\ref{L:lemma1}, that for any $k$-junta boolean function $g$, $D(\Phi_f,\Phi_g)=d_1$ is also large (as Step 1 in figure~\ref{fig:lowerbound}). Next we show for any $k$-junta channel $\Phi'$, there exists $k$-junta boolean function $g$ such that $D(\Phi',\Phi_g)\le D(\Phi_f,\Phi')$, i.e., $d_2\le d_3$. Finally, we conclude that $d_3\ge (d_2+d_3)/2\ge d_1$.

    For any $k$-junta channel $\Phi'$, let $g=\argmin_{g\text{ is a boolean function}}D(\Phi',\Phi_g)$. Then $g$ is a $k$-junta by Lemma~\ref{lemma:lokey}. We have
    $$
        D(\Phi',\Phi_g)\le D(\Phi',\Phi_f)
    $$
    and since $g$ is a $k$-junta, according to Lemma~\ref{L:lemma1},
    $$
        D(\Phi_f,\Phi_g) \ge \sqrt{2\vep}
    $$
    To conclude, we have
    $$
        D(\Phi',\Phi_f)\ge \f{1}{2} \p{D(\Phi',\Phi_f)+D(\Phi',\Phi_g)} \ge \f{1}{2} D(\Phi_f,\Phi_g) \ge \sqrt{\f{\vep}{2}}
    $$
    for any $k$-junta channel $\Phi'$.

\end{proof}

\subsection{Reduction from $k$-Junta Unitary Testing, Proof of Lemma~\ref{lemma:redkey} and Lemma~\ref{lemma:red2}}

The proof of Lemma~\ref{lemma:redkey} follows the same line as Lemma~\ref{lemma:lokey}. We first show a lemma similar to Lemma~\ref{L:border claim}, which will be used later.

\begin{lemma}\label{RtUJT:border claim}
    Let $n$ be a natural number. For $a,b\in[n]$, let $A_{a,b}\in \CC^{n\times n}$ be an $n\times n$ matrix. Set $A=\sum_{a,b\in [n]}A_{a,b}\otimes \ketbra{a}{b}\in \CC^{n^2\times n^2}$ to be the Choi representation of a quantum channel $\Phi$. In other words, $A$ is PSD and there exists $(B_s)_s$, $B_s\in \CC^{n\times n}$, s.t., $A=\sum_s \vect{B_s}\vect{B_s}^*$ and $\sum_s B_s^*B_s=I$. The maximum value
    \begin{align*}
        \max_{V\in\CC^{n\times n}}\ &\sum_{a,b\in[n]} \bra{a}V^*A_{a,b}V\ket{b} \\
        s.t.\ & V^*V \le I
    \end{align*}
    can be achieved if we restrict $V$ satisfying that $V^*V=I$.
\end{lemma}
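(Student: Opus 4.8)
The plan is to mimic the strategy of Lemma~\ref{L:border claim}: reduce the constrained optimization over the convex set $\{V : V^*V \le I\}$ to an optimization over its extreme points, which are exactly the isometries/unitaries satisfying $V^*V = I$. The objective $F(V) = \sum_{a,b} \bra{a}V^*A_{a,b}V\ket{b}$ is, after expanding $A = \sum_s \vect{B_s}\vect{B_s}^*$ and using $J(\Phi)(\sigma) = \Tr_2(A(I\otimes\sigma^T))$-type identities, a real quantity that I expect to rewrite as $F(V) = \sum_s \abs{\Tr(B_s^* V)}^2$ or something of that form — more precisely, since $A_{a,b} = \Phi(\ketbra{a}{b})$ is the Choi block, one gets $\sum_{a,b}\bra{a}V^* \Phi(\ketbra{a}{b}) V\ket{b} = \sum_{a,b}\bra{a}V^*(\sum_s B_s\ketbra{a}{b}B_s^*)V\ket{b} = \sum_s \abs{\sum_a \bra{a}V^* B_s \ket{a}}^2 = \sum_s \abs{\Tr(V^* B_s)}^2$. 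First I would carry out this algebraic simplification carefully to confirm the closed form $F(V) = \sum_s \abs{\Tr(V^* B_s)}^2$.

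Given that form, $F$ is a convex function of $V$ (it is a sum of squared absolute values of linear functionals of $V$, i.e. a composition of a norm-squared with a linear map, hence convex). The feasible region $\mathcal{K} = \{V \in \CC^{n\times n} : V^*V \le I\}$ is a compact convex set — it is the matrix unit ball in the operator norm. A convex function on a compact convex set attains its maximum at an extreme point of that set. The next step is therefore to identify the extreme points of $\mathcal{K}$: by a standard fact (the extreme points of the operator-norm unit ball in $\CC^{n\times n}$ are precisely the unitary matrices), every extreme point $V$ of $\mathcal{K}$ satisfies $V^*V = I$. Combining convexity of $F$ with the extreme-point characterization of $\mathcal{K}$ yields that the maximum of $F$ over $\mathcal{K}$ is attained at some unitary $V$, which is exactly the claim.

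I expect the main obstacle to be the algebraic bookkeeping in the first step: getting from the Choi-block expression $\sum_{a,b}\bra{a}V^*A_{a,b}V\ket{b}$ to a manifestly convex closed form, being careful about transposes and conjugates (the relation between $A_{a,b}$, the Kraus operators $B_s$, and $V$ acting on the input versus output space). One has to use $A = \sum_s\vect{B_s}\vect{B_s}^*$ together with $\vect{B_s} = (B_s\otimes I)\sum_i\ket{ii}$ to unpack $A_{a,b} = \sum_s B_s\ketbra{a}{b}B_s^*$, and then the double sum over $a,b$ collapses the two copies of the vectorization into a single trace. A secondary point to state cleanly (though it is entirely standard and can be cited or proven in one line) is that the unit ball $\{V : \|V\|_\infty \le 1\}$ is compact and convex with unitary extreme points; alternatively, one can avoid invoking Krein–Milman abstractly by using the singular value decomposition $V = W\Sigma' U^*$ with $0 \le \Sigma' \le I$ diagonal, noting $F$ restricted to $\Sigma'$ (with $W,U$ fixed) is convex in each diagonal entry, so pushing each entry to $1$ can only increase $F$ — this is the most direct analogue of the argument in Lemma~\ref{L:border claim} and avoids any heavy machinery.
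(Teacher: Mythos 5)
Your proposal is correct, and its first step --- unpacking $A_{a,b}=\sum_s B_s\ketbra{a}{b}B_s^*$ and collapsing the double sum to $F(V)=\sum_s\abs{\Tr(V^*B_s)}^2$ --- is exactly the simplification the paper performs (there written as $\vect{V}^*A\vect{V}=\sum_s\abs{\ip{V,B_s}}^2$). Where you diverge is the concluding step: the paper parametrizes the feasible set as $\{W\Sigma W' : W,W' \text{ unitary},\ \Sigma \text{ real diagonal},\ -I\le\Sigma\le I\}$ via the SVD, observes that $F$ is then a quadratic form in the diagonal entries of $\Sigma$ with nonnegative diagonal coefficients (namely $\sum_s\abs{B'_{s,aa}}^2$ with $B_s'=W^*B_sW'^*$), and invokes its earlier Lemma~\ref{L:border claim} to push every entry to $\pm1$, which keeps $V$ unitary. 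You instead observe that $F$ is convex on the compact convex set $\{V:V^*V\le I\}$ and cite the standard fact that the extreme points of the operator-norm unit ball of $\CC^{n\times n}$ are precisely the unitaries, so the maximum of a convex function is attained at a unitary. Both arguments are sound; yours is shorter but imports an external extreme-point characterization (and note the functionals $V\mapsto\Tr(V^*B_s)$ are conjugate-linear, which is harmless since real-linearity suffices for convexity), while the paper's route is self-contained and recycles its box lemma.

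One caution about your suggested ``more direct'' SVD shortcut: writing $V=W\Sigma'U^*$ with $0\le\Sigma'\le I$ and asserting that pushing each singular value to $1$ ``can only increase $F$'' is not justified. Coordinate-wise convexity on $[0,1]$ only places the maximum at an endpoint, which may be $0$ (e.g.\ when the diagonal coefficient vanishes and the linear term is negative), and a $\Sigma'$ with a zero entry is not unitary. The paper sidesteps this exactly by allowing the diagonal entries to range over $[-1,1]$ (which still parametrizes all of $V^*V\le I$), so that both admissible endpoints $\pm1$ yield a unitary $V$; if you want the elementary route, you should adopt that trick rather than restricting to $[0,1]$.
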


\begin{proof}
    Note that
    \begin{align*}
        	\sum_{a,b\in[n]} \bra{a}V^*A_{a,b}V\ket{b} &= \vect{V}^*A\vect{V}\\
        	&= \sum_s \vect{V}^*\vect{B_s}\vect{B_s}^*\vect{V} \\
        	&= \sum_s \abs{\ip{V,B_s}}^2,
    \end{align*}
    where $V$ takes over all matrices in
    \begin{eqnarray*}
      &&\{V\mid V\in \CC^{n \times n},V^*V\le I\}\\
      &&\hspace{1.8cm}=\{W\Sigma W'\mid W,W'\text{ are unitaries}, \Sigma\text{ is a diagonal real matrix},-I\le\Sigma\le I\} \end{eqnarray*}
    by the SVD decomposition. Suppose $\Sigma=\textsf{Diag}(x_1,\ldots,x_n)$. It is not hard to see
    \[\sum_s \abs{\ip{V,B_s}}^2=\sum_s \abs{\ip{W\Sigma W'\mid W,B_s}}^2\]
    is a quadratic form in $x_1,\ldots,x_n$. The coefficients of $x_a$ is $B_{s,aa}'^2\ge 0$, where $B_s'=W^*B_sW'^*$. By Lemma~\ref{lemma:redkey}, the maximum can be achieved if $x_a=\pm 1$. We conclude the result.
%
%
%
%
%
\end{proof}

\lemmaredkey*

\begin{proof}
    For any $k$-junta channel $\Phi'$, let unitary $V$ minimize $D(\Phi',\Phi_V)$. We will show $V$ could be a $k$-junta unitary.
    \begin{align*}
        V &= \argmin	_{V} D(\Phi',\Phi_V) \\
        &= \argmin_{V} \norm{\sum_{a,b\in \{0,1\}^n} \p{\Phi'(\ketbra{a}{b})-\Phi_V(\ketbra{a}{b})}\otimes \ketbra{a}{b}}_2^2 \\
        &= \argmin_{V} \sum_{a,b\in \{0,1\}^n} \norm{ \Phi'(\ketbra{a}{b})-\Phi_V(\ketbra{a}{b})}_2^2 \\
        &= \argmin_{V} \sum_{a,b\in \{0,1\}^n} \norm{ \Phi'(\ketbra{a}{b})-V\ketbra{a}{b}V^*}_2^2
    \end{align*}
    Since $\Phi'$ is a $k$-junta, there exists $T\subseteq[n]$,$|T|=k$, s.t., $\Phi'=\widetilde\Phi'_T\otimes I_{T^c}$. Let $V = \sum_{x\in \ZZ_4^{T^c}}V_x\otimes \sigma_x$. Besides, because $\sum_{a,b}\norm{ V\ketbra{a}{b}V^*}_2^2=4^n$, we have:
    \begin{align*}
        V &= \argmin_{V} \sum_{a,b\in \{0,1\}^n} \norm{ \Phi'(\ketbra{a}{b})-V\ketbra{a}{b}V^*}_2^2 \\
        &= \argmax_{V}\sum_{a,b\in\{0,1\}^n} \text{Re }\lrs{\bra{b}V^*\Phi'(\ketbra{b}{a})V\ket{a}} \\
        &= \argmax_{V}\sum_{a,b\in\{0,1\}^n} \bra{b}V^*\Phi'(\ketbra{b}{a})V\ket{a} \\
        &= \argmax_{V}\sum_{a',b'\in\{0,1\}^T}\sum_{a'',b''\in\{0,1\}^{T^c}}\sum_{x,y\in \ZZ_4^{T^c}} \bra{b'}V_x^*\widetilde\Phi'(\ketbra{b'}{a'})V_y\ket{a'} \cdot \bra{b''}\sigma_x\ket{b''}\cdot \bra{a''}\sigma_y\ket{a''} \\
        &= \argmax_{V}\sum_{a',b'\in\{0,1\}^T}\sum_{x,y\in \ZZ_4^{T^c}} \bra{b'}V_x^*\widetilde\Phi'(\ketbra{b'}{a'})V_y\ket{a'} \cdot \Tr \sigma_x\cdot \Tr\sigma_y \\
        &= \argmax_{V}\sum_{a',b'\in\{0,1\}^T}\bra{b'}V_{0^{T^c}}^*\widetilde\Phi'(\ketbra{b'}{a'})V_{0^{T^c}}\ket{a'}
    \end{align*}
    where the second equality follows from
    \begin{align*}
        \norm{ \Phi'(\ketbra{a}{b})-V\ketbra{a}{b}V^*}_2^2 &= \norm{\Phi'(\ketbra{a}{b})}_2^2 + \norm{V\ketbra{a}{b}V^*}_2^2 - 2\cdot \text{Re }\lrs{\bra{b}V^*\Phi'(\ketbra{b}{a})V\ket{a}} \\
        &= \norm{\Phi'(\ketbra{a}{b})}_2^2 + \norm{\ketbra{a}{b}}_2^2 - 2\cdot \text{Re }\lrs{\bra{b}V^*\Phi'(\ketbra{b}{a})V\ket{a}}
    \end{align*}
    and the first two terms have nothing to do with $V$. The third equality is because, if $a=b$,  $\bra{b}V^*\Phi'(\ketbra{b}{a})V\ket{a}\in \RR$ and $\text{Re }\lrs{\bra{b}V^*\Phi'(\ketbra{b}{a})V\ket{a}} = \bra{b}V^*\Phi'(\ketbra{b}{a})V\ket{a}$. If $a\neq b$,
    \begin{align*}
        \sum_{a,b\in\{0,1\}^n, a\neq b} \bra{b}V^*\Phi'(\ketbra{b}{a})V\ket{a} &= \sum_{a,b\in\{0,1\}^n, a< b} \bra{b}V^*\Phi'(\ketbra{b}{a})V\ket{a} + \bra{a}V^*\Phi'(\ketbra{a}{b})V\ket{b} \\
        &= \sum_{a,b\in\{0,1\}^n, a< b} \bra{b}V^*\Phi'(\ketbra{b}{a})V\ket{a} + \overline{\bra{b}V^*\Phi'(\ketbra{b}{a})V\ket{a}} \\
        &= \sum_{a,b\in\{0,1\}^n, a< b} 2\cdot \text{Re}\lrs{\bra{b}V^*\Phi'(\ketbra{b}{a})V\ket{a}} \in \RR
    \end{align*}
    Notice that $\Tr_{T^c}V^*V=\sum_{x,y\in\ZZ_4^{T^c}}V_x^*V_y\ip{\sigma_x,\sigma_y}=2^{n-k}\sum_{x\in\ZZ_4^{T^c}}V_x^*V_x=2^{n-k}I_{T}$. By Lemma~\ref{RtUJT:border claim}, the maximum can be achieved when $V_{0^{T^c}}^*V_{0^{T^c}}=I_T$, which implies $V_x=0$ for $x\neq 0^{T^c}$. Thus, we can take $k$-junta unitary $V$ to obtain the minimum of $D(\Phi',\Phi_V)$.
\end{proof}

The proof of Lemma~\ref{lemma:red2} is similar to Lemma~\ref{lemma:lored}.

\lemmaredtwo*

\begin{proof}
	Let $U$ be a unitary matrix. It suffices to show that if $U$ is $\vep$-far from any $k$-junta unitary, then $\Phi_U$ is $\vep/2$-far from any $k$-junta channel. We firstly show that for any $k$-junta unitary $V$, $D(\Phi_U,\Phi_V)$ is large. Next, we show for any $k$-junta channel $\Phi'$, there exists $k$-junta unitary $V$ such that $D(\Phi',\Phi_V)\le D(\Phi',\Phi_U)$.

	For any $k$-junta channel $\Phi'$, let $V=\argmin_{\text{unitary }V}D(\Phi',\Phi_V)$. By Lemma~\ref{lemma:redkey}, $V$ is $k$-junta. For any unitary $U$, if $U$ is $\vep$-far from any $k$-junta unitary, then  therefore $D(\Phi_U,\Phi_V)\ge \vep$. Thus
	$$
	    D(\Phi_U,\Phi')\ge \f{1}{2}(D(\Phi_U,\Phi')+D(\Phi',\Phi_V)) \ge \f{1}{2}D(\Phi_U,\Phi_V)\ge \vep/2.
	$$
	The last inequality follows directly from Lemma~\ref{lemma:reldis}. We complete the proof.
	
\end{proof}

\section{$O(4^k/\vep^2)$ Upper Bound on Learning Quantum $k$-Junta Channels, Proof of Theorem~\ref{thm:plearner}}\label{app:learning}

Before describing the learning algorithm \algname{Junta-Channel-Learner}, we introduce a tomography algorithm from~\cite{OW17_tomo}.

\begin{fact}[Corollary 1.4 of~\cite{OW17_tomo}]
	There exists an algorithm \algname{Tomography}, which is given $O(4^k/\vep^2)$ copies of an unknown $2k$ qubit state $\rho$ and outputs the description an estimated state $\tilde{\rho}$ satisfying that $\|\rho-\tilde{\rho}\|_2\leq\vep$, with probability at least $0.99$.
\end{fact}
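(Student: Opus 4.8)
Write $d=2^{2k}=4^{k}$ for the dimension of the state and $m$ for the number of copies, so we receive $\rho^{\otimes m}$ with $\rho$ an unknown $d\times d$ density operator; fix a sorted spectrum $\alpha=(\alpha_{1}\ge\cdots\ge\alpha_{d})$ and an eigendecomposition $\rho=U\,\mathrm{diag}(\alpha)\,U^{*}$, and abbreviate $D=\mathrm{diag}(\alpha)$. The plan is to build the estimator out of a single \emph{collective} (entangled) measurement on all $m$ copies at once: this is unavoidable here, because the natural copy-by-copy schemes — Pauli or mutually-unbiased-basis estimators — only reach $O(d^{2}/\vep^{2})$ copies for Hilbert--Schmidt error $\vep$, a quadratic loss over the target. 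Following O'Donnell--Wright (and Keyl), the estimator has two ingredients: (i) a \emph{spectrum estimate} $\bar\lambda$ from weak Schur sampling, and (ii) an \emph{eigenbasis estimate} $\hat U$ from a covariant measurement on the residual register. The output is $\tilde\rho=\hat U\,\mathrm{diag}(\bar\lambda)\,\hat U^{*}$ (write $\bar D=\mathrm{diag}(\bar\lambda)$), which is automatically a valid density operator, with classical description $(\bar\lambda,\hat U)$.

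First I would set up the spectrum estimate. By Schur--Weyl duality $(\CC^{d})^{\otimes m}\cong\bigoplus_{\lambda\vdash m,\ \ell(\lambda)\le d}\mathcal{V}_{\lambda}\otimes\mathcal{W}_{\lambda}$, and measuring the label $\lambda$ on $\rho^{\otimes m}$ produces the \emph{empirical Young diagram}; its normalization $\bar\lambda=\lambda/m$ is the standard estimator of $\alpha$. The step to prove is the quantum spectrum-estimation bound $\E{\norm{\bar\lambda-\alpha}_{2}^{2}}=O(d/m)$, which follows from the RSK description of the distribution of $\lambda$ together with standard concentration estimates for the row lengths of the empirical Young diagram (each $\bar\lambda_{i}$ fluctuates around $\alpha_{i}$ on scale $\sqrt{\alpha_{i}/m}$, with the sorting bias summing to $O(d/m)$ over the $d$ rows in the regime $m\gtrsim d$ we will be in).

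Next, conditioned on the outcome $\lambda$, the residual state in $\mathcal{W}_{\lambda}$ carries the eigenbasis of $\rho$ in a way covariant under $U\mapsto U^{\otimes m}$, and Keyl's covariant POVM on $\mathcal{W}_{\lambda}$ returns a unitary $\hat U$. The reduction I would use is the triangle inequality
\begin{align*}
\norm{\tilde\rho-\rho}_{2}&\le\norm{\hat U\bar D\hat U^{*}-\hat U D\hat U^{*}}_{2}+\norm{\hat U D\hat U^{*}-U D U^{*}}_{2}\\
&=\norm{\bar\lambda-\alpha}_{2}+\norm{\hat U D\hat U^{*}-U D U^{*}}_{2},
\end{align*}
which leaves only $\E{\norm{\hat U D\hat U^{*}-U D U^{*}}_{2}^{2}}=O(d/m)$, to be obtained by a second-moment computation against Keyl's measurement. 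The crucial observation is that $\tilde\rho$ depends on $\hat U$ only through the conjugation $\hat U(\cdot)\hat U^{*}$, so mis-estimating the eigenbasis \emph{within} a near-degenerate (or exactly degenerate) eigenspace of $\rho$ costs nothing: one should bound the genuinely basis-independent quantity $\norm{\hat U D\hat U^{*}-U D U^{*}}_{2}$ and never an individual eigenvector overlap, which is what makes the estimate hold uniformly over all spectra. Combining the two parts, $\E{\norm{\tilde\rho-\rho}_{2}^{2}}\le 2\E{\norm{\bar\lambda-\alpha}_{2}^{2}}+2\E{\norm{\hat U D\hat U^{*}-U D U^{*}}_{2}^{2}}=O(d/m)$; taking $m=Cd/\vep^{2}$ for a suitable constant $C$ makes this at most $\vep^{2}/100$, so by Markov's inequality $\norm{\rho-\tilde\rho}_{2}\le\vep$ with probability at least $0.99$. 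Since $d=4^{k}$, this is $O(4^{k}/\vep^{2})$ copies, as claimed.

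The main obstacle is the eigenbasis step: proving the sharp second-moment bound $O(d/m)$ for Keyl's covariant estimator, uniformly over spectra with arbitrarily small gaps. This needs the stability reformulation above — always controlling $\norm{\hat U D\hat U^{*}-U D U^{*}}_{2}$, never a per-eigenvector fidelity — combined with careful representation-theoretic estimates on $\dim\mathcal{W}_{\lambda}$ and on the concentration of $\lambda$ around $m\alpha$. This is precisely the technical heart of Keyl's analysis and of O'Donnell--Wright, and is where essentially all of the work lies.
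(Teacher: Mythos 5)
This statement is not proved in the paper at all: it is imported verbatim as Corollary~1.4 of~\cite{OW17_tomo} and used as a black box (with $d=4^k$ the dimension of a $2k$-qubit state) inside \algname{Channel-Tomography}. So your proposal is not competing with an argument in the paper but with the cited reference itself. As an outline of that reference it is faithful and correctly structured: weak Schur sampling for the spectrum, Keyl's covariant POVM for the eigenbasis, the unitarily invariant split
$\norm{\tilde\rho-\rho}_2\le\norm{\bar\lambda-\alpha}_2+\norm{\hat U D\hat U^*-UDU^*}_2$,
the observation that one must control the conjugated operator rather than per-eigenvector overlaps so that degenerate spectra cost nothing, and Markov's inequality at the end. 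The final arithmetic ($\mathbb{E}\norm{\tilde\rho-\rho}_2^2=O(d/m)$, $m=Cd/\vep^2$, failure probability $1/100$) is correct.

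As a \emph{proof}, however, it has a genuine gap, and you have located it yourself: both quantitative inputs --- $\mathbb{E}\norm{\bar\lambda-\alpha}_2^2=O(d/m)$ for the empirical Young diagram and $\mathbb{E}\norm{\hat U D\hat U^*-UDU^*}_2^2=O(d/m)$ for Keyl's estimator --- are asserted, not derived. These two bounds \emph{are} the theorem; the second in particular occupies most of~\cite{OW17_tomo} and requires delicate representation-theoretic estimates (dimensions of the $\mathcal{W}_\lambda$, concentration of $\lambda$ around $m\alpha$, and the interaction between the two), and ``standard concentration for the row lengths'' plus ``a second-moment computation against Keyl's measurement'' does not substitute for that analysis. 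If the intent is merely to justify the paper's use of the Fact, a citation is the right move (as the paper does); if the intent is a self-contained proof, essentially all of the work remains to be done in those two lemmas.
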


Now we are ready to prove Theorem~\ref{thm:plearner}.

\thmplearner*

\begin{proof}
	Let $R$ be the subset of $[n]$, over which $\Phi$ acts non-trivially. 
    Recall that $S$ is the output of the call to Algorithm~\ref{alg:is}, 
        $\algname{Influence-Sample} $ in line 1 of 
        $\algname{Junta-Channel-Learner}(\Phi,k,\vep)$.
    It is easy to see $S\subseteq R$ with probability $1$. 
    From a similar calculation as the proof of Lemma~\ref{lemma:keyproperty},
        for each $i\in R$ with $\inf_i[\Phi]\ge \f{\vep^2}{8k}$,
        the probability that $i$ is not added to $S$ in one interation is 
        at most $\f{2}{3}\cdot \f{\vep^2}{8k}$, therefore
    $$
        \Prob{i\notin S}\le \p{1-\f{2}{3}\cdot\f{\vep^2}{8k}}^{O(k\log k/\vep^2)}
        \le \f{1}{100k}.
    $$
    Thus with probability at least $0.99$,
        for all $i\in R$ with $\inf_i[\Phi]\ge \f{\vep^2}{8k}$,
        we have $i\in S$,
    and with probability at least $0.99$, 
        $\inf_{R-S}[\Phi]\le \f{\vep^2}{8k}\cdot \abs{R}\le \f{\vep^2}{8}$.
	
	Let $\Phi=\Phi_R\otimes I^{R^c}$ and $v(\Phi)=v(\Phi_R)\otimes v(I^{R^c})$. Consider the quantum state $\psi$ returned by \algname{Quantum-State-Preparation}. We conclude that the probability that it does not output ``error'' is
	\begin{align*}
	    \Tr v(\Phi)\cdot \left(I^S\otimes \ketbra{v(I^{S^c})}\right) &= \sum_{x,y\in \Zfn} \wPhi(x,y) \Tr v(\Phi_{x_S,y_S})\cdot  \bra{v(\sigma_\0)} v(\Phi_{x_{S^c},y_{S^c}}) \ket{v(\sigma_\0)}  \\
	    &= \sum_{x\in \Zfn,x_{S^c}=\0}\wPhi(x,x) \\
	    &= \inf_S[\Phi] \ge 1-\vep^2/8.
	\end{align*}
	The step 2 of \algname{Junta-Channel-Learner} collects $t$ copies of $\psi$ in $10t$ calls to the preparation subroutine with probability at least $0.99$ for large enough $k$, since the expectation of successful collections is at least $(1-\vep^2/8)\cdot 10t\ge 8t$. We will show $\psi\otimes v(I^{R-S})$ is close to $v(\Phi_R)$.
	
    It is easy to calculate that
    \begin{align*}
        \psi &= \f{1}{\inf_S{\Phi}}I^S\otimes \bra{v(\sigma_\0)} \cdot v(\Phi) \cdot I^S\otimes \ket{v(\sigma_\0)} \\
        &= \f{1}{\inf_S{\Phi}} \sum_{x,y\in \Zfn} \wPhi(x,y)v(\Phi_{x_S,y_S})\cdot \bra{v(\sigma_\0)} \cdot v(\Phi_{x_{S^c},y_{S^c}}) \cdot \ket{v(\sigma_\0)} \\
        &=  \f{1}{\inf_S{\Phi}} \sum_{x,y\in \Zfn,x_{S^c}=y_{S^c}=\0} \wPhi(x,y)v(\Phi_{x_S,y_S})
    \end{align*}

    Let $\psi'=\inf_S[\Phi]\cdot \psi = \sum_{x,y\in \Zfn,x_{S^c}=y_{S^c}=\0} \wPhi(x,y)v(\Phi_{x_S,y_S})$. We have
    \begin{align*}
        \norm{\psi \otimes v(I^{S^c}) - v(\Phi)}_2 &\le  \norm{\psi \otimes v(I^{S^c}) - \psi'\otimes v(I^{S^c})}_2 + \norm{\psi'\otimes v(I^{S^c})-v(\Phi)}_2 \\
        &= \p{1 - \inf_S[\Phi]}\norm{\psi\otimes v(I^{S^c})}_2 + \norm{\psi'\otimes v(I^{S^c})-v(\Phi)}_2 \\
        &\le \f{\vep^2}{8} + \norm{\psi'\otimes v(I^{S^c})-v(\Phi)}_2
    \end{align*}
    and
    \begin{align*}
    	\norm{\psi'\otimes v(I^{S^c})-v(\Phi)}_2^2 &= \norm{\sum_{x,y\in \Zfn,x_{S^c}\neq \0\text{ or }y_{S^c}\neq \0} \wPhi(x,y)v(\Phi_{x,y})}_2^2 \\
    	&= \sum_{x,y\in \Zfn,x_{S^c}\neq \0\text{ or }y_{S^c}\neq \0} \abs{\wPhi(x,y)}^2 \\
    	&\le \sum_{x,y\in \Zfn,x_{S^c}\neq \0\text{ or }y_{S^c}\neq \0} \wPhi(x,x)\wPhi(y,y) \\
    	&\le 2\sum_{x,y\in \Zfn,x_{S^c}\neq \0} \wPhi(x,x)\wPhi(y,y) \\
    	&= 2\inf_{S^c}[\Phi] \le \f{\vep^2}{4}
    \end{align*}

    Therefore
    $$
        \f{1}{\sqrt{2}}\norm{\psi \otimes v(I^{S^c}) - v(\Phi)}_2\le \f{1}{\sqrt{2}}\cdot \p{\f{\vep^2}{8} + \f{\vep}{2}} \le 0.45\vep
    $$

	By the step 1 of \algname{Tomography}, we get a description of quantum state $\phi$ with probability  $0.99$ s.t. $\norm{\phi-\psi}_2\le 0.04\vep$ and $\norm{\phi\otimes v(I^{S^c})- v(\Phi)}_2/\sqrt{2}\le 0.49\vep$. After we find the closest Choi state $\phi'$ to $\phi$ in the step 2 of \algname{Tomography}, we are sure that $\norm{\phi'\otimes v(I^{S^c})- v(\Phi)}_2\le \vep$ and the returned channel is close to $\Phi$ with  distance at most $\vep$ with probability at least $9/10$.
	
	To see the query complexity, the call to \algname{Pauli-Sample} costs only $O\p{k\log k/\vep^2}$ queries to $\Phi$ and the preparation and tomography need $O\p{4^k/\vep^2}$ queries. The total queries are $O\p{4^k/\vep^2}$.
	
\end{proof}

\end{document}